\definecolor{ForestGreen}{rgb}{0.1333,0.5451,0.1333}
\definecolor{DarkRed}{rgb}{0.8,0,0}
\definecolor{Red}{rgb}{0.9,0,0}
\par\vspace{4mm}}
\renewcommand{\paragraph}{%
  \@startsection{paragraph}{4}%
  {\z@}{1ex \@plus 1ex \@minus .2ex}{-1em}%
  {\normalfont\normalsize\bfseries}%
}
\def\thmt@refnamewithcomma #1#2#3,#4,#5\@nil{%
  \@xa\def\csname\thmt@envname #1utorefname\endcsname{#3}%
  \ifcsname #2refname\endcsname
    \csname #2refname\expandafter\endcsname\expandafter{\thmt@envname}{#3}{#4}%
  \fi
}
\declaretheorem[numberwithin=section,refname={Theorem,Theorems},Refname={Theorem,Theorems}]{theorem}
\declaretheorem[numberlike=theorem,refname={Lemma,Lemmas},Refname={Lemma,Lemmas}]{lemma}
\declaretheorem[numberlike=theorem,refname={Claim, Claims},Refname={Claim, Claims}]{claim}
\declaretheorem[numberlike=theorem]{definition}
\newtheorem{invariant}[theorem]{Invariant}
\renewcommand{\phi}{\varphi}
\renewcommand{\t}{\text{slack}}
\newcommand{\s}{\text{status}}
\newcommand{\N}{\mathcal{N}}
\newcommand{\B}{\mathcal{B}}
\newcommand{\C}{\mathcal{C}}
\renewcommand{\P}{\mathcal{P}}
\newcommand{\eps}{\epsilon}
\newcommand{\dl}{\delta}
\newcommand{\dd}{D}
\newcommand{\G}{\mathcal{G}}
\newcommand{\V}{\mathcal{V}}
\newcommand{\E}{\mathcal{E}}
\def\danupon#1{\marginpar{$\leftarrow$\fbox{D}}\footnote{$\Rightarrow$~{\sf #1 --Danupon}}}
\def\babis#1{\marginpar{$\leftarrow$\fbox{B}}\footnote{$\Rightarrow$~{\sf #1 --Babis}}}
\def\sayan#1{\marginpar{$\leftarrow$\fbox{S}}\footnote{$\Rightarrow$~{\sf #1 --Sayan}}}
\def\monika#1{\marginpar{$\leftarrow$\fbox{M}}\footnote{$\Rightarrow$~{\sf #1 --Monika}}}
\def\danupon#1{}
\def\babis#1{}
\def\sayan#1{}
\def\monika#1{}
\newcommand{\shortOnly}[1]{\ifthenelse{\boolean{short}}{#1}{}}
\newcommand{\longOnly}[1]{\ifthenelse{\boolean{short}}{}{#1}}
\title{Deterministic Fully Dynamic Data Structures for Vertex Cover and Matching\thanks{An extended abstract of this paper, not containing the algorithm in Section~\ref{sec:matching:sqrtn}  and not containing the proof of Theorem~\ref{th:vc:runtime}, has been accepted for publication in ACM-SIAM Symposium on Discrete Algorithms (SODA)' 2015.}}
\author{ 
	Sayan Bhattacharya\thanks{Email: {\tt bsayan@imsc.res.in}. The Institute of Mathematical Sciences, Chennai, India. This work was done while the author was in Faculty of Computer Science, University of Vienna, Austria. The research leading to these results has received funding from the European Research Council under the European Union's Seventh Framework Programe (FP7/2007-2013) / ERC Grant Agreement number \ 340506.}
	\and Monika Henzinger\thanks{Email: {\tt monika.henzinger@univie.ac.at}. Faculty of Computer Science, University of Vienna, Austria. The research leading to these results has received funding from the European Unions Seventh Framework Programme (FP7/2007-2013) under grant agreement \ 317532 and from
the European Research Council under the European Union's Seventh Framework Programme (FP7/2007-2013) / ERC Grant Agreement number  \ 340506.}       
    \and Giuseppe F. Italiano\footnote{Email: {\tt giuseppe.italiano@uniroma2.it}.    Universit\`a di Roma "Tor Vergata", Rome, Italy.  Partially supported by
MIUR, the Italian Ministry
of Education, University and Research, under Project AMANDA
(Algorithmics for MAssive and Networked DAta).}    
}
\begin{document}


\maketitle
\pagenumbering{roman}

\begin{abstract}
	
 We present the first deterministic data structures for maintaining approximate  minimum vertex cover and maximum matching in a fully dynamic graph $G = (V,E)$, with $|V| = n$ and $|E| =m$,  in $o(\sqrt{m}\,)$ time per update. In particular, for minimum vertex cover
we provide deterministic data structures for maintaining 
 a $(2+\eps)$ approximation in $O(\log n/\eps^2)$ amortized time per update. 
 For maximum matching, we  show how to 
maintain
a $(3+\eps)$ approximation in $O(\min(\sqrt{n}/\epsilon, m^{1/3}/\eps^2))$ {\em amortized} time per update, and
a $(4+\eps)$ approximation in $O(m^{1/3}/\eps^2)$ {\em worst-case} time per update.
Our data structure for fully dynamic minimum vertex cover is essentially near-optimal and settles an open problem by Onak and Rubinfeld~\cite{OnakR10}.
\end{abstract}

\newpage
\setcounter{tocdepth}{3}
\tableofcontents

\newpage
\pagenumbering{arabic}

\newcommand{\polylog}{\text{poly}\log}

\section{Introduction}
\label{sec:introduction}

Finding maximum matchings and minimum vertex covers in undirected graphs are classical problems in combinatorial optimization.
Let   $G=(V,E)$ be an undirected graph, with $m = |E|$ edges and $n = |V|$ nodes.
A \emph{matching} in $G$ is a set of vertex-disjoint edges, i.e., no two edges share a common vertex.
A \emph{maximum matching}, also known as maximum cardinality matching, is a matching with the largest possible number of edges. A matching is \emph{maximal} if it is not a proper subset of any other matching in $G$. A subset $V'\subseteq V$ is a \emph{vertex cover} if 
each edge of $G$ is incident to at least one vertex in $V'$.
A \emph{minimum vertex cover} is a vertex cover of smallest possible size. 

The Micali-Vazirani algorithm for maximum matching runs in $O(m\sqrt{n}\,)$ time~\cite{HopcroftK71,MicaliV80}. Using this algorithm, a $(1+\eps)$-approximate maximum matching can be constructed  in $O(m/\eps)$ time~\cite{DuanP10}. Finding a minimum vertex cover, on the other hand,  is  NP-hard. Still, these two problems remain closely related as their LP-relaxations  are duals of each other. Furthermore,  a maximal matching, which can be  computed in $O(m)$ time in a greedy fashion, is known to provide a 2-approximation both to maximum matching and to minimum vertex cover (by using the endpoints of the maximal matching). 
Under the unique games conjecture, the minimum vertex cover cannot be efficiently approximated within any constant factor better than 2~\cite{KhotR08}.
Thus, under the unique games conjecture, the 2-approximation in $O(m)$ time  by the greedy method  is  the optimal guarantee for this problem.

In this paper, we consider a dynamic setting, where the input graph is being updated via a sequence of edge insertions/deletions. The goal is to design data structures that are capable of maintaining the
solution to an optimization problem faster than recomputing it from scratch after each
update. If $P \ne NP$ we cannot achieve polynomial time updates for minimum vertex cover. We also observe that achieving fast update times for maximum matching appears to be a particularly difficult task:  
in this case, an update
bound of $O(\polylog (n))$ would be a breakthrough, since it would
immediately improve the longstanding bounds of various static
algorithms~\cite{HopcroftK71,Madry13,MicaliV80,MuchaS04}. The best known update bound for dynamic maximum matching is obtained by a randomized data structure of
Sankowski~\cite{Sankowski07}, which has  
 $O(n^{1.495})$ time per update. In this scenario, if one wishes to achieve fast update times
for dynamic maximum matching or minimum vertex cover,
approximation appears to
be inevitable.
Indeed, in the last  few years there has been a growing interest in designing efficient dynamic data structures 
for maintaining approximate solutions to both these  problems.

\subsection{Previous work.}
A maximal matching can be maintained
in $O(n)$ worst-case update time   by  a trivial deterministic algorithm. Ivkovi\'{c} and Lloyd~\cite{IvkovicL93} showed how to improve this bound to $O((n + m)^{\sqrt{2}/2})$.
Onak and Rubinfeld~\cite{OnakR10} designed a randomized data structure that maintains constant factor approximations  to maximum matching and to minimum vertex cover in $O(\log^2 n)$ amortized time per update with high probability, with the approximation factors being large constants. Baswana, Gupta and Sen~\cite{BaswanaGS11} improved these bounds by showing that a maximal matching, and thus a 2-approximation of maximum matching and minimum vertex cover, can be maintained in a dynamic graph in amortized $O(\log n)$ update time with high
probability.

Subsequently, turning to deterministic data structures, Neiman
and Solomon~\cite{NeimanS13} showed that a $3/2$-approximate maximum matching  can be
maintained dynamically  in $O(\sqrt{m}\,)$ worst-case time per update. Their data structure maintains a maximal matching and thus achieves the same update bound  also for 2-approximate minimum vertex cover.
Furthermore, Gupta and Peng~\cite{GuptaP13} presented a deterministic data structure to 
maintain a $(1 + \eps)$ approximation of a maximum matching  in $O(\sqrt{m}/\eps^{2})$ worst-case time per update. We also note that Onak and Rubinfeld~\cite{OnakR10} gave a deterministic data structure that maintains an $O(\log n)$-approximate minimum vertex cover in $O(\log^2 n)$ amortized update time. See Table~\ref{table:perspective} for a summary of these results.

Very recently, Abboud and Vassilevska Williams~\cite{Abboud14} showed a conditional lower bound on the performance of any dynamic matching algorithm. There exists an integer $k \in [2,10]$ with the following property: if the dynamic algorithm maintains a matching with the property
that every augmenting path in the input graph (w.r.t.~the matching) has length at least $(2k-1)$,  then an amortized update time of $o(m^{1/3})$ for the algorithm will violate the $3$-SUM conjecture (which states that the $3$-SUM problem on $n$ numbers cannot be solved in $o(n^2)$ time).

\subsection{Our results.}
From the above discussion, it is clear that for both fully dynamic constant approximate maximum matching and minimum vertex cover, there is a huge gap between state of the art deterministic and randomized performance guarantees: the former gives $O(\sqrt{m}\,)$ update time, while the latter gives $O(\log n)$ update time. Thus, it seems natural to ask  whether the $O(\sqrt{m}\,)$ bound achieved in~\cite{GuptaP13,NeimanS13} is a natural barrier for deterministic data structures. 
In particular, in their pioneering work on these problems, Onak and Rubinfeld~\cite{OnakR10}  asked: 
\begin{itemize}
\item ``{\em Is there a deterministic data structure that achieves a constant approximation factor with polylogarithmic update time?}''
\end{itemize}

We answer  this question in the affirmative by presenting a deterministic data structure that maintains  a $(2+\eps)$-approximation of a minimum vertex cover in $O(\log n/\eps^2)$ amortized time per update. Since 
it is impossible to get better than 2-approximation for minimum vertex cover  in polynomial time, our data structure is near-optimal (under the unique games conjecture). As a by product of our approach, we can also maintain, deterministically, a $(2+\eps)$-approximate maximum {\em fractional} matching in $O(\log n/\eps^2)$ amortized update time. 
Note that 
the vertices of the fractional matching polytope of a graph are known to be half integral, i.e., they have only $\{0,1/2,1\}$ coordinates (see, e.g., \cite{LovaszP86}). This implies immediately that 
the value of any fractional matching is at most $3/2$ times the value of the maximum  integral matching. Thus, it follows that we can maintain {\em the value} of the maximum (integral) matching within a factor of $(2+\eps) \cdot (3/2) = (3+O(\eps))$, deterministically, in $O(\log n/\eps^2)$ amortized update time.

Next, we focus on the problem of maintaining an integral matching in a dynamic setting. For this problem, we  show how to 
maintain
a $(3+\eps)$-approximate maximum matching in $O(\min(\sqrt{n}/\epsilon, m^{1/3}/\eps^2))$ amortized time per update, and
a $(4+\eps)$-approximate maximum matching in $O(m^{1/3}/\eps^2)$ worst-case time per update.
Since $m^{1/3} = o(n)$, we provide the first deterministic data structures for dynamic matching  whose  update time  is  sublinear in the number of nodes.   Table~\ref{table:perspective} puts our main results in perspective  with previous work.

\begin{table*}
\begin{large}
\begin{center}
\begin{tabular}{||c||c|c|c|c||}
\hline

{\textbf{Problem}}  & {\textbf{Approximation}} & {\textbf{Update}} & {\textbf{Data}}  & {\textbf{Reference}}\\ 
         & {\textbf{Guarantee}} & {\textbf{Time}} & {\textbf{Structure}} &  \\	\hline	 
          	
\hline\hline
MM \& MVC & $O(1)$ & $O(\log^2 n)$ amortized & randomized &  \cite{OnakR10} \\
\hline
MM \& MVC  & $2$ & $O(\log n)$ amortized & randomized &  \cite{BaswanaGS11} \\
\hline
MM  & $1.5$ & $O(\sqrt{m}\,)$ worst-case & deterministic &  \cite{NeimanS13} \\
\hline
MVC  & $2$ & $O(\sqrt{m}\,)$ worst-case & deterministic &  \cite{NeimanS13} \\
\hline
MM  &$1+\eps$ & $O(\sqrt{m}/\eps^2)$ worst-case & deterministic &  \cite{GuptaP13} \\ 
\hline
\hline
MVC  & $2+\eps$ & $O(\log n/\eps^2)$ amortized & deterministic & {This paper} \\
\hline
MM  & $3+\eps$ & $O(\sqrt{n}/\eps)$ amortized & deterministic & {This paper} \\
\hline
MM  & $3+\eps$ & $O(m^{1/3}/\eps^2)$ amortized & deterministic & {This paper} \\
\hline
MM  & $4+\eps$ & $O(m^{1/3}/\eps^2)$ worst-case & deterministic & {This paper} \\
\hline
\hline
\end{tabular}
\end{center}
\end{large}
\caption{Dynamic data structures for approximate (integral) maximum matching (MM) and minimum vertex cover (MVC).}
\label{table:perspective} 
\end{table*}



\subsection{Our techniques.}
To see why it is difficult to deterministically maintain a dynamic (say maximal) matching, consider the scenario  when a matched edge incident to a node $u$ gets deleted from the graph. To recover from this deletion, we have to scan through the adjacency list of $u$ to check if it has any free neighbor $z$. This takes time proportional to the degree of $u$, which can be $O(n)$. Both the papers~\cite{BaswanaGS11,OnakR10} use randomization to circumvent this problem. Roughly speaking, the idea is to match the node $u$ to one of its free neighbors $z$ picked {\em at random}, and show that even if this step takes $O(\text{deg}(u))$ time, in expectation the newly matched edge $(u,z)$ survives  the next $\text{deg}(u)/2$ edge deletions in the graph  (assuming that the adversary is not aware of the random choices made by the data structure). This is used to bound the amortized update time.

Our key insight is that we can maintain a large {\em fractional matching} deterministically. Suppose that in this fractional matching, we pick each edge incident to $u$ to an extent of $1/\text{deg}(u)$. These edges together contribute at most one to the objective. Thus, we do not have to do anything for the next $\text{deg}(u)/2$  edge deletions  incident to $u$, as these deletions reduce the contribution of $u$ towards the objective by at most a factor of two. This gives us the desired amortized bound. 
Inspired by this observation, we  take a closer look at the framework of Onak and Rubinfeld~\cite{OnakR10}. Roughly speaking, they maintain a hierarchical partition of the set of nodes $V$ into $O(\log n)$ levels such  that the nodes in all but the lowest level, taken together, form a valid vertex cover $V^*$. In addition, they maintain a matching $M^*$  as a {\em dual certificate}. Specifically, they show that  $|V^*| \leq \lambda \cdot |M^*|$ for some  constant $\lambda$, which implies that $V^*$ is a $\lambda$-approximate minimum vertex cover. Their data structure is randomized since, as discussed above, it is particularly difficult to maintain the matching $M^*$ deterministically in a dynamic setting. To make the data structure deterministic, instead of $M^*$, we maintain a {\em fractional matching} as a dual certificate. Along the way, we improve the amortized update time of~\cite{OnakR10} from $O(\log^2 n)$ to $O(\log n/\eps^2)$, and their approximation guarantee from some large constant $\lambda$ to $2+\eps$. 

Our approach  gives near-optimal bounds for fully dynamic minimum vertex cover and, as we have already remarked,  it  maintains a {\em fractional matching}. Next, we consider the problem of maintaining an approximate maximum {\em integral matching}, for which we are able to provide deterministic data structures with improved (polynomial) update time. Towards this end, we introduce the concept of a \emph{kernel} of a graph, which we believe is of independent interest. Intuitively, a kernel is a subgraph with two important properties: (i) each node has bounded degree in the kernel, and (ii) a kernel approximately preserves the size of the maximum matching in the original graph. Our key contribution is to show that a kernel always exists, and that it can be maintained efficiently in a dynamic graph undergoing a  sequence of edge updates.

\section{Deterministic Fully Dynamic Vertex Cover}
\label{appendix:vertex-cover}

The input graph $G = (V,E)$ has $|V| = n$ nodes and zero edges in the beginning. Subsequently,   it keeps getting updated   due to the insertions of new edges and the deletions of already existing edges. The edge updates, however, occur one at a time, while the set  $V$ remains fixed. The goal is to maintain an approximate vertex cover of $G$ in this fully dynamic setting.

In Section~\ref{sec:vc:partition}, we introduce the notion of an $(\alpha,\beta)$-partition of  $G = (V,E)$. This is a hierarchical  partition of the set  $V$ into $L+1$ levels, where $L = \lceil\log_{\beta} (n/\alpha)\rceil$ and $\alpha, \beta > 1$ are two parameters (Definition~\ref{def:vc:partition}). If the $(\alpha,\beta)$-partition satisfies an additional property (Invariant~\ref{inv:vc:1}), then from it we can easily derive a $2\alpha\beta$-approximation to the minimum vertex cover (Theorem~\ref{th:vc:structure}).

 In Section~\ref{sec:vc:algo:main}, we outline a natural deterministic algorithm for maintaining such an $(\alpha,\beta)$-partition. In Section~\ref{sec:vc:datastructures}, we present the relevant data structures that are required for implementing the algorithm. We will analyze its amortized update time using an extremely fine tuned potential function:  In Section~\ref{sec:vc:updatetime}, we present a high level overview of this approach.  Section~\ref{sec:vc:algo} contains a detailed implementation of the algorithm using the data structures from Section~\ref{sec:vc:datastructures}. Finally, in Section~\ref{sec:vc:analysis}, we give a complete analysis of the amortized update time of the algorithm. Specifically, we show that for  $\alpha = 1+3\epsilon$ and $\beta = 1+\epsilon$, the algorithm takes $O\left((t/\epsilon) \log_{1+\epsilon} n\right)$ time to handle $t$ edge updates starting from an empty graph (see Theorem~\ref{th:vc:runtime}). This leads to the main result of this section, which is  summarized in the theorem below.

\begin{theorem}
\label{th:vc:main}
For every $\epsilon \in (0,1]$, we can deterministically maintain a $(2+\epsilon)$-approximate vertex cover in a fully dynamic graph, the amortized update time being $O(\log n/\epsilon^2)$.
\end{theorem}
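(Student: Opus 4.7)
The plan is to derive Theorem~\ref{th:vc:main} by composing the structural guarantee of Theorem~\ref{th:vc:structure} with the runtime bound of Theorem~\ref{th:vc:runtime}, after a simple rescaling of the accuracy parameter. First, I would instantiate the $(\alpha,\beta)$-partition framework at parameters $\alpha = 1+3\epsilon'$ and $\beta = 1+\epsilon'$, where $\epsilon' \in (0,1]$ is a parameter to be fixed below. By Theorem~\ref{th:vc:structure}, as long as Invariant~\ref{inv:vc:1} is maintained, the union of the non-bottom levels of the partition is a vertex cover whose size is within a factor $2\alpha\beta$ of the minimum.

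The approximation-ratio calculation is then a one-line algebraic bound: $2\alpha\beta = 2(1+3\epsilon')(1+\epsilon') = 2 + 8\epsilon' + 6\epsilon'^2$, which for $\epsilon' \le 1$ is at most $2 + 14\epsilon'$. Choosing $\epsilon' = \epsilon/14$ therefore gives the claimed $(2+\epsilon)$-approximation. I would also record that this rescaling only blows up $1/\epsilon'$ by a constant factor over $1/\epsilon$, so it will not affect the order of the runtime bound.

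Next, I would invoke Theorem~\ref{th:vc:runtime}, which states that starting from an empty graph the algorithm processes $t$ edge updates in total time $O\bigl((t/\epsilon')\log_{1+\epsilon'} n\bigr)$. Using the standard estimate $\ln(1+x) = \Theta(x)$ for $x \in (0,1]$, we have $\log_{1+\epsilon'} n = \Theta(\log n / \epsilon')$, so the total cost becomes $O\bigl(t \log n / \epsilon'^2\bigr) = O\bigl(t \log n / \epsilon^2\bigr)$. Dividing by $t$ yields the stated amortized update time of $O(\log n / \epsilon^2)$.

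The genuine work — maintaining the partition under edge insertions and deletions while keeping both Invariant~\ref{inv:vc:1} and the fractional-matching dual certificate intact, and amortizing the cost of level changes — is already carried out in Theorem~\ref{th:vc:structure} and especially Theorem~\ref{th:vc:runtime}. Consequently the proof of Theorem~\ref{th:vc:main} itself is essentially bookkeeping: verify that the two cited theorems apply to the same algorithm at the chosen parameters, and combine their conclusions. The only point that requires a modicum of care is ensuring that the rescaling $\epsilon' \mapsto \epsilon/14$ is compatible with the hypotheses of Theorem~\ref{th:vc:runtime} (in particular, that $\epsilon'$ is allowed to range over $(0,1]$), which is immediate from the statement of that theorem.
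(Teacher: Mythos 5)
Your proof is correct and follows essentially the same route as the paper: both combine Theorem~\ref{th:vc:structure} (the $2\alpha\beta$ approximation guarantee) with Theorem~\ref{th:vc:runtime} (the $O((t/\epsilon)\log_{1+\epsilon} n)$ runtime) at $\alpha=1+3\epsilon'$, $\beta=1+\epsilon'$, observe $2\alpha\beta \le 2+14\epsilon'$, rescale $\epsilon' = \epsilon/14$, and use $\log_{1+\epsilon'} n = \Theta(\log n/\epsilon')$. The paper states this combination only informally in the overview of Section~\ref{sec:vc:updatetime}; your write-up makes the bookkeeping explicit.
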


\subsection{The ($\alpha,\beta$)-partition and its properties.}
\label{sec:vc:partition}

\begin{definition}
\label{def:vc:partition}
An {\em $(\alpha,\beta)$-partition} of the graph $G$ partitions its node-set $V$ into  subsets  $V_0 \ldots V_L$, where $L = \lceil\log_{\beta} (n/\alpha)\rceil$ and $\alpha, \beta > 1$. For $i \in \{0, \ldots, L\}$, we identify the subset $V_i$ as the $i^{th}$ {\em level} of this partition, and denote the {\em level of a node} $v$ by $\ell(v)$. Thus, we have $v \in V_{\ell(v)}$ for all $v \in V$.  Furthermore, the partition assigns a weight $w(u,v) = \beta^{-\max(\ell(u),\ell(v))}$ to every edge $(u,v) \in V$.
\end{definition}
  
Define $\N_v$  to be the set of neighbors  of a node  $v \in V$. Given an $(\alpha,\beta)$-partition,
 let $\N_v(i) \subseteq \N_v$ denote the set of neighbors of $v$ that are in the $i^{th}$ level, and
 let $\N_v(i,j) \subseteq \N_v$ denote the set of neighbors of $v$ whose levels are in the range $[i,j]$.
\begin{equation}
\N_v =   \{ u \in V : (u,v) \in E\} \ \ \forall v \in V. \label{eq:symbol:1} 
\end{equation}
\begin{equation}
\N_v(i)  =   \{ u \in \N_v \cap V_i\} \ \ \forall v \in V; i \in \{0,\ldots,L\} \label{eq:symbol:2} 
\end{equation}
\begin{equation}
\N_v(i,j)  =   \bigcup_{k = i}^j \N_v(k) \ \ \forall v \in V; i,j \in \{0,\ldots,L\}, i \leq j. \label{eq:symbol:3} 
\end{equation}
 
Similarly,  define the notations $\dd_v$, $\dd_v(i)$ and $\dd_v(i,j)$. Note that $\dd_v$ is the degree of a node $v \in V$.
\begin{equation}
\dd_v   =   |\N_v|  \label{eq:symbol:4} 
\end{equation}
\begin{equation}
 \dd_v(i)   =   |\N_v(i)|  \label{eq:symbol:5} 
 \end{equation}
\begin{equation}
 \dd_v(i,j)  =  |\N_v(i,j)|  \label{eq:symbol:6}
 \end{equation}

Given an $(\alpha,\beta)$-partition, let $W_v = \sum_{u \in \N_v} w(u,v)$ denote the total weight a node $v \in V$ receives from the edges incident to it.  We also define the notation $W_v(i)$. It gives the total weight the node $v$ would receive from the edges incident to it, {\em if the node  $v$ itself were to go to the $i^{th}$ level}. Thus, we have $W_v = W_v(\ell(v))$. Since the weight of an edge $(u,v)$ in the hierarchical partition is given by $w(u,v) = \beta^{-\max(\ell(u),\ell(v))}$, we derive the following equations for all nodes $v \in V$.
\begin{equation}
W_v =  \sum_{u \in \N_v} \beta^{-\max(\ell(u),\ell(v))}. \label{eq:symbol:7} 
\end{equation}
\begin{equation}
W_v(i)  =   \sum_{u \in \N_v}  \beta^{-\max(\ell(u),i)} \ \ \forall i \in \{0,\ldots,L\}. \label{eq:symbol:8} 
\end{equation}

\begin{lemma}
\label{lm:partition}
Every $(\alpha,\beta)$-partition of the graph $G$ satisfies the following conditions for all nodes $v \in V$.
\begin{equation}
W_v(L) \leq \alpha \label{eq:lm:partition:1} 
\end{equation}
\begin{equation}
W_v(L)  \leq  \cdots  \leq W_v(i)  \leq \cdots \leq W_v(0)   \label{eq:lm:partition:2} 
\end{equation}
\begin{equation}
W_v(i) \leq \beta \cdot W_v(i+1)  \ \ \forall   i \in \{0,\ldots,L-1\}. \label{eq:lm:partition:3}
\end{equation}
\end{lemma}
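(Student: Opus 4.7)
My plan is to prove each of the three inequalities directly from the definitions of $W_v(i)$ and $L$ given in equation~(\ref{eq:symbol:8}) and Definition~\ref{def:vc:partition}, with no further structural assumption on the partition.

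First I would establish (\ref{eq:lm:partition:1}). By definition, $W_v(L) = \sum_{u \in \N_v} \beta^{-\max(\ell(u), L)}$. Since $\ell(u) \leq L$ for every node $u$, we have $\max(\ell(u), L) = L$, and therefore $W_v(L) = \dd_v \cdot \beta^{-L}$. Because $L = \lceil \log_\beta(n/\alpha) \rceil \geq \log_\beta(n/\alpha)$, we get $\beta^{-L} \leq \alpha/n$, and combined with the trivial bound $\dd_v \leq n$ this yields $W_v(L) \leq \alpha$.

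Next I would verify the monotonicity chain (\ref{eq:lm:partition:2}) termwise: for each neighbor $u \in \N_v$, the quantity $\max(\ell(u), i)$ is nondecreasing in $i$, so $\beta^{-\max(\ell(u), i)}$ is nonincreasing in $i$ (using $\beta > 1$). Summing over $u \in \N_v$ gives $W_v(0) \geq W_v(1) \geq \cdots \geq W_v(L)$.

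Finally, for (\ref{eq:lm:partition:3}), I would again argue termwise: for each $u \in \N_v$, one checks the simple inequality $\max(\ell(u), i+1) \leq \max(\ell(u), i) + 1$ (case $\ell(u) \leq i$: LHS is $i+1$, RHS is $i+1$; case $\ell(u) \geq i+1$: both sides equal $\ell(u)$ or $\ell(u)+1$). Hence $\beta^{-\max(\ell(u), i+1)} \geq \beta^{-1} \cdot \beta^{-\max(\ell(u), i)}$, and summing gives $W_v(i+1) \geq W_v(i)/\beta$, i.e., $W_v(i) \leq \beta \cdot W_v(i+1)$. There is no real obstacle here — the entire lemma is essentially a sanity check that the weight function $w(u,v) = \beta^{-\max(\ell(u), \ell(v))}$ and the choice of $L$ together imply the three quantitative properties needed to use the partition later in the section. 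The only mild point to get right is the termwise comparison for monotonicity and the additive-one bound for the ratio property.
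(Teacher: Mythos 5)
Your proof is correct and follows essentially the same route as the paper's: the first bound uses $\dd_v \leq n$ together with $L \geq \log_\beta(n/\alpha)$, and the other two inequalities are proved by the same termwise comparison of the summands $\beta^{-\max(\ell(u),i)}$. The only cosmetic difference is that for~(\ref{eq:lm:partition:3}) the paper factors out a $\beta$ and writes $\beta^{-1-\max(\ell(u),i)} \leq \beta^{-\max(\ell(u),i+1)}$, while you phrase the equivalent fact as $\max(\ell(u),i+1) \leq \max(\ell(u),i)+1$.
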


\begin{proof}
Fix any $(\alpha,\beta)$-partition and any node $v \in V$.  We prove the first part of the lemma as follows.
\begin{eqnarray*}
W_v(L) = \sum_{u \in \N_v}  \beta^{-\max(\ell(u),L)} \\
 = \sum_{u \in \N_v}  \beta^{-L} \leq n \cdot \beta^{-L} \leq n \cdot \beta^{-\log_{\beta}(n/\alpha)} = \alpha.
\end{eqnarray*}

We now fix any level $i \in \{0,\ldots, L-1\}$ and show that the $(\alpha,\beta)$-partition satisfies equation~\ref{eq:lm:partition:2}.
\begin{eqnarray*}
W_v(i+1) = \sum_{u \in \N_v}  \beta^{-\max(\ell(u),i+1)} \\
\leq \sum_{u \in \N_v}  \beta^{-\max(\ell(u),i)} = W_v(i).
\end{eqnarray*}

Finally, we prove equation~\ref{eq:lm:partition:3}.
\begin{eqnarray*}
W_v(i) = \sum_{u \in \N_v}  \beta^{-\max(\ell(u),i)} = \beta \cdot  \sum_{u \in \N_v}  \beta^{-1-\max(\ell(u),i)} \\
\leq  \beta \cdot  \sum_{u \in \N_v}  \beta^{-\max(\ell(u),i+1)} = \beta \cdot W_v(i+1)
\end{eqnarray*}
\end{proof}

Fix any node $v \in V$, and focus on the value of $W_v(i)$ as we go down from the highest level $i = L$ to the lowest level $i = 0$.  Lemma~\ref{lm:partition} states that  $W_v(i) \leq \alpha$ when $i = L$,  that $W_v(i)$ keeps increasing  as we go down the levels one after another,  and that $W_v(i)$ increases by at most a  factor of $\beta$ between consecutive levels. 

\medskip

We will maintain a specific type of $(\alpha,\beta)$-partition,  where each node is assigned to a level in  a way that satisfies Invariant~\ref{inv:vc:1}.

\begin{invariant}
\label{inv:vc:1}
For every node $v \in V$, if $\ell(v) = 0$, then $W_v \leq \alpha \cdot \beta$. Else if $\ell(v) \geq 1$, then $W_v \in [1, \alpha \beta]$.
\end{invariant}

Consider any $(\alpha,\beta)$-partition satisfying Invariant~\ref{inv:vc:1}. Let $v \in V$ be a node in this partition that is at level $\ell(v) = k \in \{0, \ldots, L\}$. It follows that $\sum_{u \in \N_v(0,k)} w(u,v) = |\N_v(0,k)| \cdot \beta^{-k} \leq W_v \leq \alpha \beta$. Thus, we infer that $|\N_v(0,k)| \leq \alpha \beta^{k+1}$. In other words, Invariant~\ref{inv:vc:1} gives an upper bound on the number of neighbors a node $v$ can have that lie on or below $\ell(v)$. We will crucially use this property in the analysis of our algorithm.


\begin{theorem}
\label{th:vc:structure}
Consider an $(\alpha,\beta)$-partition of the graph $G$ that satisfies Invariant~\ref{inv:vc:1}. Let $V^* = \{ v \in V : W_v \geq 1\}$  be the set of nodes with weight at least one. The set $V^*$ is a feasible vertex cover in $G$. Further,  the size of the set $V^*$ is at most $2\alpha\beta$-times the size of the minimum-cardinality vertex cover in $G$. 
\end{theorem}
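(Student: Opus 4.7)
The plan is to dispose of the two claims separately: first verify that $V^{*}$ is a vertex cover via a simple case split on the levels of the endpoints of an edge, and then bound $|V^{*}|$ by reinterpreting the edge weights as a fractional matching and invoking LP duality.

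For feasibility, I would fix an arbitrary edge $(u,v)\in E$ and argue by cases on $\max(\ell(u),\ell(v))$. If $\max(\ell(u),\ell(v)) \geq 1$, then the higher endpoint, say $u$, has $\ell(u)\geq 1$, so by Invariant~\ref{inv:vc:1} we get $W_u \geq 1$, hence $u\in V^{*}$. Otherwise both endpoints lie at level $0$, in which case $w(u,v) = \beta^{-\max(0,0)} = 1$, and since $W_u \geq w(u,v) = 1$ we again have $u \in V^{*}$ (and similarly $v\in V^{*}$). Either way the edge is covered, so $V^{*}$ is a valid vertex cover.

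For the size bound, the idea is to use the edge weights to produce a feasible fractional matching and then apply weak LP duality. Define $x(u,v) = w(u,v)/(\alpha\beta)$ for every edge $(u,v)\in E$. By Invariant~\ref{inv:vc:1}, for every node $v$ the total weight $W_v \leq \alpha\beta$, so $\sum_{u\in\N_v} x(u,v) = W_v/(\alpha\beta) \leq 1$. Thus $x$ is a feasible fractional matching. Letting $\mathrm{OPT}$ denote the size of a minimum (integral) vertex cover, weak LP duality between fractional matching and (fractional) vertex cover gives
\[
\sum_{(u,v)\in E} x(u,v) \;\leq\; \mathrm{OPT},
\qquad\text{i.e.,}\qquad
\sum_{(u,v)\in E} w(u,v) \;\leq\; \alpha\beta\cdot\mathrm{OPT}.
\]

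Finally I would connect $|V^{*}|$ to the total edge weight. By definition of $V^{*}$, each $v\in V^{*}$ contributes at least $1$ to $\sum_{v\in V^{*}} W_v$, so $|V^{*}| \leq \sum_{v\in V^{*}} W_v \leq \sum_{v\in V} W_v = 2\sum_{(u,v)\in E} w(u,v) \leq 2\alpha\beta\cdot\mathrm{OPT}$, which is the claimed bound. The only mildly delicate point is the level-$0$ case of feasibility, where we must notice that an edge between two level-$0$ endpoints automatically has weight exactly $1$; everything else is a clean LP-duality calculation.
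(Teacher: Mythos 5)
Your proof is correct and follows essentially the same route as the paper: the feasibility argument uses the same observation (level-$0$ endpoints force edge weight $1$; a level-$\geq 1$ endpoint has $W\geq 1$ by Invariant~\ref{inv:vc:1}, yours stated directly rather than by contradiction), and the size bound constructs the identical fractional matching $x(u,v)=w(u,v)/(\alpha\beta)$ and applies the same LP-duality argument the paper invokes.
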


\begin{proof}
Consider any edge $(u,v) \in E$. We claim that at least one of its endpoints belong to the set $V^*$. Suppose that the claim is false and we have $W_u < 1$ and $W_v < 1$.  If this is the case, then Invariant~\ref{inv:vc:1} implies that $\ell(u) = \ell(v) = 0$ and $w(u,v) = \beta^{-\max(\ell(u),\ell(v))} = 1$. Since $W_u \geq w(u,v)$ and $W_v \geq w(u,v)$, we get $W_u \geq 1$ and $W_v \geq 1$, and this leads  to a contradiction. Thus, we infer that the set  $V^*$ is a feasible vertex cover in the graph $G$.

Next, we construct a {\em fractional matching} $M_f$ by picking every edge $(u,v) \in E$ to an extent of $x(u,v) = w(u,v)/(\alpha\beta) \in [0,1]$. Since for all nodes $v \in V$, we have $\sum_{u \in \N_v} x(u,v)  = \sum_{u \in \N_v} w(u,v)/(\alpha\beta) = W_v/(\alpha\beta) \leq 1$, we infer that  $M_f$ is a valid fractional matching in $G$. The size of this matching is given by $|M_f| = \sum_{(u,v) \in E} x(u,v) = (1/(\alpha \beta)) \cdot \sum_{(u,v) \in E} w(u,v)$. We now bound the size of  $V^*$ in terms of $|M_f|$.
\begin{eqnarray*}
|V^*|  =  \sum_{v \in V^*} 1   \leq  \sum_{v \in V^*} W_v  =  \sum_{v \in V^*} \sum_{u \in \N_v} w(u,v) \\
 \leq  \sum_{v \in V} \sum_{u \in \N_v} w(u,v)  =   2 \cdot \sum_{(u,v) \in E} w(u,v) =  (2\alpha \beta) \cdot |M_f|
\end{eqnarray*}
The approximation guarantee now follows from the LP duality between minimum fractional vertex cover and maximum fractional matching.
\end{proof}

\paragraph{Query time.}
We store the nodes $v$ with $W_v \geq 1$ as a separate list. Thus, we can report the set of nodes in the vertex cover in $O(1)$ time per node. Using appropriate pointers, we can report in $O(1)$ time whether or not a given node is part of this vertex cover. In $O(1)$ time we can also report the size of the vertex cover.

\subsection{Handling the insertion/deletion of an edge.}
\label{sec:vc:algo:main}

A node is called {\em dirty} if it violates Invariant~\ref{inv:vc:1}, and {\em clean} otherwise. Since the graph $G = (V,E)$ is initially empty,  every node is clean and at level zero before the first update in $G$. Now consider the time instant just prior to the $t^{th}$ update in $G$. By induction hypothesis, at this instant every node is clean. Then  the $t^{th}$ update takes place, which  inserts (resp. deletes)  an edge $(x,y)$ in $G$ with weight $w(x,y) = \beta^{-\max(\ell(x),\ell(y))}$. This increases (resp. decreases) the weights $W_x, W_y$ by $w(x,y)$. Due to this change, the nodes $x$ and $y$ might become dirty.  To recover from this, we call the subroutine in Figure~\ref{fig:vc:dirty:main}.

\begin{figure}[htbp]
\centerline{\framebox{
\begin{minipage}{5.5in}
\begin{tabbing}
01.   \=  {\sc While} there exists a dirty node  $v$ \\
02.  \>  \ \ \ \  \= {\sc If} $W_v > \alpha \beta$, {\sc Then} \\
\> \> \qquad  // {\em If true, then by equation~\ref{eq:lm:partition:1}  $\ell(v) < L$.} \\
03.  \> \> \ \ \ \ \ \ \ \ \ \= Increment the level of $v$  \\
\> \> \> by setting $\ell(v) \leftarrow \ell(v)+1$. \\
04.  \> \> {\sc Else if} ($W_v < 1$ and $\ell(v) > 0$), {\sc Then} \\
05.  \>  \> \> Decrement the level of $v$ \\
\> \> \> by setting $\ell(v) \leftarrow \ell(v)-1$. 
\end{tabbing}
\end{minipage}
}}
\caption{\label{fig:vc:dirty:main} RECOVER().}
\end{figure}

Consider any node $v \in V$ and suppose that $W_v = W_v(\ell(v)) > \alpha \beta$. In this event, equation~\ref{eq:lm:partition:1} implies that $W_v(L) < W_v(\ell(v))$ and hence we have $L > \ell(v)$. In other words, when the procedure described in Figure~\ref{fig:vc:dirty:main} decides to increment the level of a dirty node $v$ (Step 03), we know for sure that the current level of $v$ is strictly less than $L$ (the highest level in the $(\alpha,\beta)$-partition).

Next, consider a node $z \in \N_v$. If we change $\ell(v)$, then this may change the weight $w(v,z)$,  and this in turn may change the weight $W_z$. Thus, a single iteration of the {\sc While} loop in Figure~\ref{fig:vc:dirty:main} may lead to some clean nodes becoming dirty, and some other dirty nodes becoming clean.   If and when  the {\sc While} loop terminates, however, we are guaranteed that every node is clean and that Invariant~\ref{inv:vc:1} holds.

\paragraph{Comparison with the framework of Onak and Rubinfeld~\cite{OnakR10}.} As  described below, there are two significant differences between our framework and that of~\cite{OnakR10}. Consequently, many of the technical details of our approach (illustrated in Section~\ref{sec:vc:analysis})
differ from the proof in~\cite{OnakR10}.

First, in the hierarchical partition of~\cite{OnakR10}, the invariant for a node $y$  consists of $O(L)$ constraints: for each level $i \in \{\ell(y), \ldots, L\}$, the quantity $|\N_y(0,i)|$ has to lie within a certain range. This is the main reason for their  amortized update time being $\Theta(\log^2 n)$. Indeed when a node $y$ becomes dirty, unlike in our setting, they have to spend $\Theta(\log n)$ time just to figure out the new level of $y$. 

Second, along with the hierarchical partition, the authors in~\cite{OnakR10} maintain a matching as a {\em dual certificate}, and show that the size of this matching is within a constant factor of the size of their vertex cover. As pointed out in Section~\ref{sec:introduction}, this is the part where they crucially need to use randomization, as till date there is no deterministic data structure for maintaining a large matching  in polylog amortized update time. We bypass this barrier  by implicitly maintaining a {\em fractional matching} as a dual certificate. Indeed, the weight $w(y,z)$ of an edge $(y,z)$ in our hierarchical partition, after suitable scaling, equals the fractional extent by which the edge $(y,z)$ is included in our fractional matching. 

\subsection{Data structures.}
\label{sec:vc:datastructures}

We now describe the  data structures that we use to implement   the algorithm outlined in Section~\ref{sec:vc:algo:main}

\begin{itemize}
\item We maintain the following data structures for each node $v \in V$.
\begin{itemize}
\item A counter $\text{{\sc Level}}[v]$ to keep track of the current level of $v$. Thus, we set $\text{{\sc Level}}[v] \leftarrow \ell(v)$.
\item A counter $\text{{\sc Weight}}[v]$ to keep track of the weight of  $v$. Thus, we set $\text{{\sc Weight}}[v] \leftarrow W_v$.
\item For every level $i > \text{{\sc Level}}[v]$,  the set of nodes $\N_v(i)$ as a doubly linked list $\text{{\sc Neighbors}}_v[i]$.  For every level $i \leq \text{{\sc Level}}[v]$, the  list $\text{{\sc Neighbors}}_v[i]$ is empty.
\item For level $i = \text{{\sc Level}}[v]$, the set of nodes $\N_v(0,i)$   as a doubly linked list $\text{{\sc Neighbors}}_v[0,i]$.  For every level $i \neq \text{{\sc Level}}[v]$, the list $\text{{\sc Neighbors}}_v[0,i]$ is empty.
\end{itemize}
\item When the  graph $G$ gets updated due to an edge insertion/deletion, we may discover that a node violates Invariant~\ref{inv:vc:1}. Such a node is called {\em dirty}, and we store  the set of such nodes as a doubly linked list $\text{{\sc Dirty-nodes}}$. For every node $v \in V$, we maintain a bit $\text{{\sc Status}}[v] \in \{\text{dirty}, \text{clean}\}$ that indicates if the node is dirty or not. Every dirty node stores a pointer to its position in the list $\text{{\sc Dirty-nodes}}$. 
\item The phrase {\em ``neighborhood lists of $v$''}  refers to the set  $\bigcup_{i=0}^L \left\{ \text{\sc Neighbors}_v[0,i], \text{{\sc Neighbors}}_v[i]\right\}$.
 For every edge $(u,v)$, we maintain two bidirectional pointers:  one links the edge to the position of $v$ in the neighborhood lists of $u$, while the other  links the edge  to the position of $u$ in the neighborhood lists of $v$. 
\end{itemize}

\subsection{Bounding the amortized update time: An overview} 
\label{sec:vc:updatetime}
In Section~\ref{sec:vc:algo}, we present a detailed implementation of our algorithm using the data structures described in Section~\ref{sec:vc:datastructures}.  In Section~\ref{sec:vc:analysis}, we  prove that for any $\epsilon \in [0,1]$,  $\alpha = 1+3\eps$ and $\beta = 1+\eps$,  it takes $O(t \log n/\eps^2)$ time to handle $t$ edge insertions/deletions in $G$ starting from an empty graph. This gives an amortized update time of $O(\log n/\eps^2)$, and by Theorem~\ref{th:vc:structure}, a $(2+14\eps)$-approximation to the minimum vertex cover in $G$. The proof works as follows. First, we note that after an edge insertion or deletion
 the data structure can be updated in time $O(1)$ plus the time to adjust the levels of the nodes, i.e., the time for procedure
RECOVER (see Figure~\ref{fig:vc:dirty:main}). To bound the latter we show that it takes time $\Theta( 1 + \dd_v(0,i))$, when node $v$ changes from level $i$ to level $i+1$ or level $i-1$, and prove the bound on the total time spent in procedure RECOVER using a potential function based argument.

As the formal analysis is quite involved, some high level intuitions are in order. Accordingly, in this section, we describe the main idea behind  a (slightly) simplified variant of the above  argument,  which gives an amortized  bound on  {\em the number of times we have to change the weight of an already existing edge}. 
This number is $\dd_v(0,i)$, when node $v$ changes from level $i$ to level $i+1$ 
and $\dd_v(0,i-1)$, when node $v$ changes from level $i$ to  level $i-1$.\footnote{The proof actually shows a stronger result assuming that
the level change of node $v$  from $i$ to $i-1$ causes  $\Theta(\dd_v(0,i))$ many edges to change their level.}

To simplify the exposition even further, we assume that $\alpha, \beta$ are sufficiently large constants, and describe the main idea behind the proof of Theorem~\ref{th:runtime}. This  implies that  on average we  change the weights of $O(L/\eps) = O(\log n/\eps^2)$ edges per update in $G$, for some sufficiently large constants $\alpha, \beta$.

\begin{theorem}
\label{th:runtime}
Fix two sufficiently large constants $\alpha, \beta$. In the beginning, when  $G$ is an empty graph,   initialize  a counter $\text{\sc Count} \leftarrow 0$. Subsequently, each time we change the weight of an already existing edge in the hierarchical partition, set $\text{{\sc Count}} \leftarrow \text{{\sc Count}} + 1$.  Then $\text{{\sc Count}} = O(t L/\eps)$ just  after we  handle  the $t^{th}$ update in $G$.
\end{theorem}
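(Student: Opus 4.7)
My plan is to prove Theorem~\ref{th:runtime} by a potential-function-based amortized analysis. The first step is attribution: an edge insertion or deletion directly changes the weight of only the inserted/deleted edge, contributing $O(1)$ to \textsc{Count}. All remaining contributions come from the level changes performed inside \textsc{Recover}. A short case analysis of the formula $w(u,v)=\beta^{-\max(\ell(u),\ell(v))}$ shows that moving $v$ from level $i$ to $i+1$ changes exactly the $\dd_v(0,i)$ edges to $\N_v(0,i)$ (from $\beta^{-i}$ to $\beta^{-(i+1)}$), while moving $v$ from $i$ to $i-1$ changes exactly the $\dd_v(0,i-1)$ edges to $\N_v(0,i-1)$. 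Hence it suffices to bound $\sum_{\text{level-up}}\dd_v(0,i) + \sum_{\text{level-down}}\dd_v(0,i-1)$ by $O(tL/\eps)$.

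The quantitative observation that drives the amortization is a slack statement between two consecutive level changes of the same node. By Lemma~\ref{lm:partition}(\ref{eq:lm:partition:3}), just after a move-up to level $i$ one has $W_v \ge \alpha$, and just after a move-down to level $i$ one has $W_v \le \beta = 1+\eps$. To re-trigger a move-up, $W_v$ must exceed $\alpha\beta$, a growth of at least $\alpha(\beta-1)=\alpha\eps$; to re-trigger a move-down, it must fall below $1$, a drop of at least $\eps$. On the other hand, a single edge update shifts any $W_u$ by at most $\beta^{-\ell(u)}$, while Invariant~\ref{inv:vc:1} together with the fact that $v$ was clean right before the trigger forces $\dd_v(0,\ell(v))=O(\alpha\beta^{\ell(v)+1})$ at trigger time. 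Dividing, the ratio of work per re-trigger to edge updates absorbed is $O(1/\eps)$ per level of the hierarchy, and summed across the $L$ levels this yields the target $O(L/\eps)$ amortized budget per edge update.

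I would encode this slack into a potential $\Phi=\sum_v \phi(v)$, where $\phi(v)$ is piecewise linear in $W_v$ with two one-sided components -- one tracking proximity to the upper trigger and the other proximity to the lower trigger -- each carrying a level-dependent prefactor that scales like $\beta^{\ell(v)}$ and a leading constant $\Theta(1/\eps)$. The three properties to verify are: (a) $\Phi=0$ initially and $\Phi \ge 0$ always, which is immediate; (b) every edge insertion or deletion changes $\Phi$ by at most $O(L/\eps)$, since it alters only $W_u$ and $W_v$, and the prefactor in $\phi$ cancels against $\beta^{-\max(\ell(u),\ell(v))}$; (c) every move-up of $v$ from $i$ to $i+1$ (respectively, move-down from $i$ to $i-1$) satisfies $\Delta\Phi \le -\dd_v(0,i)$ (respectively, $\Delta\Phi \le -\dd_v(0,i-1)$), which one checks by combining the trigger condition ($W_v(i)>\alpha\beta$ for up or $W_v(i)<1$ for down), the exact shift of $W_v$ during the level change ($\eps\beta^{-(i+1)}\dd_v(0,i)$ for up), and the factor-$\beta$ change in the level-dependent prefactor of $\phi(v)$. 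Given (a)--(c), the standard telescoping
\[\textsc{Count} \;\le\; \sum_{\text{ops}}\bigl(\text{actual cost} + \Delta\Phi\bigr) - \bigl(\Phi_t - \Phi_0\bigr) \;\le\; t\cdot O(L/\eps) + 0 \;=\; O(tL/\eps)\]
closes the proof.

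The main obstacle is in the verification of (c), which must also account for the side effect of a level change of $v$ on $\phi(u)$ for those neighbors $u\in\N_v(0,\cdot)$ whose edge weight to $v$ just changed: their $W_u$ shifts by $\eps\beta^{-\max(\ell(u),\ell(v))}$, and this may push the corresponding one-sided term of $\phi(u)$ upward. The net change of $\Phi$ during a single level change still has to be at most $-\dd_v(0,\cdot)$, which forces a delicate tuning of the two leading constants and of the thresholds at $1$ and $\beta$ in the one-sided terms of $\phi$. Two further delicacies are the asymmetric boundary at level $0$ (where move-down is forbidden and the corresponding one-sided term of $\phi$ vanishes) and the handling of \textsc{Recover} cascades, where a single invocation moves $v$ through several levels in a row; the latter is handled by telescoping the per-step analysis. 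This is the place where the "fine-tuned" potential function mentioned in the introduction genuinely earns its name, and I expect the bulk of the technical work to live there.
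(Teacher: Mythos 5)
Your overall framing (attribute \textsc{Count} to level changes, amortize via a potential, exploit the slack between the thresholds $1$ and $\alpha\beta$) matches the paper, but your potential lives only on the nodes, and that is exactly where the argument breaks. The paper's proof stores potential on the \emph{edges} as well: each edge $e$ carries $\Theta(L-\ell(e))$ tokens, funded with $O(L)$ at insertion time, and it is these edge tokens -- not any node term -- that pay for move-ups. When $v$ rises from level $i$ to $i+1$, each of the $\dd_v(0,i)$ edges whose weight changes releases $O(1)$ of its own stored potential, paying both for the work on that edge and for the $O(1)$ increase of the other endpoint's node term. The node potential in the paper is deliberately one-sided, $\propto \beta^{\ell(v)+1}\max(0,\alpha-W_v)$, tracking only the lower trigger; move-downs are the only operations it must finance, and there the factor-$\beta$ change of the level prefactor works in your favor because the level \emph{decreases} (the down endpoint also replenishes the edge tokens of the edges it drags down).

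Your symmetric ``upper-trigger'' node term cannot replace the edge potential. To pay a move-up at level $i$ out of the drop of $W_v$ (which is exactly $\eps\beta^{-(i+1)}\dd_v(0,i)$), that term must have slope $\Theta(\beta^{i+1}/\eps)$ in $W_v$; this is forced, because $\dd_v(0,i)$ can be as large as $\approx\alpha\beta^{i+1}$ while $W_v$ only barely exceeds $\alpha\beta$. But now consider a move-up in which almost all of $W_v$ comes from neighbors \emph{above} level $i$, so that $\dd_v(0,i)$ is $0$ or $1$: the weight barely changes, $W_v\ge\alpha$ still holds after the move, and the prefactor jumps from $\beta^{i+1}$ to $\beta^{i+2}$, so your upper term \emph{increases} by $\Theta(\beta^{i+1})$, while the only compensating decreases (the neighbors' terms) total $O(\dd_v(0,i))$. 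Since $\beta^{i+1}$ can be far larger than $L/\eps$, a single such step already violates ``cost $+\,\Delta\Phi\le$ deposit'', so your property (c) cannot hold for any potential that is a function of $(\ell(v),W_v)$ alone with the slope needed for the expensive move-ups. This is precisely the point your ``delicate tuning'' would have to fix, and it cannot be fixed within the family you describe; the missing ingredient is the per-edge potential $\Theta(L-\ell(e))$, which is the heart of the paper's proof of this theorem.
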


Define the level of an  edge $(y,z)$ to be $\ell(y,z) = \max(\ell(y),\ell(z))$, and note that the weight  $w(y,z)$ decreases (resp. increases) iff the edge's level $\ell(y,z)$ goes up (resp. down). 
There is a potential associated with both nodes and edges. Note that we use the terms ``tokens'' and ``potential'' interchangeably.

Each edge $e$ has exactly $2 (L - \ell(e))$ tokens. These tokens are assigned as follows.
Whenever a new edge is inserted, it receives  $2 (L - \ell(e))$ tokens. When $e$ moves up a level, it gives one token to each endpoint. Whenever 
$e$  is deleted, it gives one token to each endpoint. Whenever $e$ moves down a level because one endpoint, say $v$, moves down a level, $e$ receives two tokens from $v$.

Initially and whenever a node moves a level higher, it has no tokens. 
{\em Whenever a node $v$ moves up a level}, only its adjacent edges to the same or lower levels have to be updated as their level changes. Recall that each such edge gives 1 token to $v$, which in turn uses this token to pay for updating the edge.
{\em Whenever a node $v$ moves down a level}, say from $k$ to $k-1$, it has at most $\beta^k$ adjacent edges at level $k$ or below. These are all the edges whose level needs to be updated (which costs a token) and whose potential needs to be increased by two tokens. In this case, we show that $v$ has  a number $W_v$ of tokens and $W_v$ is large enough (i)  to pay for the work involved in the update, (ii) to give two tokens to each of the at most $\beta^k$ adjacent edges, {\em and} (iii)
to still have a sufficient number of tokens for being on level $k-1$.

{\em Whenever the level of a node $v$ is not modified but its weight $W_v$  decreases}  because the weight of the adjacent edge $(u,v)$ decreases, the level of $(u,v)$ must have increased and $(u,v)$ gives one token to $v$ (the other one goes to $u$). Note that this implies that a change in $W_v$ by at most $\beta^{-\ell(v)}$ increases the potential of $v$ by 1, i.e., the ``conversion rate'' between weight changes and token changes is
$\beta^{\ell(v)}$.
 {\em Whenever the level of $v$ does not change but its weight $W_v$  increases} as the level of $(u,v)$ has decreased, no tokens are transferred between 
$v$ and $(u,v)$. (Technically the potential of $v$ might fall slightly but the change might be so small that we ignore it.) Formally we achieve
these potential function changes by setting the potential of every node in $V_0$ to 0 and for every
other node to $\beta^{\ell(v)} \cdot \max(0, \alpha - W_v)$.

Thus, the crucial claim is that a node $v$ that moves down to  level $k-1$ has accumulated a sufficient number $W_v$ of tokens,
i.e.,  $W_v \geq  3\beta^k + 
\beta^{k-1} \max(0, \alpha - W_v(k-1)) = X$ (say). We prove this claim by considering two possible cases.

\medskip
\noindent {\em Case 1:} Assume first that $v$ was at level $k-1$ immediately
before being at level $k$.
Recall that, by the definition of the potential function,
 $v$ had {\em no} tokens when it moved up from level $k-1$. However, in this case we know that $W_v$ was least $\alpha \beta$ when $v$ moved up and, thus, after adjusting the weights of its adjacent edges to the level change, $W_v$ was still at least $\alpha$ after the level change to level $k$. Node $v$ only drops to level $k-1$ if $W_v<1$, i.e.,
while being on level $k$ its weight must have dropped by at least $\alpha - 1$. By the above ``conversion rate''  between weight and tokens this means that $v$ must have received at least $ \beta^k (\alpha-1)$ tokens from its adjacent edges while it was on level $k$, which is at least $X$  for large enough $\alpha$. 

\medskip
\noindent {\em Case 2:} Assume next that  node $v$ was at level $k+1$ immediately before level $k$. Right after dropping from  level
$k+1$  node $v$ owned  $\beta^{k} (\alpha - W_v(k))$  tokens. As $v$ has not changed levels since, it did not have to give any tokens to edges
and did not have to pay for any updates of its adjacent edges. Instead it might have received some tokens from inserted or deleted adjacent edges.
Thus, it still owns at least $\beta^{k} (\alpha - W_v(k))$  tokens. As $W_v(k) \le W_v(k-1)$ and $W_v(k) < 1$ when $v$ drops to
level $k-1$, this number of tokens is at least $X$ for $\beta \ge 2$ and $\alpha \ge 3 \beta + 1$.

\medskip
To summarize, whenever an edge is inserted it receives a sufficient number of tokens to pay the cost of future upwards level changes, but also to give a token to its endpoints every time its level increases. These tokens accumulated at the  endpoints are sufficient to pay for level decreases of these endpoints because (a) nodes move up to a level when their weight on the new level is at least $\alpha > 1$ but only  move down when their weight falls below 1 and (b) the weight of edges on the same and lower levels drops by a factor of $\beta$ between two adjacent levels. Thus
$ \beta^{k}(\alpha-1)$ many edge deletions or edge weight decreases of edges adjacent to node $v$
are necessary to cause $v$ to drop from level $k$ to level $k-1$ (each giving one token to $v$), while there are only $\beta^{k-1}$ many edges on levels
below $k$ that need to be updated when $v$ drops. Thus, the cost of $v$'s level drop is $\beta^{k-1}$ and the new potential needed for $v$ on level $k-1$  is
$ \beta^{k-1} (\alpha - 1)$, but $v$ has collected at least $ \beta^{k}(\alpha-1)$ tokens, which, by suitable choice of
$\beta$ and $\alpha$, is sufficient.

\subsection{Implementation details}
\label{sec:vc:algo}

The pseudo-codes for our algorithm, which maintains an $(\alpha,\beta)$-partition that satisfies Invariant~\ref{inv:vc:1}, are given  in Figures~\ref{fig:vc:insert}-\ref{fig:vc:update-down}.

\begin{figure}[htbp]
\centerline{\framebox{
\begin{minipage}{5.5in}
\begin{tabbing}
01. \ \ \ \ \ \ \ \=  $i \leftarrow \text{{\sc Level}}[u']$, \ \   $j \leftarrow \text{{\sc Level}}[v']$. \\
02. \> Call the subroutine UPDATE-LISTS-INSERT($u',v'$). See Figure~\ref{fig:vc:update-insert}. \\
03. \>   $\text{{\sc Weight}}[u'] \leftarrow \text{{\sc Weight}}[u'] + \beta^{-\max(i,j)}$. \\
04. \>   $\text{{\sc Weight}}[v'] \leftarrow \text{{\sc Weight}}[v'] + \beta^{-\max(i,j)}$. \\
05. \> Call the subroutine UPDATE-STATUS($u'$). See Figure~\ref{fig:vc:update-status}.\\
06. \>  Call the subroutine UPDATE-STATUS($v'$). See Figure~\ref{fig:vc:update-status}. \\ 
07. \> {\sc While} the list $\text{{\sc Dirty-nodes}}$ is nonempty: \\
08. \> \ \ \ \ \ \ \ \ \= Let $v$ be the first node in the list $\text{{\sc Dirty-nodes}}$. \\
09. \> \> Call the subroutine FIX($v$). See Figure~\ref{fig:vc:dirty}.
\end{tabbing}
\end{minipage}
}}
\caption{\label{fig:vc:insert} INSERT-EDGE($u',v'$). It updates the data structures upon insertion of  the edge $(u,v)$.}
\end{figure}

\begin{figure}[htbp]
\centerline{\framebox{
\begin{minipage}{5.5in}
\begin{tabbing}
01. \ \ \ \ \ \ \ \=  $i \leftarrow \text{{\sc Level}}[u']$, \ \    $j \leftarrow \text{{\sc Level}}[v']$. \\
02. \> Call the subroutine UPDATE-LISTS-DELETE($u',v')$. See Figure~\ref{fig:vc:update-delete}. \\
03. \>   $\text{{\sc Weight}}[u'] \leftarrow \text{{\sc Weight}}[u'] - \beta^{-\max(i,j)}$. \\
04. \>   $\text{{\sc Weight}}[v'] \leftarrow \text{{\sc Weight}}[v'] - \beta^{-\max(i,j)}$. \\
05. \>  Call the subroutine UPDATE-STATUS($u'$). See Figure~\ref{fig:vc:update-status}. \\
06. \>  Call the subroutine UPDATE-STATUS($v'$).  See Figure~\ref{fig:vc:update-status}. \\ 
07. \> {\sc While} the list $\text{{\sc Dirty-nodes}}$ is nonempty: \\
08. \> \ \ \ \ \ \ \ \ \= Let $v$ be the first node  in the list $\text{{\sc Dirty-nodes}}$.\\
09. \> \> Call the subroutine FIX($v$).  See Figure~\ref{fig:vc:dirty}.
\end{tabbing}
\end{minipage}
}}
\caption{\label{fig:vc:delete} DELETE-EDGE($u',v'$). It updates the data structures upon deletion of the edge $(u,v)$.}
\end{figure}

\paragraph{Initializing the data structures.} In the beginning, the graph $G$ has zero edges, all of its nodes are at level zero, and no node is dirty. At that moment, we ensure that our data structures reflect these conditions.

\paragraph{Handling the insertion of an edge.} When an edge $(u',v')$ is inserted into the graph, we implement the procedure in Figure~\ref{fig:vc:insert}.  In Step 02,   we call the subroutine UPDATE-LISTS-INSERT($u',v'$) to update  the neighborhood lists of the nodes $u'$ and $v'$.  In Steps 03-04, we update the weights of $u'$ and $v'$. In Step 05, we call the subroutine UPDATE-STATUS($u'$), which checks if the new weight of $u'$ satisfies Invariant~\ref{inv:vc:1}, and accordingly, it updates the bit $\text{{\sc Status}}[u']$ and the occurrence of $u'$ in the list $\text{{\sc Dirty-nodes}}$. In Step 06, we perform exactly the same operations on the node $v'$. 

Every node satisfies Invariant~\ref{inv:vc:1} at the beginning of the procedure. However, at the end of Step 06, one or both of the nodes in $\{u',v'\}$ can become dirty. Thus, we run the {\sc While} loop in Steps 07-09 till every node in $G$ becomes clean again. In one iteration of the {\sc While} loop, we pick any dirty node $v$ and call the subroutine FIX($v$), which either increments or decrements the level of $v$ depending on its weight (Figure~\ref{fig:vc:dirty}). The execution of FIX($v$)  can result in more nodes becoming dirty. So the {\sc While} loop can potentially have a large number of iterations. To simply  the analysis, we charge the runtime of each iteration  to the corresponding call to  FIX(.). We separately bound the total time taken by all the calls to FIX(.) in Section~\ref{subsec:analyze:FIX}.

\begin{lemma}
\label{lm:runtime:insert}
Ignoring the time taken by the calls to  FIX(.), an edge-insertion can be handled in $\Theta(1)$ time.
\end{lemma}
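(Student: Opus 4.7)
The plan is to walk through each of the six non-excluded lines of INSERT-EDGE in Figure~\ref{fig:vc:insert} and argue that each takes $O(1)$ time; the cost of the {\sc While} loop in Steps 07--09 is deferred to FIX(.) and is not counted here.

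First, Steps 01, 03, and 04 are trivially $O(1)$ because each only reads or writes a constant number of scalar counters, namely the two levels $\text{{\sc Level}}[u'], \text{{\sc Level}}[v']$ and the two weights $\text{{\sc Weight}}[u'], \text{{\sc Weight}}[v']$.

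Next, I would verify that UPDATE-LISTS-INSERT in Step 02 is $O(1)$. With $i = \text{{\sc Level}}[u']$ and $j = \text{{\sc Level}}[v']$, the data-structure description in Section~\ref{sec:vc:datastructures} dictates exactly one list per endpoint into which the edge must be placed: at $u'$ it goes into $\text{{\sc Neighbors}}_{u'}[j]$ if $j > i$, and into $\text{{\sc Neighbors}}_{u'}[0,i]$ if $j \leq i$, and symmetrically at $v'$. Since each such list is doubly linked, a head-insertion takes $O(1)$ time, and the two bidirectional pointers linking the edge to its positions in these two lists are set up in $O(1)$ as well. The point that needs attention is that we never have to touch any list other than these two; this follows directly from the fact that the edge is new and therefore absent from all other neighborhood lists.

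Finally, UPDATE-STATUS in Steps 05 and 06 only needs to compare the updated weight $\text{{\sc Weight}}[v]$ against the two thresholds of Invariant~\ref{inv:vc:1} (namely $1$ and $\alpha\beta$, with the relevant test depending on whether $\text{{\sc Level}}[v] = 0$), which is $O(1)$. If the comparison indicates that $\text{{\sc Status}}[v]$ must flip, we update the bit and either push $v$ onto the head of the doubly linked list $\text{{\sc Dirty-nodes}}$ or splice $v$ out of that list using the pointer stored at $v$; either operation is $O(1)$. Summing the six bounds yields $\Theta(1)$ time overall, which is what the lemma claims.
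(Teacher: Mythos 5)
Your proof is correct and takes the same approach the paper implicitly relies on: the paper states Lemma~\ref{lm:runtime:insert} without proof, simply deferring the {\sc While} loop to the FIX(.)\ analysis, whereas you spell out why each of Steps~01--06 costs $O(1)$ (constant scalar reads/writes, a constant number of doubly-linked-list insertions and pointer updates in UPDATE-LISTS-INSERT, and a constant number of comparisons plus one $O(1)$ list-splice in UPDATE-STATUS). The details you supply are consistent with the pseudocode in Figures~\ref{fig:vc:insert}, \ref{fig:vc:update-insert}, and \ref{fig:vc:update-status} and with the data structures of Section~\ref{sec:vc:datastructures}.
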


\paragraph{Handling the deletion of an edge.}  When an edge $(u',v')$ is deleted from the graph, we implement the procedure in Figure~\ref{fig:vc:delete}, which is very similar to one described above. It first makes all the weights and the neighborhood lists reflect the deletion of the edge. This might lead to one or both the endpoints $\{u',v'\}$ becoming dirty, in which case the procedure keeps on  calling the subroutine FIX(.) till every node becomes clean again. As before, we charge the runtime of the {\sc While} loop in Steps 07-09 (Figure~\ref{fig:vc:delete}) to the respective calls to FIX(.). We  analyze the total runtime of all the calls to FIX(.) in Section~\ref{subsec:analyze:FIX}.

\begin{lemma}
\label{lm:runtime:delete}
Ignoring the time taken by the calls to  FIX(.), an edge-deletion can be handled in $\Theta(1)$ time.
\end{lemma}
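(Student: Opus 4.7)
The plan is to walk down the pseudocode of DELETE-EDGE in Figure~\ref{fig:vc:delete} line by line and bound the cost of each step, except for the calls to FIX($\cdot$) in Steps 07--09 (which are explicitly excluded from the lemma and are charged separately in Section~\ref{subsec:analyze:FIX}). Line~01 is an $O(1)$ lookup since each node stores $\text{{\sc Level}}[\cdot]$ explicitly as a counter. Lines~03--04 are two arithmetic updates of the counters $\text{{\sc Weight}}[u']$ and $\text{{\sc Weight}}[v']$ by the fixed value $\beta^{-\max(i,j)}$, so each costs $O(1)$. It therefore remains to show that the three subroutine calls UPDATE-LISTS-DELETE($u',v'$), UPDATE-STATUS($u'$), and UPDATE-STATUS($v'$) each run in constant time.

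For UPDATE-LISTS-DELETE($u',v'$), I would invoke the bidirectional pointers maintained with every edge: one links $(u',v')$ to the position of $v'$ inside the appropriate neighborhood list of $u'$ (which is $\text{{\sc Neighbors}}_{u'}[j]$ if $j>i$ and $\text{{\sc Neighbors}}_{u'}[0,i]$ otherwise), and the symmetric pointer links $(u',v')$ to the position of $u'$ inside the corresponding neighborhood list of $v'$. Because all neighborhood lists are stored as doubly linked lists, dereferencing each pointer and splicing out the corresponding list cell is an $O(1)$ operation. Hence the whole call performs only two $O(1)$ list removals and runs in $\Theta(1)$ time.

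For UPDATE-STATUS($v$), the routine needs to decide whether the newly updated value of $W_v$, together with the current $\ell(v)$, now violates Invariant~\ref{inv:vc:1}, and then keep the bit $\text{{\sc Status}}[v]$ and the membership of $v$ in $\text{{\sc Dirty-nodes}}$ consistent with that decision. The test itself is at most two numerical comparisons of $\text{{\sc Weight}}[v]$ against the constants $1$ and $\alpha\beta$, plus reading $\text{{\sc Level}}[v]$ to distinguish the case $\ell(v)=0$; this is $O(1)$. If the status bit flips, then because every dirty node stores a pointer to its own position inside the doubly linked list $\text{{\sc Dirty-nodes}}$, the corresponding insertion or deletion in that list is also $O(1)$.

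Putting these pieces together, the total work outside the FIX($\cdot$) invocations is the sum of a constant number of counter updates, two pointer-guided splices inside doubly linked neighborhood lists, and two constant-time status adjustments, which is $\Theta(1)$ overall. There is no real obstacle in this argument; the only point that requires care is to insist that UPDATE-LISTS-DELETE never scans a neighborhood list but always uses the stored edge-to-list-cell pointers, so that no hidden $\Omega(\deg)$ cost is introduced. This mirrors the analysis for INSERT-EDGE in Lemma~\ref{lm:runtime:insert} and yields the claim.
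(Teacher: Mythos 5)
Your proof is correct and follows the same route the paper intends: the paper does not give an explicit proof of this lemma, treating it as a direct consequence of the data-structure design (counters for levels and weights, bidirectional edge-to-list-cell pointers for the neighborhood lists, and per-node pointers into the \textsc{Dirty-nodes} list), and your line-by-line walk through \textsc{Delete-Edge} together with \textsc{Update-Lists-Delete} and \textsc{Update-Status} is exactly the intended justification.
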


\begin{figure}[htbp]
\centerline{\framebox{
\begin{minipage}{5.5in}
\begin{tabbing}
01. \ \ \  \ \   \= {\sc If} $\text{{\sc Weight}}[v] > \alpha \beta$, {\sc Then} \qquad  // {\em If true, then $\text{{\sc Level}}[v] < L$ by Lemma~\ref{lm:partition}.} \\
02. \> \ \  \ \ \ \ \= Call the subroutine MOVE-UP($v$). See Figure~\ref{fig:vc:move-up}.  \\ \\
03. \> {\sc Else if} $\text{{\sc Weight}}[v] < 1$ and $\text{{\sc Level}}[v] > 0$, {\sc Then} \\
04. \> \> Call the subroutine MOVE-DOWN($v$). See Figure~\ref{fig:vc:move-down}.
\end{tabbing}
\end{minipage}
}}
\caption{\label{fig:vc:dirty} FIX($v$). } 
\end{figure}

\paragraph{The subroutine FIX($v$).} The procedure is described in Figure~\ref{fig:vc:dirty}. It is called only if the node $v$ is dirty. This can happen only if either (a) $W_v > \alpha \beta$, or (b) $W_v < 1$ and $\ell(v) > 0$. In the former case,  $W_v$ needs to be reduced to bring it down to the range $[1,\alpha\beta]$. So we call the subroutine MOVE-UP($v$) which increments the level of $v$ by one unit. In the latter case,  $W_v$ needs to be raised to bring it up to the range $[1,\alpha\beta]$. So we call the subroutine MOVE-DOWN($v$) which decrements the level of $v$ by one unit.

\begin{figure}[htbp]
\centerline{\framebox{
\begin{minipage}{5.5in}
\begin{tabbing}
01. \ \ \ \ \ \ \  \ \= $k \leftarrow \text{{\sc Level}}[v]$. \\
02. \> $\text{{\sc Level}}[v] \leftarrow k+1$. \\
03. \>  {\sc While} the list $\text{{\sc Neighbors}}_v[0,k]$ is nonempty: \\
04. \> \ \ \ \ \ \ \ \= Let $u$ be a node that appears in $\text{{\sc Neighbors}}_v[0,k]$. \\
05. \> \>  Call the subroutine UPDATE-LISTS-UP($u,v,k$). See Figure~\ref{fig:vc:update-up}. \\
06. \> \> $\text{{\sc Weight}}[v] \leftarrow \text{{\sc Weight}}[v]  - \beta^{-k} + \beta^{-(k+1)}$. \\
07. \> \> $\text{{\sc Weight}}[u]  \leftarrow \text{{\sc Weight}}[u] - \beta^{-k} + \beta^{-(k+1)}$. \\
08. \>  \> Call the subroutine UPDATE-STATUS($u$). See Figure~\ref{fig:vc:update-status}. \\  
09. \>   In constant time, add all the nodes $u \in \text{{\sc Neighbors}}_v[k+1]$ \\
 \>   to the list $\text{{\sc  Neighbors}}_v[0,k+1]$ by adjusting the relevant pointers, \\
\>  and convert $\text{{\sc Neighbors}}_v[k+1]$ into an empty list. \\ 
10. \> Call the subroutine UPDATE-STATUS($v$). See Figure~\ref{fig:vc:update-status}.
\end{tabbing}
\end{minipage}
}}
\caption{\label{fig:vc:move-up} MOVE-UP($v$). It increments the level of the node $v$ by one.} 
\end{figure}

\paragraph{The subroutine MOVE-UP($v$).} See Figure~\ref{fig:vc:move-up}. Let $k$ be the level of the node $v$ before the call to MOVE-UP($v$). The procedure moves the node $v$ up to level $k+1$. This transition only affects the weights of those edges incident to $v$ whose other endpoints are in $\N_v(0,k)$. In addition, $\N_v(0,k)$ is precisely the set of nodes that should update the occurrences of $v$ in their neighborhood lists. Further, the node $v$ should move all these nodes from the list $\text{{\sc Neighbors}}_v[0,k]$ to the list $\text{{\sc Neighbors}}_v[0,k+1]$. Finally, we need to check if any of the nodes in $\N_v(0,k)$ changes its status from dirty to clean (or the other way round) due to this transition, and accordingly we need to update the list $\text{{\sc Dirty-nodes}}$.  All these operations are performed by the {\sc While} loop in Steps 03-08 (Figure~\ref{fig:vc:move-up}) that runs for $\Theta(1+D_v(0,k))$ time.\footnote{We write $\Theta(1+D_v(0,k))$ instead of $\Theta(D_v(0,k))$ as $D_v(0,k)$ can be zero.}

The nodes in $\N_v(k+1)$ should   be moved from the list $\text{{\sc Neighbors}}_v[k+1]$ to the list $\text{{\sc Neighbors}}_v[0,k+1]$. Unlike the nodes in $\N_v(0,k)$, however, the nodes in $\N_v(k+1)$ need not themselves update the  occurrences  of $v$ in their neighborhood lists. Due to this reason, we can make the necessary changes with regard to these nodes in constant time, as described in Step~09 (see Figure~\ref{fig:vc:move-up}). Finally, in Step~10 we check if the node $v$ satisfies Invariant~\ref{inv:vc:1} after these transformations. If yes, then we remove the node  from the list {\sc Dirty-nodes}. If no, then the node's level needs to be increased further. It remains dirty and is handled in the next call to FIX($v$).  The runtime of  MOVE-UP($v$) is summarized  in the following lemma.

\begin{lemma}
\label{lm:runtime:move-up}
Let $k$ be the level of the node $v$ just prior to a call to MOVE-UP($v$). Then   MOVE-UP($v$) runs for $\Theta(1+D_v(0,k))$ time.
\end{lemma}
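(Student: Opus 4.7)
The plan is to walk through the pseudocode of MOVE-UP($v$) in Figure~\ref{fig:vc:move-up} line by line and charge each operation either to a constant amount of work or to an element of the list $\text{{\sc Neighbors}}_v[0,k]$. The upper and lower bounds $\Theta(1 + D_v(0,k))$ will then follow from the fact that this list has size exactly $|\N_v(0,k)| = D_v(0,k)$ by the invariant maintained on the neighborhood data structures in Section~\ref{sec:vc:datastructures}.

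First, Steps 01--02 simply read and write the counter $\text{{\sc Level}}[v]$, which costs $O(1)$. The main contribution is the {\sc While} loop in Steps 03--08, which performs exactly one iteration per element of $\text{{\sc Neighbors}}_v[0,k]$, i.e., $D_v(0,k)$ iterations in total. I would argue that each iteration costs $O(1)$: the call UPDATE-LISTS-UP($u,v,k$) follows the bidirectional pointer maintained between the edge $(u,v)$ and its position in the neighborhood lists of $u$ to splice that occurrence into the correct list in constant time (as described in Section~\ref{sec:vc:datastructures}); the weight updates in Steps 06--07 are single arithmetic operations on $\text{{\sc Weight}}[v]$ and $\text{{\sc Weight}}[u]$; and UPDATE-STATUS($u$) only compares the updated $\text{{\sc Weight}}[u]$ against the thresholds of Invariant~\ref{inv:vc:1} and, if needed, toggles the bit $\text{{\sc Status}}[u]$ and inserts or removes $u$ from the doubly linked list $\text{{\sc Dirty-nodes}}$ using the pointer that each dirty node stores to its position in that list. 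Hence the loop contributes $\Theta(D_v(0,k))$ total time, plus $O(1)$ overhead for termination check when the list is empty.

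Next, Step 09 is explicitly designed to run in $O(1)$: since the entire list $\text{{\sc Neighbors}}_v[k+1]$ has to be dumped wholesale into $\text{{\sc Neighbors}}_v[0,k+1]$ and then emptied, we do this by pointer splicing on the doubly linked lists, without touching the individual neighbors. Crucially, no neighbor at level $k+1$ or higher needs to update its own copy of $v$'s level in that iteration, since the weight $w(u,v) = \beta^{-\max(\ell(u),\ell(v))}$ of an edge $(u,v)$ with $\ell(u) \ge k+1$ is unaffected by $v$ moving from level $k$ to level $k+1$. Finally, Step 10 is one more call to UPDATE-STATUS($v$), which is $O(1)$ by the same argument as above.

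Summing all the contributions gives $O(1) + \Theta(1 + D_v(0,k)) + O(1) = \Theta(1 + D_v(0,k))$, and the lower bound is automatic since the {\sc While} loop alone touches $D_v(0,k)$ distinct list elements. The only subtle point, and the one I would spell out most carefully, is that every list operation needed inside the {\sc While} loop (updating $u$'s neighborhood lists, toggling $u$'s status, moving $u$ in/out of $\text{{\sc Dirty-nodes}}$) is genuinely $O(1)$; this is precisely what the bidirectional edge-to-list pointers and the status pointers introduced in Section~\ref{sec:vc:datastructures} were set up to guarantee.
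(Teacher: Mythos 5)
Your proof is correct and matches the paper's reasoning, which appears not as a formal proof but as the descriptive paragraph preceding Lemma~\ref{lm:runtime:move-up}: the {\sc While} loop runs for $\Theta(1+D_v(0,k))$ iterations of $O(1)$ work each (by the bidirectional edge-to-list pointers and the dirty-node pointer), Step~09 is a constant-time list splice because nodes in $\N_v(k+1)$ need not touch their own lists, and Step~10 is one more $O(1)$ call. You spell out the constant-time claims in more detail than the paper does, but the decomposition and the justification are the same.
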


\begin{figure}[htbp]
\centerline{\framebox{
\begin{minipage}{5.5in}
\begin{tabbing}
01. \ \ \ \ \ \ \ \  \ \=  $k \leftarrow \text{{\sc Level}}[v]$. \\
02. \> $\text{{\sc Level}}[v] \leftarrow \text{{\sc Level}}[v] - 1$. \\
03. \>  {\sc While} the list $\text{{\sc Neighbors}}_v[0,k]$ is nonempty: \\
04. \> \  \ \ \ \ \ \ \ \= Let $u$ be a node that appears in $\text{{\sc Neighbors}}_v[0,k]$. \\
05. \> \> Call the subroutine UPDATE-LISTS-DOWN($u,v,k$). See Figure~\ref{fig:vc:update-down}. \\ 
06. \> \> {\sc If} $\text{\sc Level}[u] < k$, {\sc Then} \\
07. \> \>  \ \ \ \ \ \ \ \  \ \= $\text{{\sc Weight}}[u] \leftarrow \text{{\sc Weight}}[u] - \beta^{-k} + \beta^{-(k-1)}$. \\
08. \> \> \> $\text{{\sc Weight}}[v] \leftarrow \text{{\sc Weight}}[v] - \beta^{-k} + \beta^{-(k-1)}$. \\
09. \> \> \> Call the subroutine UPDATE-STATUS($u$). See Figure~\ref{fig:vc:update-status}. \\
10. \> Call the subroutine UPDATE-STATUS($v$). See Figure~\ref{fig:vc:update-status}.
\end{tabbing}
\end{minipage}
}}
\caption{\label{fig:vc:move-down} MOVE-DOWN($v$). It decrements the level of the node $v$ by one.} 
\end{figure}

\paragraph{The subroutine MOVE-DOWN($v$).} See Figure~\ref{fig:vc:move-down}. Let $k$ be the level of  $v$ before the call to MOVE-DOWN($v$). The procedure moves the node $v$ down to level $k-1$. This transition only affects the weights of those edges incident to $v$ whose other endpoints are in $\N_v(0,k-1)$. In addition, $\N_v(0,k-1)$ is precisely the set of nodes that should update the  occurrences  of $v$ in their neighborhood lists and that might have to change their statuses from dirty to clean (or the other way round). Finally,   the node $v$ should move  every node in $\N_v(0,k)$ from the list $\text{{\sc Neighbors}}_v[0,k]$ to the list $\text{{\sc Neighbors}}_v[0,k-1]$. All these operations are performed by the {\sc While} loop in Steps 03-09 (see Figure~\ref{fig:vc:move-up}) that runs for $\Theta(1+D_v(0,k))$ time.

In Step~10 (see Figure~\ref{fig:vc:move-down}) we check if the node $v$ satisfies Invariant~\ref{inv:vc:1} under the new circumstances. If yes, then we remove the node  from the list {\sc Dirty-nodes}. If no, then the node's level needs to be decreased further. It remains dirty and is handled in the next call to FIX($v$). 

\begin{lemma}
\label{lm:runtime:move-down}
Let $k$ be the level of the node $v$ just prior to a call to MOVE-DOWN($v$). Then the call to the subroutine MOVE-DOWN($v$) takes $\Theta(1+D_v(0,k))$ time.
\end{lemma}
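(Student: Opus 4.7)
The plan is to argue that each constant-time step outside the main \textsc{While} loop contributes $\Theta(1)$, and that the \textsc{While} loop iterates exactly $D_v(0,k)$ times with $O(1)$ work per iteration. By the data-structure invariants stated in Section~\ref{sec:vc:datastructures}, just before the call the counter $\text{{\sc Level}}[v]$ equals $k$, so the list $\text{{\sc Neighbors}}_v[0,k]$ stores exactly the set $\N_v(0,k)$ and therefore has cardinality $D_v(0,k)$. Steps 01, 02, and 10 of MOVE-DOWN each do $O(1)$ work (reading a counter, decrementing it, and one call to UPDATE-STATUS).

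For the \textsc{While} loop, I would show that a single iteration (Steps 03--09) costs $O(1)$. Picking an arbitrary node $u$ from the head of $\text{{\sc Neighbors}}_v[0,k]$ takes $O(1)$ by maintaining the list with head pointers. The call UPDATE-LISTS-DOWN$(u,v,k)$ must (i) move $u$ from $\text{{\sc Neighbors}}_v[0,k]$ into the appropriate list at level $k-1$ for $v$, and (ii) move $v$ between the neighborhood lists of $u$, which can only be one of $\text{{\sc Neighbors}}_u[k] \to \text{{\sc Neighbors}}_u[k-1]$ or $\text{{\sc Neighbors}}_u[k]\to \text{{\sc Neighbors}}_u[0,k-1]$ (depending on whether $\text{{\sc Level}}[u] = k-1$). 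Because for every edge we keep bidirectional pointers to its occurrences inside the neighborhood lists of both endpoints, each of these splice-and-insert operations on doubly linked lists is $O(1)$. The remaining weight updates in Steps 06--08 are arithmetic on counters, and UPDATE-STATUS$(u)$ is $O(1)$ by inspection (it flips a bit and possibly splices $u$ in or out of $\text{{\sc Dirty-nodes}}$ using a stored pointer).

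Summing over all iterations gives a contribution of $\Theta(D_v(0,k))$ from the loop, plus $\Theta(1)$ from the surrounding work, for a total of $\Theta(1+D_v(0,k))$. I would also briefly note the matching lower bound: the \textsc{While} loop must process every node in $\text{{\sc Neighbors}}_v[0,k]$ in order to empty this list, so the runtime is also $\Omega(D_v(0,k))$, which combined with the trivial $\Omega(1)$ lower bound for Steps 01, 02, 10 gives the $\Omega(1+D_v(0,k))$ side of the $\Theta(\cdot)$ bound.

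The only genuinely non-trivial point to verify is that UPDATE-LISTS-DOWN truly runs in $O(1)$ per neighbor; everything else is immediate from the counters and linked-list representation. This in turn relies on the bidirectional edge-to-list-position pointers maintained in Section~\ref{sec:vc:datastructures}, which let us locate the occurrence of $v$ inside $u$'s neighborhood lists without scanning and move it to the correct new list in constant time. Once that $O(1)$-per-edge claim is granted, the lemma follows by direct accounting.
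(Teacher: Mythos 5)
Your proof is correct and takes essentially the same approach as the paper, which itself only sketches the argument in the prose preceding the lemma (the \textsc{While} loop processes $\N_v(0,k)$ one node at a time, each iteration costs $O(1)$ thanks to the doubly-linked lists and the per-edge bidirectional pointers, and the surrounding steps are $O(1)$). A very minor imprecision: you describe UPDATE-LISTS-DOWN as always moving $v$ within $u$'s neighborhood lists, but in the case $\text{{\sc Level}}[u]=k$ no such move happens ($u$ is simply transferred from $\text{{\sc Neighbors}}_v[0,k]$ to $\text{{\sc Neighbors}}_v[k]$); this omission only strengthens your $O(1)$-per-iteration claim, so the lemma stands as argued.
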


\begin{figure}[htbp]
\centerline{\framebox{
\begin{minipage}{5.5in}
\begin{tabbing}
01. \ \ \ \ \    \= {\sc If} $1 \leq \text{{\sc Weight}}[u] \leq \alpha \beta$, {\sc Then} \\
02. \> \ \ \ \ \ \ \ \= {\sc If} $\text{{\sc Status}}[u] = \text{dirty}$, {\sc Then} \\
03. \> \>  \ \ \ \ \ \ \ \= Remove the node $u$ from the list {\sc Dirty-nodes}. \\
04. \> \> \> $\text{{\sc Status}}[u] \leftarrow \text{clean}$. \\
05. \> {\sc Else if} $\text{{\sc Weight}}[u] > \alpha \beta$, {\sc Then} \\
06. \> \> {\sc If} $\text{{\sc Status}}[u] = \text{clean}$, {\sc Then} \\
07. \> \> \> Insert the node into the list $\text{{\sc Dirty-nodes}}$. \\
08. \> \> \> $\text{{\sc Status}}[u] \leftarrow \text{dirty}$. \\
09.  \> {\sc Else if} $\text{{\sc Weight}}[u] < 1$ and $\text{{\sc Level}}[u] > 0$, {\sc Then} \\
10. \> \> {\sc If} $\text{{\sc Status}}[u] = \text{clean}$, {\sc Then} \\
11. \> \> \> Insert the node into the list $\text{{\sc Dirty-nodes}}$. \\
12. \> \> \> $\text{{\sc Status}}[u] \leftarrow \text{dirty}$. \\
13.  \> {\sc Else if} $\text{{\sc Weight}}[u] < 1$ and $\text{{\sc Level}}[u] = 0$, {\sc Then} \\
14. \> \ \ \ \ \ \ \ \= {\sc If} $\text{{\sc Status}}[u] = \text{dirty}$, {\sc Then} \\
15. \> \>  \ \ \ \ \ \ \ \= Remove the node $u$ from the list {\sc Dirty-nodes}. \\
16. \> \> \> $\text{{\sc Status}}[u] \leftarrow \text{clean}$. 
\end{tabbing}
\end{minipage}
}}
\caption{\label{fig:vc:update-status} UPDATE-STATUS($u$). The subroutine verifies if the node $u$ is dirty or not, and accordingly, it updates the bit $\text{{\sc Status}}[u]$ and the occurrence of the node $u$ in the list $\text{{\sc Dirty-nodes}}$.}  
\end{figure}

\paragraph{The subroutine UPDATE-STATUS($v$).} The procedure is described in Figure~\ref{fig:vc:update-status}. At every stage of our algorithm, it ensures that the list {\sc Dirty-nodes} contains exactly those nodes that violate Invariant~\ref{inv:vc:1}.

\begin{figure}[htbp]
\centerline{\framebox{
\begin{minipage}{5.5in}
\begin{tabbing}
01. \ \ \ \ \ \ \ \ \ \= $i \leftarrow \text{{\sc Level}}[u']$, \ \  $j \leftarrow \text{{\sc Level}}[v']$. \\
02. \>   {\sc If} $(i = j)$, {\sc Then} \\
03. \>  \ \ \ \ \ \ \ \ \=  Insert node $u'$ into the list $\text{{\sc Neighbors}}_{v'}[0,i]$. \\
04. \> \>  Insert node $v'$ into the list $\text{{\sc Neighbors}}_{u'}[0,i]$. \\
05. \> {\sc Else if} $(i > j)$, {\sc Then} \\
06. \> \> Insert node $u'$ into the list $\text{{\sc Neighbors}}_{v'}[i]$. \\
07. \> \>  Insert node $v'$ into the list $\text{{\sc Neighbors}}_{u'}[0,i]$. \\
08. \> {\sc Else if} $(i < j)$, {\sc Then} \\
09. \> \> Insert node $u'$ into the list $\text{{\sc Neighbors}}_{v'}[0,j]$. \\
10. \> \>  Insert node $v'$ into the list $\text{{\sc Neighbors}}_{u'}[j]$.
\end{tabbing}
\end{minipage}
}}
\caption{\label{fig:vc:update-insert} UPDATE-LISTS-INSERT($u',v'$). 
The edge $(u',v')$ has been inserted into the graph $G = (V,E)$. The subroutine ensures that the neighborhood lists of $u',v'$ reflect this change.}
\end{figure}

\begin{figure}[htbp]
\centerline{\framebox{
\begin{minipage}{5.5in}
\begin{tabbing}
01. \ \ \  \ \ \ \ \ \= $i \leftarrow \text{{\sc Level}}[u']$, \ \ $j \leftarrow \text{{\sc Level}}[v']$. \\
02. \>   {\sc If} $(i = j)$, {\sc Then} \\
03. \>  \ \ \ \ \ \ \ \ \=  Delete node $u'$ from the list $\text{{\sc Neighbors}}_{v'}[0,i]$. \\
04. \> \>  Delete node $v'$ from the list $\text{{\sc Neighbors}}_{u'}[0,i]$. \\
05. \> {\sc Else if} $(i > j)$, {\sc Then} \\
06. \> \> Delete node $u'$ from the list $\text{{\sc Neighbors}}_{v'}[i]$. \\
07. \> \>  Delete node $v'$ from the list $\text{{\sc Neighbors}}_{u'}[0,i]$. \\
08. \> {\sc Else if} $(i < j)$, {\sc Then} \\
09. \> \> Delete node $u'$ from the list $\text{{\sc Neighbors}}_{v'}[0,j]$. \\
10. \> \>  Delete node $v'$ from the list $\text{{\sc Neighbors}}_{u'}[j]$.
\end{tabbing}
\end{minipage}
}}
\caption{\label{fig:vc:update-delete} UPDATE-LISTS-DELETE($u',v'$). 
The edge $(u',v')$ has been deleted from the graph $G = (V,E)$. The subroutine ensures that the neighborhood lists of $u',v'$ reflect this change.}
\end{figure}

\begin{figure}[htbp]
\centerline{\framebox{
\begin{minipage}{5.5in}
\begin{tabbing}
01. \ \ \ \ \ \ \   \= Delete the node $u$ from the list $\text{{\sc Neighbors}}_v[0,k]$. \\
02. \> Insert the node $u$ into the list $\text{{\sc Neighbors}}_v[0,k+1]$. \\
03. \> {\sc If} $\text{{\sc Level}}[u] = k$, {\sc Then} \\
04. \> \ \ \ \ \ \ \ \ \=  Delete the node $v$ from the list $\text{{\sc Neighbors}}_u[0,k]$. \\
05. \> {\sc Else if} $\text{{\sc Level}}[u] < k$, {\sc Then} \\
06. \> \> Delete the node $v$ from the list $\text{{\sc Neighbors}}_u[k]$. \\
07. \> Insert the node $v$ into the list $\text{{\sc Neighbors}}_u[k+1]$.
\end{tabbing}
\end{minipage}
}}
\caption{\label{fig:vc:update-up} UPDATE-LISTS-UP($u,v,k$). 
The level of the node $v$ has changed from $k$ to $k+1$, and we have $u \in \N_v(0,k)$. The subroutine ensures that the neighborhood lists of $u,v$ reflect this change.}
\end{figure}

\begin{figure}[htbp]
\centerline{\framebox{
\begin{minipage}{5.5in}
\begin{tabbing}
01. \ \ \ \ \ \ \   \= Delete the node $u$ from the list $\text{{\sc Neighbors}}_v[0,k]$. \\
02. \> {\sc If} $\text{{\sc Level}}[u] = k$, {\sc Then} \\
03. \> \ \ \ \ \ \ \ \ \=  Insert the node $u$ into the list $\text{{\sc Neighbors}}_v[k]$. \\
04. \> {\sc Else if} $\text{{\sc Level}}[u] = k-1$, {\sc Then} \\
05. \> \> Insert the node $u$ into the list $\text{{\sc Neighbors}}_v[0,k-1]$. \\
06. \> \> Delete the node $v$ from the list $\text{{\sc Neighbors}}_u[k]$. \\
07. \> \> Insert the node $v$ into the list $\text{{\sc Neighbors}}_u[0,k-1]$. \\
08. \> {\sc Else if} $\text{{\sc Level}}[u] < k-1$, {\sc Then} \\
09. \> \> Insert the node $u$ into the list $\text{{\sc Neighbors}}_v[0,k-1]$. \\
10. \> \> Delete the node $v$ from the list $\text{{\sc Neighbors}}_u[k]$. \\
11. \> \> Insert the node $v$ into the list $\text{{\sc Neighbors}}_u[k-1]$. 
\end{tabbing}
\end{minipage}
}}
\caption{\label{fig:vc:update-down} UPDATE-LISTS-DOWN($u,v,k$). 
The level of the node $v$ has changed from $k$ to $k-1$, and we have $u \in \N_v(0,k)$. The subroutine ensures that the neighborhood lists of $u,v$ reflect this change.}
\end{figure}

\paragraph{Answering a query.} We can handle three types of queries. 
\begin{enumerate}
\item {\em Is the node $v \in V$ part of the vertex cover?} 

If $\text{{\sc Weight}}[v] \geq 1$, then we output {\sc Yes}, else we output {\sc No}. The query time is $\Theta(1)$.

\item {\em What is the size of the current vertex cover?}

Without increasing the asymptotic runtime of our algorithm, we can easily modify it  and  maintain an additional counter that will store the number of nodes $v \in V$ with $\text{{\sc Weight}}[v] \geq 1$. To answer this query, we simply return the value of this counter. The query time is $\Theta(1)$.

\item {\em Return all the nodes that are part of the current vertex cover.}

Without increasing the asymptotic runtime of our algorithm, we can easily modify it and maintain an additional doubly linked list that will store all the nodes $v \in V$ with $\text{{\sc Weight}}[v] \geq 1$. To answer this query, we simply go through this list and return  the nodes that appear in it. The query time is $O(|V^*|)$, where $V^*$ is the vertex cover maintained by our algorithm.
\end{enumerate}

\subsection{Bounding the amortized update time: Detailed analysis} 
\label{sec:vc:analysis}

We devote Section~\ref{sec:vc:analysis} to the proof of the following theorem, which bounds the update time of our algorithm. Throughout this section, we  set $\alpha \leftarrow 1+3\epsilon$ and $\beta \leftarrow 1+\epsilon$, where $\epsilon \in (0,1]$ is a small positive constant.

\begin{theorem}
\label{th:vc:runtime}
The algorithm in Section~\ref{sec:vc:algo} maintains an $(\alpha,\beta)$-partition that satisfies Invariant~\ref{inv:vc:1}. For every $\epsilon \in (0,1], \alpha = 1+3\epsilon$ and $\beta = 1+\epsilon$, the algorithm takes $O\left((t/\epsilon) \log_{1+\epsilon} n\right)$ time to handle $t$ edge updates starting from an empty graph.
\end{theorem}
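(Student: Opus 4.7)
The plan is to follow the potential-function argument sketched in Section~\ref{sec:vc:updatetime}, but with the sharper choice $\alpha = 1+3\eps$ and $\beta = 1+\eps$ so that we recover the stated $O(L/\eps)$ per-update bound (where $L = \lceil\log_\beta(n/\alpha)\rceil = O(\log_{1+\eps} n)$). Correctness, i.e., that the $(\alpha,\beta)$-partition returned by the algorithm satisfies Invariant~\ref{inv:vc:1}, will be immediate from the design of \textsc{Recover}/\textsc{Fix} (it only terminates when no node is dirty, so every $v$ satisfies $W_v\le \alpha\beta$ and either $\ell(v)=0$ or $W_v\ge 1$), so I would dispatch that in one short paragraph. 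The rest of the work is about amortized runtime.

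\textbf{Setup.} Using Lemmas~\ref{lm:runtime:insert}, \ref{lm:runtime:delete}, \ref{lm:runtime:move-up}, \ref{lm:runtime:move-down}, the actual cost charged to the $t$-th update is $O(1) + \sum_{\text{calls}} \Theta(1 + D_v(0,k))$, where each inner term corresponds to one invocation of \textsc{Move-Up} or \textsc{Move-Down} from a node $v$ currently at level $k$. The strategy is to design a potential $\Phi = \Phi_{\mathrm{node}} + \Phi_{\mathrm{edge}}$ such that (i) each update raises $\Phi$ by at most $O(L/\eps)$, and (ii) every \textsc{Move-Up}/\textsc{Move-Down} call has its actual cost covered by the corresponding drop in $\Phi$. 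I plan to use
\[
\Phi_{\mathrm{edge}} \;=\; c_1 \cdot \sum_{e \in E} (L - \ell(e)), \qquad
\Phi_{\mathrm{node}} \;=\; c_2 \cdot \sum_{v : \ell(v)\ge 1} \beta^{\ell(v)} \cdot \max\!\bigl(0,\, \alpha - W_v\bigr),
\]
for constants $c_1,c_2 = \Theta(1/\eps)$ to be fixed; a fresh edge contributes at most $c_1 L$, and an endpoint weight changing by at most $\beta^{-\ell(v)}$ changes $\Phi_{\mathrm{node}}$ by $O(c_2)$, so each edge insertion/deletion raises $\Phi$ by $O(L/\eps)$, as required.

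\textbf{Covering \textsc{Move-Up}.} When $v$ moves from level $k$ to $k{+}1$, each of the $D_v(0,k)$ adjacent edges $e$ has $\ell(e)$ strictly increase by $1$, so $\Phi_{\mathrm{edge}}$ drops by $c_1 \cdot D_v(0,k)$. The new node weight satisfies $W_v(k{+}1) \ge W_v(k)/\beta \ge \alpha$ (using $W_v > \alpha\beta$ at the trigger and Lemma~\ref{lm:partition}), so the $\Phi_{\mathrm{node}}$ contribution of $v$ is zero at both ends; neighbours' weights change only by $\beta^{-(k+1)}(\beta-1) = \beta^{-(k+1)}\eps$ each, so the total change in $\Phi_{\mathrm{node}}$ from neighbours is $O(\eps c_2 \cdot D_v(0,k)\cdot \beta^{-(k+1)}\cdot\beta^{k+1}) = O(\eps c_2 D_v(0,k))$, absorbed by the edge drop for $c_1$ large enough.

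\textbf{Covering \textsc{Move-Down} (the main obstacle).} This is where the constants get tight. When $v$ moves from level $k$ to $k{-}1$, the actual work is $\Theta(1 + D_v(0,k))$, and moreover $\Phi_{\mathrm{node}}(v)$ may \emph{grow} from essentially $0$ (since $v$ was dirty with $W_v < 1$) to $\beta^{k-1}(\alpha - W_v(k-1))$, which is of order $\alpha \beta^{k-1}$; meanwhile $\Phi_{\mathrm{edge}}$ increases by $c_1 D_v(0,k)$ since each adjacent edge drops a level. So the drop in $\Phi$ at the instant of the \textsc{Move-Down} is negative — we must have pre-paid. I would do this by a case analysis mirroring the overview: (a) if $v$ previously rose from $k{-}1$ to $k$, then at that moment $W_v \ge \alpha\beta$, so after the upwards adjustment $W_v \ge \alpha$; to become dirty its weight had to drop below $1$, each such drop being caused by an adjacent edge moving up (which increases $\Phi_{\mathrm{edge}}$ by $-c_1$ per edge) or by a deletion; summing, $\Phi_{\mathrm{edge}}$ has released $\Omega(\beta^k(\alpha-1)) = \Omega(\eps\beta^k)$ credit accumulated since $v$ last moved; (b) if $v$ previously descended from $k{+}1$ to $k$, then $\Phi_{\mathrm{node}}(v)$ was at least $\beta^k(\alpha - W_v(k)) \ge \beta^k(\alpha - 1) = 3\eps \beta^k$ at that moment, and since $v$ has not changed level since, this potential has only grown (weight decreases raise it). In both cases $v$ carries $\Omega(\eps\beta^k)$ units of combined potential; I need to verify that $\eps\beta^k \ge c \cdot (\alpha\beta^{k-1} + D_v(0,k))$ for a suitable constant $c$, which boils down to $\eps\beta \ge c\alpha$, and after fixing $c_1,c_2 = \Theta(1/\eps)$ this gives a constant slack with $\alpha = 1+3\eps$, $\beta=1+\eps$. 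Summing, the total amortized cost over $t$ updates is $O(tL/\eps) = O((t/\eps)\log_{1+\eps} n)$, proving the theorem.
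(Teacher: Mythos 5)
Your correctness paragraph and your treatment of edge insertions/deletions and \textsc{Move-Up} are fine, and your potential has the same general shape as the one the paper uses. The genuine gap is in \textsc{Move-Down}, which is exactly where the whole difficulty sits. First, ``we must have pre-paid'' is not a valid move in a potential argument unless the pre-payment is \emph{stored in the potential at the moment of the drop}: the credit you invoke in your case (a) consists of edge-potential decreases that occurred when neighbours moved up or edges were deleted, but those decreases were already consumed paying for the corresponding \textsc{Move-Up} work and for the neighbours' node-potential increases in your own accounting, and any slack from earlier steps simply vanishes -- it is not carried by $\Phi$. Second, even granting case (b), the quantities do not add up with your coefficients: at the trigger you are guaranteed only $\Phi_{\mathrm{node}}(v)\ge c_2\beta^k(\alpha-1)=\Theta(\beta^k)$, whereas the \textsc{Move-Down} must cover the edge-potential increase $c_1 D_v(0,k-1)$, which is $\Theta(\beta^k/\eps)$ when $x=\beta^{-k}D_v(0,k-1)$ is a constant, plus the $\Theta(\beta^k)$ work and the new node potential at level $k-1$. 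Your final reduction ``$\eps\beta\ge c\alpha$'' is false for small $\eps$ (and it silently drops the $c_1D_v(0,k-1)$ term), so the verification fails; the node coefficient $c_2=\Theta(1/\eps)$ is short by a factor of $\Theta(1/\eps)$.

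The paper's proof needs no history-based case analysis at all. Up to the global $1/\eps$ scaling it takes $\Phi(u,v)=(1+\eps)(L-\max(\ell(u),\ell(v)))$ and $\Psi(v)=\eps(L-\ell(v))+\frac{\beta^{\ell(v)+1}}{\beta-1}\max(0,\alpha-W_v)$ (Invariants~\ref{inv:vc:potential:edge} and~\ref{inv:vc:potential:node}); the coefficient $\beta^{\ell(v)+1}/(\beta-1)=\Theta(\beta^{\ell(v)}/\eps)$ is the correct ``conversion rate'' between lost weight and stored potential. With this choice, \emph{every single} \textsc{Move-Down} from level $k$ has an immediate drop $\delta\B\ge\beta^k\ge T$ (Theorem~\ref{th:vc:analyze:FIX:main}, Case~2): writing $W_v^0=x+y<1$, in the binding subcase the node releases $\beta^k(\alpha-y)\ge 3\eps\beta^k+x\beta^k$, which cancels the $(1+\eps)x\beta^k$ edge increase and, after the $1/\eps$ scaling, still covers the $O(\beta^k)$ work and the new node potential; the growth of $\Psi(v)$ while $v$'s weight erodes is paid at the time it happens, via the $(1+\eps)$-versus-$1$ gap when an incident edge moves up and via the $O(L)$ per-deletion budget. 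Note also the $\eps(L-\ell(v))$ term, which your potential lacks: it pays the $\Theta(1)$ cost of a \textsc{Move-Up} with $D_v(0,k)=0$ (all neighbours strictly above level $k$), a case in which your $\Phi$ does not drop at all even though $v$ may climb many levels. To repair your proof you should adopt these two modifications and then redo \textsc{Move-Down} as a direct per-call computation, discarding the pre-payment argument.
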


Consider the following thought experiment. We have a {\em bank account}, and initially, when there are no edges in the graph,   the bank account has zero balance.  For each subsequent edge insertion/deletion,   at most $20 L/\epsilon$ dollars are deposited to the bank account; and for each unit of computation performed by our algorithm, at least one   dollar is withdrawn from it. We show that the bank account never runs out of money, and this gives a running time bound of $O(t L/\epsilon) = O\left((t/\epsilon) \log_{\beta}(n/\alpha)\right) = O\left((t/\epsilon)\log_{1+\epsilon} n\right)$ for handling $t$ edge updates starting from an empty graph.

Let $\B$ denote the total amount of money (or potential) in the bank account at the present moment. We keep track of $\B$  by distributing an $\epsilon$-fraction of it among the nodes and the current set of edges in the graph. 
\begin{equation}
\label{eq:vc:potential:0}
\B = (1/\epsilon) \cdot \left(\sum_{e \in E} \Phi(e) + \sum_{v \in V} \Psi(v)\right)
\end{equation}

In the above equation, the amount of money (or potential) associated with an edge $e \in E$ is given by $\Phi(e)$, and the amount of money (or potential) associated with a node $v \in V$ is given by $\Psi(v)$. To ease notation, for each edge $e = (u,v) \in E$, we use the symbols $\Phi(e), \Phi(u,v)$ and $\Phi(v,u)$ interchangeably. 

We call a node $v \in V$  {\em passive} if we have $\N_v = \emptyset$ throughout the duration of a time interval that starts at the beginning of the algorithm (when $E = \emptyset$) and ends at  the present moment. Let $V_{passive} \subseteq V$  denote the set of all  nodes that are currently passive.

At every point in time, all the potentials $\Phi(u,v), \Psi(v)$    are determined by  the two invariants stated below.

\begin{invariant}
\label{inv:vc:potential:edge}
For every (unordered) pair of node $\{u,v\}$, $u,v \in V$, we have: 
\begin{eqnarray*}
\Phi(u,v)  = \begin{cases}
 (1+\epsilon) \cdot \left(L - \max(\ell(u),\ell(v))\right) & \text{ if } (u,v) \in E; \\
0 & \text{ if } (u,v) \notin E.
\end{cases}
\end{eqnarray*}
\end{invariant}

\begin{invariant}
\label{inv:vc:potential:node}
For every node $v \in V$, we have:
\begin{eqnarray*}
\Psi(v)  =  \begin{cases}
\epsilon \cdot (L- \ell(v))  + \left(\beta^{\ell(v)+1}/(\beta-1)\right) \cdot \max\left(0,\alpha-W_v\right) & \text{ if } v \notin V_{\text{passive}}; \\
0 & \text{ if } v \in V_{\text{passive}}.
\end{cases}
\end{eqnarray*}
\end{invariant}

\paragraph{Initialization.}
When the algorithm starts, the graph has zero edges, all the nodes are at level $0$, and every node is passive. At that moment, Invariant~\ref{inv:vc:potential:node} sets $\Psi(v) = 0$ for all nodes $v \in V$.  Consequently,  equation~\ref{eq:vc:potential:0} implies that the potential $\B$ is also set to zero. This is consistent with our requirement that initially  the bank account ought to have zero balance.

\paragraph{Insertion of an edge.}
By Lemma~\ref{lm:runtime:insert}, the  time taken to handle an edge insertion, ignoring the calls to the FIX(.) subroutine, is $\Theta(1)$. According to our framework,   we are allowed to deposit at most $20L/\epsilon$ dollars  to the bank account, and a withdrawal of one dollar from the same place is sufficient to pay for the computation performed. Thus, due to the Steps 01-06 in Figure~\ref{fig:vc:insert}, the net increase in the potential $\B$ ought to be no more than $20 L/\epsilon - 1$. We  show below that this is indeed the case.

When an edge $(u,v)$ is inserted into the graph,  $\Phi(u,v)$ increases by at most $(1+\epsilon)L$. Next, we consider two possible scenarios to bound the increase in $\Psi(v)$.
\begin{enumerate}
\item The node $v$ was passive prior to the insertion of the edge. Clearly, in this case the node is at level zero, and $\Psi(v) = 0$ before the insertion. After the insertion the node is not passive anymore, and we have $\Psi(v) \leq  \epsilon L + \alpha\beta/(\beta-1) = \epsilon L + (1+3\epsilon)(1+\epsilon)/\epsilon \leq 9L$. Since $\epsilon \in [0,1]$ and $L = \log_{\beta} (n/\alpha) = \log_{1+\epsilon} \left(n/(1+3\epsilon)\right)$, the last inequality holds as long as  $n \geq 20$. To summarize, we conclude that $\Psi(v)$ increases by at most $9L$.
\item The node $v$ was not passive prior to the insertion of the edge. Clearly, in this case the node $v$ is also not passive afterwards. The weight $W_v$ increases and the quantity $\max(0,\alpha - W_v)$ decreases due to the insertion, which means that the potential $\Psi(v)$ decreases.
\end{enumerate}
Similarly, we conclude that  either $\Psi(u)$  increases by at most $9L$ or it actually decreases. The potentials of the remaining nodes and edges do not change. Hence, by equation~\ref{eq:vc:potential:0},  the net increase in   $\B$ is at most $18L/\epsilon$.

\paragraph{Deletion of an edge.}
The analysis  is very similar to the one described above. By Lemma~\ref{lm:runtime:delete}, the  time taken to handle an edge deletion, ignoring the calls to FIX(.), is $\Theta(1)$. According to our framework,  we are allowed to deposit at most $20L/\epsilon$ dollars  to the bank account, and a withdrawal of  one dollar  from the same place is sufficient to pay for the computation performed. Thus, due to the Steps~01-06 in Figure~\ref{fig:vc:delete}, the net increase in the potential  $\B$ ought to be no more than $20 L/\epsilon - 1$. We  show below that this is indeed the case.

When an edge $(u,v)$ is deleted from the graph, the potential $\Phi(u,v)$ decreases. Next, note that  the weight $W_v$ decreases by at most $\beta^{-\ell(v)}$, and so the quantity $\max(0,\alpha - W_v)$ increases by at most $\beta^{-\ell(v)}$. As the node $v$ was not passive before the edge deletion, it is also not passive afterwards.  Thus,  $\Psi(v)$ increases by at most $\left(\beta^{\ell(v)+1}/(\beta-1)\right) \cdot \beta^{-\ell(v)} = 2/\epsilon \leq 2 L$. The last equality holds as long as $n \geq 20$. We similarly conclude that $\Psi(u)$  increases by at most $2L$. The potentials of the remaining nodes and edges do not change. Hence, by equation~\ref{eq:vc:potential:0},  the net increase in   $\B$ is  at most  $4L/\epsilon$.

\medskip

It remains to analyze the total runtime of all the calls to  FIX(.). This is done in Section~\ref{subsec:analyze:FIX}.

\subsubsection{Analysis of the  FIX($v$) subroutine}
\label{subsec:analyze:FIX}

We analyze a single call to the subroutine FIX($v$). Throughout this section, we use the superscript $0$ (resp. $1$) on a symbol to denote its state at the time instant immediately prior to (resp. after) the execution of FIX($v$). Further,  we preface a symbol with  $\delta$ to denote the net decrease in its value due to the call to FIX($v$). For example, consider the potential $\B$.  We have $\B = \B^0$ immediately before FIX($v$) is called, and  $\B = \B^1$ immediately after  FIX($v$) terminates.  We also have $\delta \B = \B^0 - \B^1$.  We will prove the following theorem. 
\begin{theorem}
\label{th:vc:analyze:FIX:main}
We have $\delta \B > 0$ and $\delta \B \geq T$, where $T$ denotes the runtime of FIX($v$). In other words, the money withdrawn from the bank account during the execution of FIX($v$) is sufficient to pay for  the computation performed by FIX($v$). 
\end{theorem}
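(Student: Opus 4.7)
The plan is to split on which subroutine FIX($v$) invokes. Let $k = \ell(v)$ just before the call; by Lemmas~\ref{lm:runtime:move-up} and~\ref{lm:runtime:move-down}, the runtime is $T = \Theta(1 + D_v(0,k))$, so it suffices to show $\delta \B = \Omega(1 + D_v(0,k))$. In both cases I would carefully tally the change in $\Phi(e)$ for each affected edge and $\Psi(u)$ for each affected node, using the fact that only edges with endpoints in $\N_v(0,k)$ (MOVE-UP) or $\N_v(0,k-1)$ (MOVE-DOWN) have their $\max(\ell(u),\ell(v))$ change.

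For MOVE-UP($v$), where $W_v^0 > \alpha\beta$: each of the $D_v(0,k)$ affected edges has $\Phi(u,v)$ drop by exactly $1+\epsilon$, the first term of $\Psi(v)$ drops by $\epsilon$, and the second term of $\Psi(v)$ stays at $0$ because Lemma~\ref{lm:partition} equation~\eqref{eq:lm:partition:3} forces $W_v^1 \ge W_v^0/\beta > \alpha$. The only adverse change is that each neighbor $u \in \N_v(0,k)$ loses $\beta^{-k}(1-1/\beta)$ of weight, which can raise $\Psi(u)$ by at most $\beta^{\ell(u)-k} \le 1$. Summing gives $\delta(\sum \Phi + \sum \Psi) \ge \epsilon(1 + D_v(0,k))$, hence $\delta \B \ge 1 + D_v(0,k)$, dominating $T$.

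For MOVE-DOWN($v$), where $W_v^0 < 1$ and $k \ge 1$, the analysis is more delicate: the $D_v(0,k-1)$ affected edges now \emph{gain} $1+\epsilon$ in $\Phi$, and the first term of $\Psi(v)$ \emph{grows} by $\epsilon$, so the entire payment must come from the second term of $\Psi(v)$. A direct telescoping of the two values, using $W_v^1 - W_v^0 = D_v(0,k-1)(\beta^{-(k-1)} - \beta^{-k})$, produces
\[
\delta \Psi(v) \;=\; \beta^k(\alpha - W_v^0) \;+\; D_v(0,k-1) \;-\; \epsilon,
\]
so the $D_v(0,k-1)$ summand exactly cancels the dominant edge-potential loss $(1+\epsilon)D_v(0,k-1)$ up to an $\epsilon D_v(0,k-1)$ residual; each neighbor potential $\Psi(u)$ can only \emph{increase} in value (since each $W_u$ rises by $\beta^{-k}(\beta-1)$), meaning $\delta \Psi(u) \ge -\beta^{\ell(u)+1-k} \ge -1$. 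After cancellation the remainder is at least $\beta^k(\alpha - W_v^0) - \epsilon D_v(0,k-1) - \epsilon$; plugging in $\alpha - W_v^0 > 3\epsilon$ (from $W_v^0 < 1 = \alpha - 3\epsilon$) together with $D_v(0,k-1) \le D_v(0,k) \le W_v^0 \beta^k < \beta^k$ (from $W_v \ge D_v(0,k)\beta^{-k}$) yields at least $2\epsilon \beta^k - \epsilon$, so $\delta \B \ge 2\beta^k - 1 = \Omega(1+D_v(0,k))$ as required.

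The hard part is the MOVE-DOWN case, where both $\Phi$ and the first piece of $\Psi(v)$ move in the wrong direction, so the entire cost---both the amortized payment for the $(1+\epsilon)$ edge losses and the $\Theta(1 + D_v(0,k))$ bookkeeping work---has to be absorbed by the single slack term $(\beta^{\ell(v)+1}/(\beta-1)) \max(0, \alpha - W_v)$. The specific calibration $\alpha = 1+3\epsilon$ is what makes this go through: it guarantees a gap $\alpha - W_v^0 > 3\epsilon$ whenever $v$ drops, and multiplied by $\beta^k$ this exactly overcomes the unavoidable $\epsilon \beta^k$ loss from the other terms while leaving enough to pay for updating the neighborhood lists. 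Hiding the small constant factor in the $20L/\epsilon$ dollars deposited per outer update closes out the accounting.
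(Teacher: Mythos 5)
Your argument follows the same blueprint as the paper: the same potential $(1/\epsilon)(\sum_e\Phi(e)+\sum_v\Psi(v))$, the same split on MOVE-UP versus MOVE-DOWN, and the same tally of $\delta\Psi(v)$, $\sum\delta\Phi$, and $\sum\delta\Psi(u)$. The MOVE-UP half matches Lemmas~\ref{lm:FIX:case2:2}--\ref{lm:FIX:case2:4} exactly. In the MOVE-DOWN half you make a nice observation the paper doesn't: since $D_v(0,k-1)\le D_v(0,k)<\beta^k$ and $W_v^0<1$, one gets $W_v^1 = W_v^0 + (\beta-1)\beta^{-k}D_v(0,k-1) < \beta < \alpha$, so the $\max(0,\alpha-W_v^1)$ is never truncated. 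This lets you skip the paper's split on the sign of $\alpha-\beta x-y$ (its first subcase is in fact vacuous), and your telescoped $\delta\Psi(v)=\beta^k(\alpha-W_v^0)+D_v(0,k-1)-\epsilon$ is the same quantity written with $x=\beta^{-k}D_v(0,k-1)$, $y=W_v^0-x$ folded back in.

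One thing to correct in the MOVE-DOWN write-up: you state that each neighbor's $\Psi(u)$ ``can only increase'' and bound $\delta\Psi(u)\ge -\beta^{\ell(u)+1-k}\ge -1$. The direction is backwards. When $v$ drops a level, $w(u,v)$ and hence $W_u$ go \emph{up} for $u\in\N_v(0,k-1)$, so $\max(0,\alpha-W_u)$ and $\Psi(u)$ go \emph{down}, giving $\delta\Psi(u)=\Psi^0(u)-\Psi^1(u)\ge 0$ (this is the paper's Lemma~\ref{lm:FIX:case3:node:u}). The distinction matters: with only $\delta\Psi(u)\ge -1$ per affected neighbor, the remainder would be $\beta^k(\alpha-W_v^0)-(1+\epsilon)D_v(0,k-1)-\epsilon$, which for $W_v^0$ close to $1$ and $D_v(0,k-1)$ close to $\beta^k$ goes negative and the accounting fails. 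Your displayed remainder $\beta^k(\alpha-W_v^0)-\epsilon D_v(0,k-1)-\epsilon$ implicitly uses the correct $\ge 0$ bound, and the subsequent plug-in ($\alpha-W_v^0>3\epsilon$, $D_v(0,k-1)<\beta^k$) is right, so the final conclusion stands; you just need to fix the mis-stated auxiliary inequality so the intermediate step is actually what supports it.
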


The call to FIX($v$) affects only the potentials of the nodes  $u \in \N_v \cup \{v\}$ and that of the edges $e \in \{(u,v) : u \in \N_v\}$. This observation, coupled with equation~\ref{eq:vc:potential:0}, gives us the following guarantee.
\begin{equation}
\label{eq:vc:change:1}
\delta \B  = (1/\epsilon) \cdot \left( \delta \Psi(v) +   \sum_{u \in \N_v} \delta \Phi(u,v) + \sum_{u \in \N_v}  \delta \Psi(u) \right)
\end{equation}

The call to  FIX($v$) does not change (a) the neighborhood structure  of the  node $v$, and (b) the level and the overall degree of any node $u \neq v$. 
\begin{eqnarray}
 \N_v^0(i)  & = &  \N_v^1(i) \ \ \text{ for all } i \in \{0, \ldots, L\}. \label{eq:nochange:1} \\
 \ell^0(u) & = & \ell^1(u) \ \ \  \text{ for all } u \in V \setminus \{v\}. \label{eq:nochange:2} \\
D^0_u & = & D^1_u \  \  \text{ for all } u \in V \setminus \{v\}. \label{eq:nochange:3}
\end{eqnarray}

Accordingly, to ease notation we do not  put any superscript on the following symbols, as the quantities they refer to remain the same throughout the duration of FIX($v$).
\begin{eqnarray*}
\begin{cases}
\N_v, D_v. &  \\
\N_v(i), \dd_v(i), W_v(i) &   \text{ for all } i \in \{0, \ldots, L\}. \\
\N_v(i,j), \dd_v(i,j) &  \text{ for all } i,j \in \{0,\ldots, L\}, i \leq j. \\
\ell(u), D_u &  \text{ for all } u \in V \setminus \{v\}.
\end{cases}
\end{eqnarray*}

Since the node $v$ is dirty when the subroutine FIX($v$) is called, it follows that  neither the node $v$ nor  the nodes $u \in \N_v$ are passive.\footnote{A passive node is not adjacent to any edges and, thus, its weight is zero and it has no neighbors.} Applying Invariant~\ref{inv:vc:potential:node}, we get:
\begin{equation}
\label{eq:inv:vc:potential:node}
\Psi(u) = \epsilon \cdot (L- \ell(u))  + \left(\beta^{\ell(u)+1}/(\beta-1)\right) \cdot \max\left(0,\alpha-W_u\right) \ \ \text{ for all nodes } u \in \N_v \cup \{v\}.
\end{equation}

We divide the proof of Theorem~\ref{th:vc:analyze:FIX:main} into two possible cases, depending upon whether the call to FIX($v$) increments or decrements the level of $v$. The main approach to the proof remains the same in each case. We first give an upper bound on the running time $T$. Next, we separately lower bound each of the following quantities:  $\delta \Psi(v)$,  $\delta \Phi(u,v)$ for all $u \in \N_v$, and   $\delta \Psi(u)$ for all $u \in \N_v$. Finally, applying equation~\ref{eq:vc:change:1}, we  derive that $\delta \B \geq T$.

\paragraph{Case 1: The subroutine FIX($v$) increases the level of the node $v$ from $k$ to $(k+1)$.}
\label{sec:FIX:case2}

\begin{lemma}
\label{lm:FIX:case2:1}
We have $T \leq 1+\dd_v(0,k)$.
\end{lemma}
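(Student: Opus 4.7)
The plan is to simply invoke the running time bound already established for MOVE-UP and observe that FIX($v$) does nothing else (up to constant factors) in this case. First, I would examine the pseudocode for FIX($v$) in Figure~\ref{fig:vc:dirty}: when the call increases the level of $v$ from $k$ to $k+1$, we must be in the branch at lines 01--02, which reduces FIX($v$) to a single call to MOVE-UP($v$) plus $O(1)$ overhead for the conditional checks.

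Next, I would quote Lemma~\ref{lm:runtime:move-up}, which states that if $k$ is the level of $v$ immediately prior to the call to MOVE-UP($v$), then MOVE-UP($v$) runs in $\Theta(1+\dd_v(0,k))$ time. Since FIX($v$) only enters the MOVE-UP branch when $\text{{\sc Weight}}[v] > \alpha\beta$, and $\text{{\sc Level}}[v]$ is precisely $k$ at this moment, the hypothesis of Lemma~\ref{lm:runtime:move-up} applies verbatim with the same $k$.

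Combining these two observations, the total running time $T$ of FIX($v$) is $O(1)+\Theta(1+\dd_v(0,k))$, which gives $T \leq 1+\dd_v(0,k)$ up to the hidden constant (which we absorb into the definition of the unit of work used throughout the amortized analysis). There is no real obstacle here — the lemma is essentially a bookkeeping restatement of Lemma~\ref{lm:runtime:move-up} in the notation (superscripts $0,1$ for before/after FIX, and the invariant recorded in equation~\eqref{eq:nochange:1} that $\dd_v(0,k)$ is unambiguous during the call) that will be used in the subsequent potential analysis. The only thing to be mildly careful about is justifying that the $\Theta(\cdot)$ from Lemma~\ref{lm:runtime:move-up} can be taken with constant $1$, which is a matter of fixing the computational cost unit at the beginning of Section~\ref{sec:vc:analysis}.
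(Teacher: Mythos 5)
Your proposal is correct and takes essentially the same route as the paper: FIX($v$) in this case reduces to one call to MOVE-UP($v$), and Lemma~\ref{lm:runtime:move-up} gives the $\Theta(1+\dd_v(0,k))$ bound, with the constant absorbed into the unit of work. The paper's proof is a one-liner with the identical content.
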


\begin{proof}
 FIX($v$) calls the subroutine MOVE-UP($v$), which, by Lemma~\ref{lm:runtime:move-up}, runs for $\Theta(1+D_v(0,k))$ time.\footnote{Note that $D_v(0,k)$ may be zero. Hence, we bound the runtime of MOVE-UP($v$) by $\Theta(1+D_v(0,k))$ instead of $\Theta(D_v(0,k))$.}
\end{proof}

\begin{lemma}
\label{lm:FIX:case2:2}
We have $\delta \Psi(v) = \epsilon$.
\end{lemma}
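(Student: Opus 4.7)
The plan is to expand both $\Psi^0(v)$ and $\Psi^1(v)$ using Invariant~\ref{inv:vc:potential:node} and show that in both states the second (max) term vanishes, so only the first term contributes and the difference is exactly $\epsilon$. Note that $v$ is not passive (since FIX($v$) is only called on a dirty node, which must have nonempty neighborhood), so the first case of Invariant~\ref{inv:vc:potential:node} applies both before and after the call.

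First I would handle the \emph{before} state. Since FIX($v$) calls MOVE-UP($v$), the branching in Figure~\ref{fig:vc:dirty} guarantees $W_v^0 > \alpha\beta$. In particular $W_v^0 > \alpha$, so $\max(0,\alpha-W_v^0)=0$ and therefore
\[
\Psi^0(v) = \epsilon\cdot(L-k).
\]

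Next I would handle the \emph{after} state. Here $\ell^1(v)=k+1$, and I need to argue that the weight on the new level is still above $\alpha$, so that the max term again vanishes. Observe that $W_v^1$ is precisely the quantity $W_v(k+1)$ in the notation of equation~\eqref{eq:symbol:8}, because moving $v$ from $k$ to $k+1$ only rescales edge weights exactly as $W_v(k+1)$ prescribes; no neighborhoods change by \eqref{eq:nochange:1}. By Lemma~\ref{lm:partition} (equation~\eqref{eq:lm:partition:3}) we have $W_v(k) \le \beta\cdot W_v(k+1)$, and combined with $W_v^0 = W_v(k) > \alpha\beta$ this yields $W_v^1 = W_v(k+1) > \alpha$. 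Thus $\max(0,\alpha-W_v^1)=0$ and
\[
\Psi^1(v) = \epsilon\cdot(L-k-1).
\]

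Subtracting gives $\delta\Psi(v) = \Psi^0(v) - \Psi^1(v) = \epsilon$, as claimed. I do not expect any serious obstacle here; the only subtle point is identifying $W_v^1$ with $W_v(k+1)$ and invoking the right inequality from Lemma~\ref{lm:partition} to rule out that the max term becomes positive on the new level.
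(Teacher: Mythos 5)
Your proof is correct and matches the paper's argument essentially step for step: both establish $W_v^0 = W_v(k) > \alpha\beta$ from the branching condition, both invoke equation~\eqref{eq:lm:partition:3} of Lemma~\ref{lm:partition} to conclude $W_v^1 = W_v(k+1) > \alpha$, and both observe that the $\max$ term in $\Psi(v)$ therefore vanishes in both states, leaving a difference of exactly $\epsilon$. The only cosmetic difference is that you explicitly note $v$ is non-passive inside the proof, whereas the paper records this once for all nodes touched by FIX($v$) just before splitting into cases.
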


\begin{proof}
The subroutine  FIX($v$) calls the subroutine  MOVE-UP($v,j$) in Step~02 (Figure~\ref{fig:vc:dirty}). Hence,   Step 01 (Figure~\ref{fig:vc:dirty}) guarantees that  $W_v^0 = W_v(k) > \alpha \beta$. Next, from Lemma~\ref{lm:partition} we infer that $W_v^1  = W_v(k+1) \geq \beta^{-1} \cdot W_v(k) > \alpha$. Since both $W_v^0, W_v^1 > \alpha$, we get:
\begin{eqnarray*}
\Psi^0(v) & = & \epsilon  \cdot (L- k) + \left(\beta^{k+1}/(\beta-1)\right) \cdot \max(0, \alpha - W_v^0) =   \epsilon \cdot (L- k)  \\
\Psi^1(v) & = & \epsilon  \cdot (L- k-1) + \left(\beta^{k+2}/(\beta-1)\right) \cdot \max(0, \alpha - W_v^1) =   \epsilon \cdot (L- k-1) 
\end{eqnarray*}
It follows that $\delta \Psi(v) = \Psi^0(v) - \Psi^1(v) = \epsilon$.
\end{proof}

\begin{lemma}
\label{lm:FIX:case2:3}
For every node $u \in \N_v$, we have:
\begin{eqnarray*}
\delta \Phi(u,v)  =
\begin{cases}
(1+\epsilon) &  \text{ if } u \in \N_v(0,k); \\
0 &  \text{ if } u \in \N_v(k+1,L). 
\end{cases} 
\end{eqnarray*} 
\end{lemma}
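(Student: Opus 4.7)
The plan is a direct computation using Invariant~\ref{inv:vc:potential:edge}, which prescribes $\Phi(u,v) = (1+\epsilon)\cdot (L - \max(\ell(u),\ell(v)))$ for every edge $(u,v) \in E$. Since FIX($v$) is invoked only when $(u,v)$ is already an edge and since the call only changes the level of $v$ (not of any $u \in \N_v$ by equation~\ref{eq:nochange:2}), the change in $\Phi(u,v)$ is determined entirely by how $\max(\ell(u), \ell(v))$ transitions when $\ell(v)$ goes from $k$ to $k+1$.

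First, I would fix an arbitrary $u \in \N_v$ and split into the two cases appearing in the claim. If $u \in \N_v(0,k)$, then $\ell(u) \le k$, so $\max(\ell(u), \ell^0(v)) = \max(\ell(u), k) = k$ before the call, and $\max(\ell(u), \ell^1(v)) = \max(\ell(u), k+1) = k+1$ afterwards. Plugging these into Invariant~\ref{inv:vc:potential:edge} gives $\Phi^0(u,v) = (1+\epsilon)(L - k)$ and $\Phi^1(u,v) = (1+\epsilon)(L - k - 1)$, hence $\delta \Phi(u,v) = \Phi^0(u,v) - \Phi^1(u,v) = (1+\epsilon)$.

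If instead $u \in \N_v(k+1, L)$, then $\ell(u) \ge k+1 > k$, so $\max(\ell(u), k) = \ell(u)$ before the call and $\max(\ell(u), k+1) = \ell(u)$ afterwards (since $\ell(u) \ge k+1$). Thus $\Phi^0(u,v) = \Phi^1(u,v) = (1+\epsilon)(L - \ell(u))$, yielding $\delta \Phi(u,v) = 0$.

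Since $\N_v = \N_v(0,k) \cup \N_v(k+1, L)$ partitions the neighbors of $v$, the two cases above exhaust the possibilities, establishing the stated formula for $\delta \Phi(u,v)$. There is no real obstacle here: the lemma is an immediate bookkeeping consequence of the definition of $\Phi$ in Invariant~\ref{inv:vc:potential:edge}, combined with the fact (equation~\ref{eq:nochange:2}) that FIX($v$) leaves $\ell(u)$ unchanged for every $u \ne v$.
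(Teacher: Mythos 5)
Your proposal is correct and matches the paper's argument: both proofs compute $\Phi^0(u,v)$ and $\Phi^1(u,v)$ directly from Invariant~\ref{inv:vc:potential:edge} by tracking how $\max(\ell(u),\ell(v))$ changes when $\ell(v)$ rises from $k$ to $k+1$, split into the cases $u \in \N_v(0,k)$ and $u \in \N_v(k+1,L)$. Your write-up just spells out the $\max$ computation and the appeal to equation~\ref{eq:nochange:2} slightly more explicitly.
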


\begin{proof}
If $u \in \N_v(0,k)$, then  we have $\Phi^0(u,v) = (1+\epsilon) \cdot (L - k)$ and $\Phi^1(u,v) =  (1+\epsilon) \cdot (L-k-1)$. It follows that $\delta \Phi(u,v) = \Phi^0(u,v) - \Phi^1(u,v) = (1+\epsilon)$. 

In contrast, if $u \in \N_v(k+1,L)$, then we have  $\Phi^0(u,v) = \Phi^1(u,v) = (1+\epsilon) \cdot (L - \ell(u))$. Hence, we get $\delta \Phi(u,v) = \Phi^0(u,v) - \Phi^1(u,v) = 0$.
\end{proof}

\begin{lemma}
\label{lm:FIX:case2:4}
For every node $u \in \N_v$, we have:
\begin{eqnarray*}
\delta \Psi(u)  \geq
\begin{cases}
-1 &  \text{ if } u \in \N_v(0,k); \\
0 &  \text{ if } u \in \N_v(k+1,L). 
\end{cases} 
\end{eqnarray*} 
\end{lemma}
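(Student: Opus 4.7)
My plan is to analyze how the weight $W_u$ of each neighbor $u \in \N_v$ changes when $v$ moves up from level $k$ to $k+1$, and then translate this into a bound on $\delta \Psi(u)$ using the formula in equation~\ref{eq:inv:vc:potential:node}. Since $\ell(u)$ is unaffected by the call to FIX($v$) (equation~\ref{eq:nochange:2}), the term $\epsilon(L-\ell(u))$ in $\Psi(u)$ is invariant, so the entire change comes from the term $(\beta^{\ell(u)+1}/(\beta-1)) \cdot \max(0,\alpha - W_u)$. The only edge adjacent to $u$ whose weight changes is $(u,v)$, so I only need to track $w^0(u,v) - w^1(u,v)$.

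First consider $u \in \N_v(0,k)$, i.e., $\ell(u) \le k$. Then $w^0(u,v) = \beta^{-\max(\ell(u),k)} = \beta^{-k}$ and $w^1(u,v) = \beta^{-\max(\ell(u),k+1)} = \beta^{-(k+1)}$. Thus $W_u$ drops by exactly $\beta^{-k} - \beta^{-(k+1)} = \beta^{-(k+1)}(\beta-1)$, which means $\max(0,\alpha-W_u)$ grows by at most this same amount. Substituting into the formula for $\Psi(u)$, the potential can rise by at most
\[
\frac{\beta^{\ell(u)+1}}{\beta-1}\cdot \beta^{-(k+1)}(\beta-1) \;=\; \beta^{\ell(u)-k} \;\le\; 1,
\]
since $\ell(u)\le k$. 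Hence $\delta \Psi(u) = \Psi^0(u) - \Psi^1(u) \ge -1$, as required.

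Next consider $u \in \N_v(k+1,L)$, i.e., $\ell(u) \ge k+1$. Then $\max(\ell(u),k) = \max(\ell(u),k+1) = \ell(u)$, so $w^0(u,v) = w^1(u,v) = \beta^{-\ell(u)}$. Consequently $W_u$ does not change at all, the quantity $\max(0,\alpha-W_u)$ is unchanged, and $\Psi(u)$ is unchanged; i.e., $\delta\Psi(u) = 0$.

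The argument is essentially a direct calculation; there is no real obstacle beyond carefully tracking which of $\ell(u),k,k+1$ attains the maximum in the exponent of $\beta$, and verifying that the geometric factor $\beta^{\ell(u)+1}/(\beta-1)$ in the potential exactly cancels against the weight drop of order $\beta^{-(k+1)}(\beta-1)$ up to the slack factor $\beta^{\ell(u)-k} \le 1$. This cancellation is in fact the reason the coefficient $\beta^{\ell(u)+1}/(\beta-1)$ was chosen in Invariant~\ref{inv:vc:potential:node}; the lemma is a consistency check that this choice indeed absorbs the cost of a single upward level change of a neighbor.
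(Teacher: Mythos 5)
Your proof is correct and follows essentially the same route as the paper: for $u \in \N_v(k+1,L)$ you observe $w(u,v)$ is unchanged because the max in the exponent is attained by $\ell(u)$, and for $u \in \N_v(0,k)$ you compute the exact drop $\delta W_u = \beta^{-k}-\beta^{-(k+1)} = (\beta-1)/\beta^{k+1}$, bound the increase of $\max(0,\alpha-W_u)$ by that drop, and cancel against the prefactor $\beta^{\ell(u)+1}/(\beta-1)$ to get $\beta^{\ell(u)-k}\le 1$. The only difference from the paper's write-up is that you spell out explicitly that $\max(0,\alpha-W_u)$ can grow by at most $\delta W_u$, a step the paper compresses into a single inequality.
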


\begin{proof}
Consider any node $u \in \N_v(k+1,L)$. Since $k < \ell(u)$, we have $w^0(u,v) = w^1(u,v)$, and this implies  that $W_u^0 = W_u^1$. Thus,  we get $\delta \Psi(u) = 0$.

Next, fix any node $u  \in \N_v(0,k)$. Note that   $\delta W_u = \delta w(u,v) = \beta^{-k} - \beta^{-(k+1)} = (\beta-1)/\beta^{k+1}$. Using this observation, and the fact that $\ell(u) \leq k$, we infer that:  
$$\delta \Psi(u) \geq - \left(\beta^{\ell(u)+1}/(\beta-1)\right) \cdot \delta W_u = - \beta^{\ell(u)+1}/\beta^{k+1} \geq -1.$$
\end{proof}

\begin{proof}[Proof of Theorem~\ref{th:vc:analyze:FIX:main} (for Case 1)]
From Lemmas~\ref{lm:FIX:case2:2},~\ref{lm:FIX:case2:3},~\ref{lm:FIX:case2:4} and equation~\ref{eq:vc:change:1}, we derive the following bound.
\begin{eqnarray}
\delta \B & = & (1/\epsilon) \cdot \left(\delta \Psi(v) + \sum_{u \in \N_v} \delta \Phi(u,v) + \sum_{u \in \N_v} \delta \Psi(u)\right)  \nonumber \\
& \geq & (1/\epsilon) \cdot \left(\epsilon + (1+\epsilon) \cdot D_v(0,k) -  D_v(0,k) \right) \nonumber \\
&  = &   1 +   D_v(0,k) \nonumber
\end{eqnarray}

The theorem (for Case 1) now follows from Lemma~\ref{lm:FIX:case2:1}.
\end{proof}

\paragraph{Case 2:  The subroutine FIX($v$) decreases the level of the node $v$ from $k$ to  $k-1$.}
\label{sec:FIX:case3}

\begin{claim}
\label{cl:FIX:case3:degree}
We have $W_v^0 = W_v(k) < 1$ and $D_v(0,k) \leq \beta^k$.
\end{claim}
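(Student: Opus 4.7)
The first part of the claim is a direct unpacking of the conditions under which FIX($v$) invokes MOVE-DOWN($v$). According to Figure~\ref{fig:vc:dirty}, FIX($v$) only calls MOVE-DOWN($v$) inside the branch guarded by $\text{Weight}[v] < 1$ and $\text{Level}[v] > 0$. Since we are in Case 2, we immediately read off $W_v^0 = W_v(k) < 1$, where $k = \ell^0(v) > 0$.

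For the second part, I plan to bound $D_v(0,k)$ by isolating the contribution of the low-level neighbors of $v$ to the weight $W_v^0$. Using the definition of edge weights in an $(\alpha,\beta)$-partition together with $\ell^0(v) = k$, for every neighbor $u \in \N_v(0,k)$ we have $\max(\ell(u), k) = k$, hence $w(u,v) = \beta^{-k}$. Summing over these neighbors yields
\[
\sum_{u \in \N_v(0,k)} w(u,v) \;=\; D_v(0,k) \cdot \beta^{-k}.
\]
Since this sum is a partial contribution to $W_v^0 = W_v(k)$, and all edge weights are nonnegative, this partial sum is at most $W_v^0$.

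Combining the two parts, $D_v(0,k) \cdot \beta^{-k} \leq W_v^0 < 1$, so $D_v(0,k) < \beta^k$, which gives the desired bound $D_v(0,k) \leq \beta^k$. There is really no obstacle here: the first part is an inspection of the algorithm, and the second part is a one-line calculation using equation~\eqref{eq:symbol:7} for $W_v$ and the fact that edges from $v$ to neighbors at levels at most $k$ each carry weight exactly $\beta^{-k}$.
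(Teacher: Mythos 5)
Your proposal is correct and follows essentially the same route as the paper: read off $W_v^0 < 1$ from the guard in Figure~\ref{fig:vc:dirty}, then observe that every neighbor in $\N_v(0,k)$ contributes exactly $\beta^{-k}$ to $W_v^0$, giving $D_v(0,k)\cdot\beta^{-k} \le W_v^0 < 1$. (Your version is actually stated a bit more cleanly than the paper's, which loosely writes $w^0(u,v) \ge \beta^{-k}$ for all $u \in \N_v$ before restricting the sum to $\N_v(0,k)$, where the equality holds.)
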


\begin{proof}
As FIX($v$) calls the subroutine MOVE-DOWN($v$), Step~03 (Figure~\ref{fig:vc:dirty}) ensures that $W_v^0 = W_v(k) < 1$. Since $\ell^0(v) = k$, we have $w^0(u,v) \geq \beta^{-k}$ for all $u \in \N_v$. We conclude that: 
$$1 > W_v^0 \geq \sum_{u \in \N_v(0,k)} w^0(u,v) \geq \beta^{-k} \cdot D_v(0,k).$$ Thus, we get $D_v(0,k) \leq \beta^k$.
\end{proof}

\begin{lemma}
\label{lm:case3:runtime}
We have $T \leq \beta^k$.
\end{lemma}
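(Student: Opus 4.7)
The plan is straightforward: combine the runtime bound for MOVE-DOWN established in Lemma~\ref{lm:runtime:move-down} with the degree bound just established in Claim~\ref{cl:FIX:case3:degree}. Since we are in Case 2, the call to FIX($v$) invokes MOVE-DOWN($v$) in Step~04 of Figure~\ref{fig:vc:dirty}, and everything else FIX($v$) does is $O(1)$ work (the tests in Steps 01 and 03).

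First I would apply Lemma~\ref{lm:runtime:move-down}, with $k = \text{{\sc Level}}[v]$ at the moment MOVE-DOWN($v$) is called, to conclude that MOVE-DOWN($v$) takes $\Theta(1 + D_v(0,k))$ time. Then I would invoke Claim~\ref{cl:FIX:case3:degree}, which gives $D_v(0,k) \leq \beta^k$. Substituting this into the runtime bound yields $T = \Theta(1 + D_v(0,k)) \leq \Theta(1 + \beta^k)$, and since $\beta > 1$ and $k \geq 1$ (the decrement only occurs when $\ell(v) > 0$, as guaranteed by Step~03 of Figure~\ref{fig:vc:dirty}), we have $1 \leq \beta^k$, so $T = O(\beta^k)$, matching the claimed bound $T \leq \beta^k$ up to the constant implicitly absorbed by $T$ as in Lemma~\ref{lm:FIX:case2:1}.

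There is no real obstacle here: the lemma is a one-line consequence of combining the two preceding results. The only subtlety is that the lemma as stated uses $\leq \beta^k$ without an explicit constant, which is consistent with the convention used throughout this section (cf.\ Lemma~\ref{lm:FIX:case2:1}, which similarly writes $T \leq 1 + D_v(0,k)$), where $T$ is treated up to fixed multiplicative constants absorbed into the accounting scheme; the bound $\beta^k$ is what matters for the amortized analysis in subsequent lemmas.
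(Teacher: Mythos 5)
Your proof is correct and follows exactly the same route as the paper: apply Lemma~\ref{lm:runtime:move-down} to bound the cost of MOVE-DOWN($v$) by $\Theta(1 + D_v(0,k))$, then plug in the bound $D_v(0,k) \leq \beta^k$ from Claim~\ref{cl:FIX:case3:degree}. Your explicit note that $k \geq 1$ (so $1 \leq \beta^k$) and your remark about the constant conventions are both accurate and merely make explicit what the paper leaves tacit.
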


\begin{proof}
FIX($v$) calls the subroutine  MOVE-DOWN($v$), which runs for $\Theta(1+D_v(0,k))$ time by Lemma~\ref{lm:runtime:move-down}.\footnote{Since $D_v(0,k)$ may be zero, we bound the runtime of MOVE-DOWN($v$) by $\Theta(1+D_v(0,k))$ instead of $\Theta(D_v(0,k))$.} The lemma now follows from Claim~\ref{cl:FIX:case3:degree}.
\end{proof}

\begin{lemma}
\label{lm:FIX:case3:node:u}
For every node $u \in \N_v$, we have $\delta \Psi(u)  \geq 0$. 
\end{lemma}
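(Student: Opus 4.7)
\begin{proofof}{Lemma~\ref{lm:FIX:case3:node:u} (proposal)}
The plan is to observe that MOVE-DOWN($v$) can only \emph{raise} the weight of the edge $(u,v)$, and consequently can only raise $W_u$, which in turn can only lower the potential $\Psi(u)$ since $\Psi$ depends on $W_u$ through the non-increasing term $\max(0,\alpha - W_u)$.

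First, I would use equation~\ref{eq:nochange:2} to note that $\ell(u)$ does not change during the call, so the prefactor $\epsilon(L-\ell(u))$ and the coefficient $\beta^{\ell(u)+1}/(\beta-1)$ in the formula for $\Psi(u)$ given by equation~\ref{eq:inv:vc:potential:node} are identical before and after (and $u$ remains non-passive, since the edge $(u,v)$ still exists). Consequently
\[
\delta \Psi(u) \;=\; \frac{\beta^{\ell(u)+1}}{\beta-1}\Bigl(\max(0,\alpha - W_u^0) - \max(0,\alpha - W_u^1)\Bigr),
\]
and it suffices to show $W_u^1 \geq W_u^0$.

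Next, I would compare weights edge by edge for edges incident to $u$. For any edge $(u,w)$ with $w \neq v$, neither endpoint changes level during FIX($v$), so $w^0(u,w) = w^1(u,w)$. For the edge $(u,v)$, write $w^0(u,v) = \beta^{-\max(\ell(u),k)}$ and $w^1(u,v) = \beta^{-\max(\ell(u),k-1)}$. Splitting on whether $\ell(u) \geq k$ (in which case both maxima equal $\ell(u)$ and the weight is unchanged) or $\ell(u) \leq k-1$ (in which case the weight strictly increases from $\beta^{-k}$ to $\beta^{-(k-1)}$), we get $w^1(u,v) \geq w^0(u,v)$ in every case. Summing over neighbors yields $W_u^1 \geq W_u^0$.

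Combining the two displays, $\max(0,\alpha - W_u^1) \leq \max(0,\alpha - W_u^0)$, so $\delta \Psi(u) \geq 0$. There is no real obstacle here; the argument is just bookkeeping against the definition of $\Psi$ in Invariant~\ref{inv:vc:potential:node}, and the only point to be careful about is to verify that $u$ does not become passive (which is immediate, as $(u,v)$ remains in $E$ throughout the call).
\end{proofof}
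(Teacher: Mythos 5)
Your proposal is correct and follows essentially the same approach as the paper's proof: in both, the key observation is that lowering $\ell(v)$ from $k$ to $k-1$ can only raise $w(u,v)$, hence raise $W_u$, hence shrink the term $\max(0,\alpha-W_u)$, hence lower $\Psi(u)$, giving $\delta\Psi(u)=\Psi^0(u)-\Psi^1(u)\geq 0$. Your version spells out the edge-by-edge comparison, the case split on $\ell(u)$ versus $k$, and the non-passiveness of $u$, all of which the paper leaves implicit, but there is no difference in substance.
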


\begin{proof}
Fix any node $u \in \N_v$. As the level of the node $v$ decreases from $k$ to $k-1$, we infer that $w^0(u,v) \leq w^1(u,v)$, and, accordingly, we get $W_u^0 \leq W_u^1$. Since $\Psi(u) = \epsilon \cdot (L-\ell(u)) + \beta^{\ell(u)} \cdot \max\left(0,\alpha - W_u\right)$, we derive that $\Psi^0(u) \geq \Psi^1(u)$. Thus, we have $\delta \Psi(u) = \Psi^0(u) - \Psi^1(u) \geq 0$.
\end{proof}

\begin{lemma}
\label{lm:FIX:case3:edge}
For every node $u \in \N_v$, we have:
\begin{eqnarray*}
\delta \Phi(u,v)  = 
\begin{cases}
0 &  \text{ if } u \in \N_v(k,L); \\
-(1+\epsilon)  &  \text{ if } u \in \N_v(0,k-1); 
\end{cases} 
\end{eqnarray*} 
\end{lemma}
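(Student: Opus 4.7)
The plan is to compute $\delta \Phi(u,v) = \Phi^0(u,v) - \Phi^1(u,v)$ directly from Invariant~\ref{inv:vc:potential:edge}, by tracking how $\max(\ell(u), \ell(v))$ changes when $\ell(v)$ drops from $k$ to $k-1$ while $\ell(u)$ stays fixed (by equation~\ref{eq:nochange:2}). Since by Invariant~\ref{inv:vc:potential:edge} we have $\Phi(u,v) = (1+\epsilon)(L - \max(\ell(u), \ell(v)))$, the change is
\[
\delta \Phi(u,v) = (1+\epsilon) \cdot \bigl[\max(\ell(u), k-1) - \max(\ell(u), k)\bigr].
\]

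First I would handle the case $u \in \N_v(k, L)$, so $\ell(u) \geq k$. Then $\max(\ell(u), k-1) = \ell(u) = \max(\ell(u), k)$, and the bracketed expression vanishes, giving $\delta \Phi(u,v) = 0$. Next, for $u \in \N_v(0, k-1)$, so $\ell(u) \leq k - 1$, we have $\max(\ell(u), k-1) = k - 1$ and $\max(\ell(u), k) = k$, giving the bracketed expression the value $-1$, so $\delta \Phi(u,v) = -(1+\epsilon)$.

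There is no real obstacle here; this is a direct calculation from the definition of $\Phi$. The only subtlety worth noting is that the two cases $u \in \N_v(k,L)$ and $u \in \N_v(0,k-1)$ together cover all of $\N_v$, since $v$ is being moved \emph{down}, not up, so we do not need a separate boundary case at level $k$ the way Case~1 does.
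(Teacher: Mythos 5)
Your proof is correct and matches the paper's own argument: both compute $\delta \Phi(u,v)$ directly from Invariant~\ref{inv:vc:potential:edge} by a case analysis on whether $\ell(u) \geq k$ or $\ell(u) \leq k-1$, using that $\ell(u)$ is unchanged while $\ell(v)$ drops from $k$ to $k-1$. Your closed-form expression $(1+\epsilon)\bigl[\max(\ell(u),k-1)-\max(\ell(u),k)\bigr]$ is just a compact way of writing the same two-case computation, and it correctly yields the stated values under the paper's convention that $\delta$ denotes the net decrease $\Phi^0(u,v)-\Phi^1(u,v)$.
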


\begin{proof}
Fix any node $u \in \N_v$. We prove the lemma by considering two possible scenarios.
\begin{enumerate}
\item We have $u \in \N_v(k,L)$. As FIX($v$) decreases the level of the node $v$ from $k$ to $k-1$,  we infer that  $\Phi^0(u,v) = \Phi^1(u,v) = (1+\epsilon) \cdot (L - \ell(u))$. Hence, we get $\delta \Phi(u,v) = \Phi^1(u,v) - \Phi^0(u,v) = 0$.
\item We have $u \in \N_v(0,k-1)$. Since FIX($v$) decreases the level of node $v$ from $k$ to $k-1$,  we infer that $\Phi^0(u,v) =  (1+\epsilon) \cdot (L - k)$ and $\Phi^1(u,v) = (1+\epsilon) \cdot (L-k+1)$. Hence, we get $\delta \Phi(u,v) = \Phi^1(u,v) - \Phi^0(u,v) = -(1+\epsilon)$.
\end{enumerate}
\end{proof}

We partition $W_v^0$ into two parts: $x$ and $y$. The first part denotes the contributions towards $W_v^0$ by the neighbors of $v$ that lie below level $k$, while the second part denotes the contribution towards $W_v^0$ by the neighbors of $v$ that lie on or above level $k$. Thus, we get the following equations.
\begin{eqnarray}
W_v^0 & = & x+ y \leq 1 \label{eq:part:1} \\
x  & = & \sum_{u \in \N_v(0,k-1)} w^0(u,v) = \beta^{-k} \cdot D_v(0,k-1) \label{eq:part:2} \\
y & = & \sum_{u \in \N_v(k,L)} w^0(u,v) \label{eq:part:3}
\end{eqnarray}

\begin{lemma}
\label{lm:FIX:case3:edge:sum}
We have $\sum_{u \in \N_v} \delta \Phi(u,v) =  -(1+\epsilon) \cdot x \cdot \beta^k$.
\end{lemma}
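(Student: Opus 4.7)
The plan is to split the sum $\sum_{u \in \N_v} \delta \Phi(u,v)$ along the partition $\N_v = \N_v(0,k-1) \cup \N_v(k,L)$ and apply Lemma~\ref{lm:FIX:case3:edge} to each piece. Since $\delta \Phi(u,v) = 0$ for every $u \in \N_v(k,L)$, the contribution of the upper part vanishes, and what remains is $\sum_{u \in \N_v(0,k-1)} \delta \Phi(u,v) = -(1+\epsilon) \cdot D_v(0,k-1)$.

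Next, I would rewrite $D_v(0,k-1)$ in terms of $x$ using equation~\eqref{eq:part:2}, which gives $x = \beta^{-k} \cdot D_v(0,k-1)$, hence $D_v(0,k-1) = x \cdot \beta^k$. Substituting yields $\sum_{u \in \N_v} \delta \Phi(u,v) = -(1+\epsilon) \cdot x \cdot \beta^k$, as claimed.

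There is no real obstacle here: the lemma is essentially a bookkeeping identity that reorganizes the conclusion of Lemma~\ref{lm:FIX:case3:edge} using the definition of $x$. The only thing to be careful about is consistency of the edge-weight bookkeeping at level $k$, namely that edges from $v$ to neighbors in $\N_v(0,k-1)$ had weight $\beta^{-k}$ before the move (since $\max(\ell(u),\ell(v)) = k$ when $\ell(v)=k$ and $\ell(u) < k$), which is exactly what makes the identity in equation~\eqref{eq:part:2} hold. Once that is noted, the computation is a single line.
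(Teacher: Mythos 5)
Your proof is correct and matches the paper's argument exactly: both apply Lemma~\ref{lm:FIX:case3:edge} to reduce the sum to $-(1+\epsilon) \cdot D_v(0,k-1)$ and then substitute $D_v(0,k-1) = x \cdot \beta^k$ from equation~\eqref{eq:part:2}.
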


\begin{proof}
Lemma~\ref{lm:FIX:case3:edge} implies that $\sum_{u \in \N_v} \delta \Phi(u,v) = -(1+\epsilon) \cdot D_v(0,k-1)$. Applying equation~\ref{eq:part:2}, we infer that $D_v(0,k-1) =  x \cdot \beta^k$. The lemma follows.
\end{proof}

\begin{lemma}
\label{lm:new:1}
We have: 
$$\delta \Psi(v) = -\epsilon + (\alpha - x-y) \cdot  \left(\beta^{k+1}/(\beta-1)\right) - \max\left(0,\alpha - \beta x - y\right) \cdot \left(\beta^{k}/(\beta-1)\right).$$
\end{lemma}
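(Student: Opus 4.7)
The plan is to compute $\Psi^0(v)$ and $\Psi^1(v)$ directly from Invariant~\ref{inv:vc:potential:node} and subtract. Since $v \notin V_{\text{passive}}$ both before and after FIX($v$) (it has neighbors throughout, being dirty with $W_v < 1$ is only possible if $v$ had edges), the first case of the invariant applies in both states. With $\ell^0(v) = k$ and $\ell^1(v) = k-1$, the $\epsilon \cdot (L-\ell(v))$ term contributes exactly $-\epsilon$ to $\delta \Psi(v) = \Psi^0(v) - \Psi^1(v)$, accounting for the first term of the claimed identity.

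The main step is to compute $W_v^1$ in terms of $x$ and $y$. By definition $W_v^0 = x + y$ (equation~\ref{eq:part:1}). When $v$ drops from level $k$ to level $k-1$, the edge weight $w(u,v) = \beta^{-\max(\ell(u),\ell(v))}$ is unchanged for every $u \in \N_v(k,L)$ (since $\max(\ell(u),k-1) = \ell(u) = \max(\ell(u),k)$), so these neighbors contribute $y$ to $W_v^1$ exactly as they did to $W_v^0$. On the other hand, for each $u \in \N_v(0,k-1)$ the edge weight rises from $\beta^{-k}$ to $\beta^{-(k-1)}$, i.e.\ by a factor of $\beta$, so their total contribution to $W_v^1$ is $\beta x$. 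Hence $W_v^1 = \beta x + y$.

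Finally, Claim~\ref{cl:FIX:case3:degree} gives $W_v^0 = x + y < 1 \le \alpha$, so $\max(0,\alpha - W_v^0) = \alpha - x - y$. Substituting into the definition of $\Psi$ yields
\[
\Psi^0(v) = \epsilon(L-k) + \frac{\beta^{k+1}}{\beta-1}\,(\alpha - x - y),
\]
\[
\Psi^1(v) = \epsilon(L-k+1) + \frac{\beta^{k}}{\beta-1}\,\max\bigl(0,\alpha - \beta x - y\bigr),
\]
and taking $\delta \Psi(v) = \Psi^0(v) - \Psi^1(v)$ gives exactly the claimed expression. I do not see any real obstacle here beyond being careful with the edge-weight bookkeeping when $v$ moves down; the only subtlety worth flagging explicitly is why one may not simplify $\max(0,\alpha - \beta x - y)$, namely that $\beta x + y$ may exceed $\alpha$ (this is precisely the case where $v$ overshoots and becomes dirty on the high side after the move, which is handled separately in later lemmas).
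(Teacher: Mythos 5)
Your proof is correct and follows essentially the same route as the paper's: compute $\Psi^0(v)$ and $\Psi^1(v)$ directly from Invariant~\ref{inv:vc:potential:node}, establish $W_v^1 = \beta x + y$ by tracking how each edge weight changes when $v$ drops a level, use $W_v^0 = x+y < 1 \le \alpha$ to drop the $\max$ in the $\Psi^0(v)$ term, and subtract. The bookkeeping and the closing remark on why $\max(0,\alpha-\beta x - y)$ cannot be simplified are both accurate.
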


\begin{proof}
Equation~\ref{eq:part:1} states that $W_v^0 = x+y < 1$. Since $\ell^0(v) = k$, we get:
\begin{equation}
\label{eq:FIX:case3:deltav:1}
\Psi^0(v) = \epsilon  (L-k) + (\alpha - x - y) \cdot \left(\beta^{k+1}/(\beta-1)\right)  
\end{equation}
As the node $v$ decreases its level from $k$ to $k-1$, we have: 
\begin{eqnarray*}
w^1(u,v)  = \begin{cases} 
\beta \cdot w^0(u,v) & \text{ if } u \in \N_v(0,k-1); \\ 
 w^0(u,v) & \text{ if } u \in \N_v(k,L)
\end{cases}
\end{eqnarray*} 
Accordingly, we have $W_v^1 = \beta \cdot x + y$, which implies the following equation.
\begin{equation}
\label{eq:FIX:case3:deltav:2}
\Psi^1(v) = \epsilon  (L-k+1) + \max(0,\alpha - \beta x -y) \cdot \left(\beta^{k}/(\beta-1)\right) 
\end{equation}
Since $\delta \Psi(v) = \Psi^0(v) - \Psi^1(v)$, the lemma now follows from equations~\ref{eq:FIX:case3:deltav:1} and~\ref{eq:FIX:case3:deltav:2}.
\end{proof}

\begin{proof}[Proof of Theorem~\ref{th:vc:analyze:FIX:main} (for Case 2)]
We consider two possible scenarios depending upon the value of $(\alpha - \beta x - y)$. We show that in each case  $\delta \B \geq \beta^k$. The theorem (for Case 2) then follows from Lemma~\ref{lm:case3:runtime}.
\begin{enumerate}
\item Suppose that $(\alpha - \beta x - y) < 0$. From  Lemmas~\ref{lm:FIX:case3:node:u},~\ref{lm:FIX:case3:edge:sum},~\ref{lm:new:1} and equation~\ref{eq:vc:change:1}, we derive:
\begin{eqnarray}
\epsilon \cdot \delta \B & = & \sum_{u \in \N_v} \delta \Psi(u) + \sum_{u \in \N_v} \delta \Phi(u,v) + \delta \Psi(v) \nonumber \\
& \geq & 0 - (1+\epsilon) \cdot x \cdot \beta^k - \epsilon + (\alpha - x - y) \cdot \beta^{k+1}/(\beta-1) \nonumber \\
& \geq & -\epsilon - (1+\epsilon)  \cdot \beta^k + (\alpha -1) \cdot \beta^{k+1}/(\beta-1) \qquad \qquad \qquad \qquad \qquad (\text{equation~\ref{eq:part:1}}) \nonumber \\
& = & -\epsilon + \frac{\beta^k}{(\beta-1)} \cdot \left\{ (\alpha-1) \cdot \beta - (1+\epsilon) (\beta-1) \right\} \nonumber \\
& = & -\epsilon + 2 \cdot (1+\epsilon) \cdot \beta^k \qquad \qquad \qquad \qquad \qquad \qquad \qquad (\text{since } \alpha = 1+3\epsilon \text{ and } \beta = 1+\epsilon) \nonumber \\
& \geq & \epsilon \cdot \beta^k \qquad \qquad \qquad \qquad \qquad \qquad \qquad \qquad \qquad  \qquad \qquad \qquad   (\text{since } \beta > 1, \epsilon \leq 1) \nonumber 
\end{eqnarray}
\item Suppose that $(\alpha - \beta x - y) \geq 0$. From  Lemmas~\ref{lm:FIX:case3:node:u},~\ref{lm:FIX:case3:edge:sum},~\ref{lm:new:1} and equation~\ref{eq:vc:change:1}, we derive:
\begin{eqnarray*}
\epsilon \cdot \delta \B & = & \sum_{u \in \N_v} \delta \Psi(u) + \sum_{u \in \N_v} \delta \Phi(u,v) + \delta \Psi(v) \\
& \geq & 0 - (1+\epsilon) \cdot x \cdot \beta^k - \epsilon + (\alpha - x - y) \cdot \beta^{k+1}/(\beta-1) -  (\alpha - \beta x - y) \cdot \beta^{k}/(\beta-1) \\
& = & -\epsilon +  \frac{\beta^k}{(\beta-1)} \cdot \big\{(\alpha - x - y) \cdot \beta - (1+\epsilon) \cdot x \cdot (\beta-1) - (\alpha -\beta x -y)\big\} \\
& = &  -\epsilon +  \frac{\beta^k}{(\beta-1)} \cdot \big\{(\alpha - x - y) \cdot (\beta - 1)   - \epsilon \cdot x \cdot (\beta -1)\big\} \\ 
& \geq &  -\epsilon +  \frac{\beta^k}{(\beta-1)} \cdot \big\{(\alpha - 1) \cdot (\beta -1) - \epsilon \cdot (\beta -1)\big\} \qquad \qquad  (\text{since }  0 \leq x+y \leq 1) \\
& = & -\epsilon + 2 \cdot \epsilon \cdot \beta^k    \qquad \qquad \qquad \qquad \qquad \qquad \qquad (\text{since } \alpha = 1+3\epsilon \text{ and } \beta = 1+\epsilon) \\
& \geq & \epsilon \cdot \beta^k  \qquad \qquad \qquad \qquad \qquad \qquad \qquad \qquad \qquad  \qquad \qquad \qquad   (\text{since } \beta > 1)
\end{eqnarray*}
\end{enumerate}

\end{proof}

\section{Dynamic Matching: Preliminaries}
\label{appendix:matching:1}

We are given an input graph $G = (V,E)$ that is being updated dynamically through a sequence of edge insertions/deletions. We want to maintain an approximately maximum matching in $G$. We will present three different algorithms for this problem. They are described in Sections~\ref{sec:matching:sqrtn},~\ref{sec:3matching} and~\ref{sec:4matching}. All these algorithms, however, will use two key ideas.
\begin{enumerate}
\item  It is easy to maintain a good approximate matching in a  bounded degree graph (see Section~\ref{sec:maxdeg}).
\item  Every graph contains a subgraph of bounded degree, called its {\em kernel}, that approximately preserves the size of the maximum matching (see Section~\ref{sec:invariant}).
\end{enumerate}
\noindent In Section~\ref{sec:preprocess}, we give a static algorithm for building a kernel in a graph. In Section~\ref{sec:data}, we present the data structures that will be used in Sections~\ref{sec:matching:sqrtn},~\ref{sec:3matching} and~\ref{sec:4matching} for maintaining a kernel in a dynamic setting. 

\paragraph{Query time.} In this paper, all the data structures for dynamic matching   explicitly maintain the set of matched edges. Accordingly, using appropriate pointers, we can support the following queries.
\begin{itemize}
\item Report the size of the matching $M$ maintained by the data structure in $O(1)$ time.
\item Report the edges in $M$ in $O(1)$ time per edge.
\item In $O(1)$ time, report whether or not a given edge $(u,v)$ is part of the matching.
\item In $O(1)$ time, report whether or not a given node $u$ is matched in $M$, and if yes, in $O(1)$ time report the node it is matched to.
\end{itemize}

\subsection{Maintaining an approximate matching in a bounded degree graph.}
\label{sec:maxdeg}

It is well known that maintaining a maximal matching is easy in a bounded degree graph. Still, for the sake of completeness,  we state this formally in the theorem below.

\begin{theorem}
\label{th:maxdeg:1}
Consider a dynamic graph $\G = (\V,\E)$, where the maximum degree of a node is always upper bounded by $\Delta$. Then we have an algorithm for maintaining a maximal matching $M$ in $\G$ that handles each edge insertion in $O(1)$ time and each edge deletion in $O(\Delta)$  time. 
\end{theorem}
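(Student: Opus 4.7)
The plan is to maintain, alongside the dynamic graph $\G$, the following simple data structures: for every node $v \in \V$, a doubly linked list $\text{{\sc Adj}}[v]$ containing the current neighbors of $v$; a bit $\text{{\sc Matched}}[v] \in \{0,1\}$ indicating whether $v$ is currently matched in $M$; and, when $\text{{\sc Matched}}[v] = 1$, a pointer $\text{{\sc Mate}}[v]$ to the node that $v$ is matched to. For every edge $(u,v) \in \E$, I store cross-pointers into its two positions in $\text{{\sc Adj}}[u]$ and $\text{{\sc Adj}}[v]$, so that it can be inserted or removed from the adjacency lists in $O(1)$ time. The matching $M$ itself is stored as a doubly linked list with pointers from each matched edge to its position in $M$, which supports the $O(1)$ query operations discussed above.

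To handle the insertion of an edge $(u,v)$, I would first add $(u,v)$ to $\text{{\sc Adj}}[u]$ and $\text{{\sc Adj}}[v]$ in $O(1)$ time. Then I would inspect $\text{{\sc Matched}}[u]$ and $\text{{\sc Matched}}[v]$: if both are $0$, I add $(u,v)$ to $M$ and set $\text{{\sc Matched}}[u] = \text{{\sc Matched}}[v] = 1$, $\text{{\sc Mate}}[u] = v$, $\text{{\sc Mate}}[v] = u$; otherwise I do nothing. Maximality is preserved because the only edge newly present in $\E$ is $(u,v)$, and the only way it could violate maximality is if both endpoints are free, which is exactly the case I handle. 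Each step runs in $O(1)$ time.

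To handle the deletion of an edge $(u,v)$, I first remove $(u,v)$ from the two adjacency lists using the stored cross-pointers. If $(u,v) \notin M$, nothing else needs to be done: the matching is still valid, and no new maximality violation has appeared since $\E$ has only shrunk. If $(u,v) \in M$, I remove $(u,v)$ from $M$ and mark both $u$ and $v$ as free. At this point the only possible violations of maximality involve an edge incident to $u$ or to $v$ whose other endpoint is also free. I scan through $\text{{\sc Adj}}[u]$; if I encounter a node $z$ with $\text{{\sc Matched}}[z] = 0$, I add $(u,z)$ to $M$ and update the status bits and mate pointers. I then do the same scan on $\text{{\sc Adj}}[v]$ (if $v$ is still free). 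Each scan examines at most $\Delta$ neighbors and does $O(1)$ work per neighbor, so the total cost is $O(\Delta)$.

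The main thing to verify is that this procedure really restores maximality after a deletion. Let $M'$ denote the matching after the deletion step. Any edge $(x,y) \in \E$ with both endpoints free under $M'$ would have to be incident to a node that became free as a result of the deletion, i.e., $x \in \{u,v\}$ or $y \in \{u,v\}$, since the matching status of every other node is unchanged. The scans over $\text{{\sc Adj}}[u]$ and $\text{{\sc Adj}}[v]$ explicitly check for such an edge and match it if one exists, so no violation can remain. Combined with the $O(1)$ and $O(\Delta)$ time bounds derived above, this yields the claimed theorem.
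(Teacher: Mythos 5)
Your proposal is correct and takes essentially the same approach as the paper: insert-time check whether both endpoints are free, and on deletion of a matched edge rescan the adjacency lists of the two newly freed endpoints for a free partner. You simply spell out the supporting data structures and the maximality argument in more detail than the paper does.
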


\begin{proof}
When an edge $(u,v)$ is inserted into the graph $\G$, we check if both the nodes $u, v$ are unmatched in $M$. If yes, then we set $M \leftarrow M \cup \{(u,v)\}$. If no, then we do not change the matching $M$.

When an edge $(u,v) \in M$ is deleted from the graph $\G$, we first set $M \leftarrow M \setminus \{(u,v)\}$. Next, we  go through all the neighbors of $u$ to find out if any one of them is unmatched in $M$. If there  exists such a node $x$, with $(u,x) \in \E \setminus M$, then we set $M \leftarrow M \cup \{(u,x)\}$. Finally, we perform exactly the same operations on the node $v$.

When an edge $(u,v) \in \E \setminus M$ is deleted from the graph $\G$, we leave the matching $M$ as it is.

Clearly, this algorithm maintains a maximal matching, and handles each edge insertion (resp.~deletion) in $O(1)$ (resp.~$O(\Delta)$) time. 
\end{proof}

We now recall the definition of an augmenting path in a graph with respect to a given matching.

\begin{definition}
\label{def:app:path} Consider any graph $\G = (\V,\E)$ and any matching $M \subseteq \E$ in $\G$. For any nonnegative integer $k$, an {\em augmenting path}  of length $2k+1$ in $\G$, w.r.t.~$M$, is a path $(v_1, \ldots,  v_{2k+1})$ that satisfies the following properties.
\begin{itemize}
\item We have $(v_1, v_2) \in \E \setminus M$ and $(v_{2k},v_{2k+1}) \in \E \setminus M$.
\item For every $i \in \{1, \ldots, 2k-1\}$, if $(v_i,v_{i+1}) \in \E \setminus M$, then $(v_{i+1},v_{i+2}) \in M$. And  if $(v_i,v_{i+1}) \in  M$, then $(v_{i+1},v_{i+2}) \in \E \setminus M$.
\end{itemize}
\end{definition}

The following theorem is well known (see e.g.,~\cite{HopcroftK71}).

\begin{theorem}
\label{th:app:path}
If  $M$ is a matching in a graph $\G$ such that every augmenting path in $\G$ w.r.t.~$M$ has  length at least $(2k+1)$, then $M$ is a $(1+1/k)$-approximate maximum matching in $\G$.
\end{theorem}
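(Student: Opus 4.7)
The plan is to use the classical symmetric-difference argument. Let $M^*$ be a maximum matching in $\G$, and consider the edge set $M \oplus M^*$. Since every vertex is incident to at most one edge of $M$ and at most one edge of $M^*$, every vertex has degree at most two in $M \oplus M^*$, so its connected components are simple paths and even cycles in which edges alternate between $M$ and $M^*$. In any such component, cycles contain exactly as many $M$-edges as $M^*$-edges, and a path's contribution to $|M^*|-|M|$ is $+1$ if both of its endpoint-edges lie in $M^*$, $-1$ if both lie in $M$, and $0$ otherwise. The paths contributing $+1$ are precisely the augmenting paths for $M$ in $\G$, while the paths contributing $-1$ are augmenting paths for $M^*$.

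Let $a$ denote the number of $M$-augmenting paths in $M \oplus M^*$ and $b$ the number of $M^*$-augmenting paths. The above accounting yields $|M^*| - |M| = a - b \leq a$. Now I would use the length hypothesis: each $M$-augmenting path has length at least $2k+1$, so it contains at least $k$ edges from $M$ (and $k+1$ edges from $M^*$). Because the components of $M \oplus M^*$ are vertex-disjoint, the $M$-edges in distinct augmenting paths are disjoint, so the total number of $M$-edges contained in the $a$ augmenting paths is at least $ak$, and this cannot exceed $|M|$. Thus $a \leq |M|/k$.

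Combining the two inequalities gives
\[
|M^*| \;\leq\; |M| + a \;\leq\; |M| + \frac{|M|}{k} \;=\; \left(1+\frac{1}{k}\right)|M|,
\]
which is exactly the claim that $M$ is a $(1+1/k)$-approximate maximum matching. There is no real obstacle here: the only subtle step is confirming that a path in $M \oplus M^*$ whose endpoint-edges both belong to $M^*$ is indeed an augmenting path for $M$ in the sense of Definition~\ref{def:app:path}, and conversely that every $M$-augmenting path appears as such a component. This follows because each internal vertex of such a path is incident to one $M$-edge and one $M^*$-edge (so it is matched in $M$), while the two endpoints are incident only to the $M^*$-edge of the path in $M \oplus M^*$ and hence are $M$-unmatched, matching the alternation and free-endpoint conditions in the definition.
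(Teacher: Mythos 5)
Your proof is correct: it is the classical symmetric-difference argument (components of $M \oplus M^*$ are alternating paths and even cycles, each $M$-augmenting component of length at least $2k+1$ contains at least $k$ edges of $M$, hence $|M^*|-|M| \leq |M|/k$), which is precisely the argument behind the citation, since the paper states Theorem~\ref{th:app:path} as well known (via~\cite{HopcroftK71}) and gives no proof of its own. One small remark: the "conversely" clause at the end — that every $M$-augmenting path of $\G$ appears as a component of $M \oplus M^*$ — is not true in general and, fortunately, is never used; your bound only needs the forward direction, namely that every component contributing $+1$ is an $M$-augmenting path in $\G$ and therefore has length at least $2k+1$.
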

Following the approach from~\cite{NeimanS13}, we show in the 
next theorem shows that in $O(\Delta)$ worst-case update time we can also maintain a $3/2$-approximate matching in a graph with maximum degree $\Delta$.

\begin{theorem}
\label{th:maxdeg:2}
Consider a dynamic graph $\G = (\V,\E)$ that is being  updated through a sequence of edge insertions/deletions. Furthermore, the maximum degree of a node in  $\G$ is always upper bounded by $\Delta$. Then we have a dynamic algorithm for maintaining a matching $M$ in $\G$ that has the following properties.
\begin{itemize}
\item The algorithm guarantees that every augmenting path in $\G$ (w.r.t.~$M$) has length at least five.
\item Each edge insertion into  $\G$ is handled in $O(\Delta)$ worst-case  time.
\item Each edge deletion  from $\G$ is handled in $O(\Delta)$ worst-case  time. 
\end{itemize}
\end{theorem}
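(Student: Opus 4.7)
\textbf{Proof proposal for Theorem~\ref{th:maxdeg:2}.}
The plan is to maintain, for every node $x \in \V$, two doubly linked lists $N_x$ and $F_x$, storing the neighbors of $x$ and the unmatched (free) neighbors of $x$ respectively, with cross pointers that allow moving an element between the lists in $O(1)$ time. In addition, we store the match $\mu(x) \in \V \cup \{\bot\}$ of $x$. I would maintain the following two invariants after each update: (I1) $M$ is maximal, i.e.\ no edge of $\E$ has two free endpoints; (I2) for every edge $(a,b)\in M$, at least one of $F_a, F_b$ is empty. Together (I1) and (I2) forbid augmenting paths of length $1$ and $3$, which is exactly what the theorem asks for.

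To handle an \emph{insertion} of $(u,v)$, first update the four relevant lists in $O(1)$. If both $u$ and $v$ are free then (I1) was violated, so match them: setting $(u,v)\in M$ cannot produce a new length-$3$ augmenting path because maximality of $M$ before the update implies that both $u$ and $v$ previously had no free neighbors. If exactly one endpoint, say $u$, is free and $v$ is matched to $v'=\mu(v)$, then I inspect $F_{v'}$: if it contains some $w\neq u$, I perform the augmentation $u$-$v$-$v'$-$w$ (i.e.\ delete $(v,v')$ and insert $(u,v),(v',w)$ into $M$) and update all neighbor lists in $O(\Delta)$ time. If both $u,v$ are matched, there is nothing to do. To handle a \emph{deletion} of $(u,v)$, update the lists. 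If $(u,v)\notin M$ we are done. Otherwise remove $(u,v)$ from $M$; then separately for each of $u$ and $v$: (a) if its $F$-list is non-empty, match it to a free neighbor; (b) otherwise scan its neighbors $z$ and, for each matched $z$ with $z'=\mu(z)$, test in $O(1)$ whether $F_{z'}\setminus\{u,v\}\neq\emptyset$; if some $w$ is found, perform the augmentation.

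The correctness argument I would write runs as follows. Every step above touches at most $4$ vertices and at most $3$ edges of $M$, and the crucial observation is that any vertex that becomes newly matched by the procedure was free just before, so by (I1) in the prior state it had no free neighbors; hence the two invariants are re-established after each augmentation and no cascade is possible. The only other free-neighbor-set changes happen because $u$ or $v$ changes status, but these can only \emph{shrink} other vertices' $F$-lists, so no matched edge not touched by the procedure can go from satisfying (I2) to violating it. Finally, for the running-time bound, each phase performs $O(1)$ scans of lists of size at most $\Delta$ (scanning neighbors of $u$ or of $v$, or reading a single element from an $F$-list) and $O(1)$ pointer-level updates; hence the worst-case cost per update is $O(\Delta)$.

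The step I expect to be the main obstacle is proving rigorously that \emph{one} augmentation at $u$ followed by \emph{one} at $v$ already suffices to restore (I2), rather than triggering a chain of new violations. I plan to handle this by a careful case analysis: enumerate the possible adjacencies among $u,v,z,z',w$, use the fact that $u$ and $w$ (respectively $v$ and the free partner chosen for $v$) were free in the previous state to invoke (I1), and verify in each case that the newly matched edges satisfy (I2) and that no previously untouched matched edge is harmed. Once this local case analysis is complete, the bookkeeping of the data structure and the $O(\Delta)$ bound follow immediately.
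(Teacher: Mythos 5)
Your data structures and the basic insertion/deletion cases match the paper's proof closely (neighbor lists, free-neighbor lists, match pointers, the same three insertion cases, and the same find-mate/resolve split on deletion), but there is a concrete bug in step (b) of your deletion routine: you test $F_{z'}\setminus\{u,v\}\neq\emptyset$. When resolving a length-3 path from $u$, the only node that must be excluded from the candidate free endpoints is $u$ itself; also excluding $v$ can leave a length-3 augmenting path unresolved. Take $V=\{u,v,z,z'\}$, $E=\{(u,v),(u,z),(z,z'),(z',v)\}$, $M=\{(u,v),(z,z')\}$, and delete $(u,v)$. Both $u$ and $v$ become free with no free neighbors, so step (a) fails for each. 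For $u$, step (b) examines $z$ and checks $F_{z'}\setminus\{u,v\}=\{v\}\setminus\{u,v\}=\emptyset$, so no augmentation; symmetrically step (b) fails for $v$, leaving the length-3 augmenting path $u$-$z$-$z'$-$v$ in place. The paper's $\text{{\sc Resolve}}(u)$ only excludes $u$ (it checks $\text{{\sc Free-Neighbors}}(y)\setminus\{u\}$) and therefore correctly finds $v$ and augments. Replacing $\{u,v\}$ by just the node currently being processed fixes this.

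A second, smaller issue: invariant (I2) as you state it is strictly stronger than ``no augmenting path of length at most 3'' and is not preserved by your own algorithm. If $F_a=F_b=\{w\}$ for the same node $w$, (I2) is violated yet there is no augmenting path through $(a,b)$ (the endpoints would coincide), and your insertion Case 2 correctly leaves such configurations alone since you require $w\neq u$. So you should state the invariant directly as ``every augmenting path w.r.t.~$M$ has length at least five'' (equivalently: for every $(a,b)\in M$ there is no pair $w\in F_a$, $z\in F_b$ with $w\neq z$), which is what the paper argues about and what a corrected version of your algorithm would actually maintain.
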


\subsubsection{Proof of Theorem~\ref{th:maxdeg:2}}
\label{sec:maxdeg:2}

The algorithm maintains the following data structures for each node $v \in \V$.

\begin{itemize}
\item A doubly linked list $\text{{\sc Neighbors}}(v)$. It  consists of the set of nodes that are adjacent to $v$ in $\G$.
\item A doubly linked list $\text{{\sc Free-Neighbors}}(v)$. It consists of the unmatched (in $M$) neighbors of $v$.
\end{itemize}

\noindent Furthermore, the algorithm maintains two $|\V| \times |\V|$ matrices $Q_1$ and $Q_2$. For every pair of nodes $u, v$, if $u \in \text{{\sc Neighbors}}(v)$, then the entry $Q_1[u,v]$ stores a pointer to the position  of $u$ in the list  $\text{{\sc Neighbors}}(v)$. Similarly, if $u \in \text{{\sc Free-Neighbors}}(v)$, then the entry $Q_2[u,v]$ stores a pointer to the position of $u$ in the list $\text{{\sc Free-Neighbors}}(v)$. Thus,  a node $u$ can be inserted into (or deleted from) a list $\text{{\sc Neighbors}}(v)$ or $\text{{\sc Free-Neighbors}}(v)$ in constant time.

\paragraph{Handling an edge insertion.} Suppose that the edge $(u,v)$ is inserted into $\G$. We  make our data structures consistent with $\G$ by adding the node $u$ to the list $\text{{\sc Neighbors}}(v)$ and the node $v$ to the list $\text{{\sc Neighbors}}(u)$.

Before the insertion, every augmenting path in $\G$ (w.r.t $M$) had length at least five. After the insertion, some augmenting paths (containing the edge $(u,v)$) may be created that are of length one or three.  To resolve such paths, we  consider three possible scenarios.
\begin{itemize}
\item {\bf Case 1.} Both the endpoints $u$ and $v$ are unmatched in $M$. Here, we  set $M \leftarrow M \cup \{(u,v)\}$. 
\item {\bf Case 2.} One the endpoints (say $u$) is matched in $M$, while the other endpoint $v$ is unmatched in $M$. Let $x$ be the node $u$ is matched to, i.e., $(u,x) \in M$. If the list $\text{{\sc Free-Neighbors}}(x)$ is empty, then  we do nothing and terminate the procedure. Otherwise, let $y$ be a node that appears in $\text{{\sc Free-Neighbors}}(x)$. Clearly, $(v,u,x,y)$ is an augmenting path of length three. We resolve this path by setting $M \leftarrow \left(M \setminus \{(u,x)\}\right) \cup \{(u,v), (x,y)\}$. The nodes $v,y$ are now matched in $M$. To reflect this change, for every node $z \in \text{{\sc Neighbors}}(y)$, we remove $y$ from the list $\text{{\sc Free-Neighbors}}(z)$. Similarly, for every node $z \in \text{{\sc Neighbors}}(v)$, we remove $v$ from the list $\text{{\sc Free-Neighbors}}(v)$. 
\item {\bf Case 3.} Both the endpoints $u$ and $v$ are matched in $M$. In this case, the insertion of the edge $(u,v)$ does not create any augmenting path in $\G$ (w.r.t $M$) of length one or three. Hence, we do nothing and terminate the procedure.
\end{itemize}

\noindent If we are in Case 1 or in Case 3, then the procedure takes $O(1)$ time. In contrast, if we are in Case 2, then the procedure takes $O(\Delta)$ time. So the worst-case update time for handling an edge insertion is $O(\Delta)$.

\paragraph{Handling an edge deletion.} Suppose that the edge $(u,v)$ is deleted from the graph $\G$.  We first  make our data structures consistent with $\G$ by removing the node $u$ from the list $\text{{\sc Neighbors}}(v)$ and the node $v$ from the list $\text{{\sc Neighbors}}(u)$.  Next, we  consider two possible scenarios.
\begin{itemize}
\item {\bf Case 1.} $(u,v) \notin M$. In this case, no augmenting path of length one or three has been created.  Thus, we do nothing and terminate the procedure.
\item {\bf Case 2.} $(u,v) \in M$. We  set $M \leftarrow M \setminus \{(u,v)\}$. The nodes $u$ and $v$ are no longer matched in $M$. To reflect this change,  for every node $x \in \text{{\sc Neighbors}}(u)$, we add $u$ to the list $\text{{\sc Free-Neighbors}}(x)$. Similarly, for every node $x \in \text{{\sc Neighbors}}(v)$, we add $v$ to the list $\text{{\sc Free-Neighbors}}(x)$.  This may create some augmenting paths in $\G$ (w.r.t.~$M$) that are of length one or three. All such bad augmenting paths, however, start from either the node $u$ or the node $v$. To fix this problem, we call the subroutines $\text{{\sc Find-mate}}(u)$ and $\text{{\sc Find-mate}}(v)$, one after the other (see Figure~\ref{fig:resolve1}). 

If the set of unmatched neighbors of  $u$ (resp.~$v$) is nonempty, then $\text{{\sc Find-mate}}(u)$ (resp.~$\text{{\sc Find-mate}}(v)$) matches $u$ (resp.~$v$) to any one of them.  If the subroutine $\text{{\sc Find-mate}}(u)$ (resp.~$\text{{\sc Find-mate}}(v)$) succeeds in finding a mate for $u$ (resp.~$v$), then we know for sure that there is no more augmenting path in $\G$ of length one or three that starts from $u$ (resp.~$v$). In case of a failure, however, we have to deal with the possibility that there is an augmenting path of length three in the graph $\G$ that starts from $u$ (resp.~$v$).

 To take care of this issue, we finally call the subroutines $\text{{\sc Resolve}}(u)$ and $\text{{\sc Resolve}}(v)$, one after the other (see Figure~\ref{fig:resolve3}). The former (resp.~latter) subroutine resolves an augmenting path of length three, if there is any, starting from the node $u$ (resp.~$v$). At their termination, we are guaranteed that every augmenting path in $\G$ (w.r.t.~$M$) has length at least five.
\end{itemize}

\begin{figure}[htbp]
\centerline{\framebox{
\begin{minipage}{5.5in}
\begin{tabbing}
{\sc For all} $x \in \text{{\sc Neighbors}}(u)$: \\
\ \ \ \  \ \ \ \ \=  {\sc If} $x$ is unmatched in $M$, {\sc Then} \\
\> \ \ \ \ \ \  \ \ \ \ \= Set $M \leftarrow M \cup \{(u,x)\}$. \\
\> \>  {\sc For all} $z \in \text{{\sc Neighbors}}(u)$: \\
\> \> \ \ \ \ \  \ \  \ \ \ \  \= Remove $u$ from the list $\text{{\sc Free-Neighbors}}(z)$. \\
\> \>  {\sc For all} $z \in \text{{\sc Neighbors}}(x)$: \\
\> \> \> Remove $x$ from the list $\text{{\sc Free-Neighbors}}(z)$. 
\end{tabbing}
\end{minipage}
}}
\caption{\label{fig:resolve1} $\text{{\sc Find-mate}}(u)$.} 
\end{figure}

\begin{figure}[htbp]
\centerline{\framebox{
\begin{minipage}{5.5in}
\begin{tabbing}
{\sc If} $u$ is already matched in $M$, {\sc Then} \\
\ \ \ \ \ \ \ \ \ \  \= {\sc Return}. \\
 {\sc For all} $x \in \text{{\sc Neighbors}}(u)$: \\
  \>  We must have that $x$ is matched in $M$. \\
  \>   Let $y$ be the node $x$ is matched to, i.e., $(x,y) \in M$. \\
 \> {\sc If} $\text{{\sc Free-Neighbors}}(y) \setminus \{u\} \neq \emptyset$, {\sc Then} \\
   \> \ \ \ \ \ \ \ \ \ \= Consider any node $z \in \text{{\sc Free-Neighbors}}(y) \setminus \{u\}$. \\
   \> \>  Resolve the augmenting path $(z,y,x,u)$ in $\G$ (w.r.t.~$M$) by \\
   \>  \> setting $M \leftarrow \left(M \setminus \{(x,y)\}\right) \cup \{(y,z), (x,u)\}$. \\
   \>   \> {\sc For all} $w \in \text{{\sc Neighbors}}(u)$: \\
   \>  \> \ \ \ \ \ \ \ \ \ \  \= Remove $u$ from the list $\text{{\sc Free-Neighbors}}(w)$. \\
   \>   \> {\sc For all} $w \in \text{{\sc Neighbors}}(z)$: \\
   \>  \> \> Remove $z$ from the list $\text{{\sc Free-Neighbors}}(w)$. \\
 \> \> {\sc Return}.
\end{tabbing}
\end{minipage}
}}
\caption{\label{fig:resolve3} $\text{{\sc Resolve}}(u)$.} 
\end{figure}

\noindent If we are in Case 1, then the procedure runs for $O(1)$ time. In contrast, if we are in Case 2, then the procedure runs for $O(\Delta)$ time as both $\text{{\sc Find-mate}}$ and $\text{{\sc Resolve}}$ take time $O(\Delta)$. So the overall update time for handling an edge deletion is $O(\Delta)$.

\subsection{The kernel and its properties.}
\label{sec:invariant}

\newcommand{\F}{\mathcal{N}}


In the input  graph $G = (V,E)$, let $\N_v = \{u \in V : (u,v) \in E\}$ denote the set of {\em neighbors} of  $v \in V$. 

Consider a subgraph $\kappa(G) = (V, \kappa(E))$ with $\kappa(E) \subseteq E$. For all $v \in V$, define the set $\kappa(\N_v) = \{ u \in \N_v : (u,v) \in \kappa(E)\}$. If $u \in \kappa(\N_v)$, then we  say that  $u$ is a {\em friend} of  $v$ in $\kappa(G)$.  Next, the set of  nodes $V$ is partitioned into two groups: {\em tight} and {\em slack}. We denote the set of tight (resp.~slack) nodes  by $\kappa_{T}(V)$ (resp.~$\kappa_S(V)$). Thus, we have $V = \kappa_T(V) \cup \kappa_S(V)$ and $\kappa_T(V) \cap \kappa_S(V) = \emptyset$.

\begin{definition}
\label{def:kernel:appendix}
Fix any  $c \geq 1$ and any   $\epsilon \in [0,1/3)$.  The subgraph  $\kappa(G)$ is an $(\epsilon,c)$-kernel of   $G$ with respect to the partition $(\kappa_T(V), \kappa_S(V))$ iff it satisfies Invariants~\ref{inv:1}-~\ref{inv:3}.
\end{definition}

\begin{invariant}
\label{inv:1}
$|\kappa(\N_v)| \leq (1+\eps)c$ for all $v \in V$, i.e., every node has at most $(1+\eps)c$ friends. 
\end{invariant}

\begin{invariant}
\label{inv:2}
 $|\kappa(\N_v)| \geq (1-\eps)c$ for all $v \in \kappa_T(V)$, i.e., every tight node has at least $(1-\eps)c$ friends.
\end{invariant}

\begin{invariant}
\label{inv:3}
For all $u, v \in \kappa_S(V)$, if $(u,v) \in E$, then $(u,v) \in \kappa(E)$. In other words, if two slack nodes are connected by an edge in $G$, then  that edge must belong to $\kappa(G)$.
\end{invariant}

By Invariant~\ref{inv:1}, the maximum degree of a node in an $(\eps,c)$-kernel is $O(c)$. In Theorems~\ref{th:invariant} and~\ref{th:3matching:1}, we show that an $(\eps,c)$-kernel $\kappa(G)$ is basically a subgraph of $G$ with maximum degree $O(c)$ that approximately preserves the size of the maximum matching.

For ease of exposition, we  often refer to a kernel $\kappa(G)$ without explicitly mentioning the underlying partition $(\kappa_T(V),\kappa_S(V))$.  The proofs of Theorems~\ref{th:invariant},~\ref{th:3matching:1} appear in Sections~\ref{sec:th:invariant} and~\ref{sec:3matching:1} respectively.

\begin{theorem}
\label{th:invariant}
Let $M$ be a maximal matching in an $(\epsilon,c)$-kernel $\kappa(G)$. Then  $M$ is a $(4+6\eps)$-approximation to the maximum matching in  $G$.
\end{theorem}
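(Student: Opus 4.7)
The plan is to compare $M$ against a maximum matching $M^*$ of $G$ by classifying the edges of $M^*$ according to which endpoints, if any, they share with the vertex set $V_M$ of $M$, and then using Invariant~\ref{inv:3} together with the maximality of $M$ inside $\kappa(G)$ to handle the edges of $M^*$ that ``escape'' $\kappa(G)$.

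First, I would split $M^* = M^*_M \sqcup M^*_{\bar M}$, where $M^*_M$ consists of the edges of $M^*$ with at least one endpoint in $V_M$, and $M^*_{\bar M}$ consists of the edges with both endpoints outside $V_M$. Since $M^*$ is a matching, each vertex of $V_M$ is incident to at most one edge of $M^*$, giving the easy bound $|M^*_M| \leq |V_M| = 2|M|$.

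Next I would argue that every edge $(u,v) \in M^*_{\bar M}$ has at least one tight endpoint. Indeed, if both $u$ and $v$ were slack, then Invariant~\ref{inv:3} would force $(u,v) \in \kappa(E)$; but then, since neither endpoint lies in $V_M$, the matching $M \cup \{(u,v)\}$ would contradict the maximality of $M$ in $\kappa(G)$. Pick one tight endpoint per edge of $M^*_{\bar M}$ and call the resulting set $T_{\bar M} \subseteq \kappa_T(V) \setminus V_M$; by construction $|T_{\bar M}| = |M^*_{\bar M}|$.

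The crux is then a double counting of $\kappa(G)$-edges between $T_{\bar M}$ and $V_M$. For each $v \in T_{\bar M}$: $v$ is tight, hence has at least $(1-\eps)c$ friends (Invariant~\ref{inv:2}); and because $v \notin V_M$ while $M$ is maximal in $\kappa(G)$, every friend of $v$ must lie in $V_M$. This gives a lower bound of $(1-\eps)c \cdot |T_{\bar M}|$ on the edge count. For each $u \in V_M$, Invariant~\ref{inv:1} gives an upper bound of $(1+\eps)c$ friends in $T_{\bar M}$, hence an upper bound of $(1+\eps)c \cdot |V_M| = 2(1+\eps)c \cdot |M|$. Dividing yields $|M^*_{\bar M}| \leq \tfrac{2(1+\eps)}{1-\eps} |M|$, and the elementary estimate $\tfrac{1+\eps}{1-\eps} \leq 1+3\eps$ for $\eps \in [0,1/3)$ finishes the count: $|M^*| \leq 2|M| + (2+6\eps)|M| = (4+6\eps)|M|$.

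The main obstacle is really just getting the constant right: a naive argument (e.g.\ applying the standard 2-approximation to $\kappa(G)$ and then adding ``escaped'' edges separately) easily loses a factor, so the slight care needed is to directly split on whether an $M^*$-edge touches $V_M$, and to combine the friend-count bound with the tight bound $\tfrac{1+\eps}{1-\eps} \leq 1+3\eps$ rather than a looser estimate. Everything else is straightforward counting from the three invariants.
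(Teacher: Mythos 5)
Your proof is correct and follows essentially the same route as the paper: the same split of the optimum matching into edges touching $V_M$ (bounded by $2|M|$) and edges avoiding $V_M$ (shown via Invariant~\ref{inv:3} and maximality to have a free tight endpoint), with the same $(1\pm\eps)c$ degree counting from Invariants~\ref{inv:1} and~\ref{inv:2} yielding the factor $(1+\eps)/(1-\eps)\le 1+3\eps$. The only cosmetic difference is that the paper phrases the counting as a charging scheme bounding the full set of free tight nodes $F_T$, whereas you double-count kernel edges from one chosen tight endpoint per escaped edge, which is the same estimate restricted to a subset.
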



\begin{theorem}
\label{th:3matching:1}
Let $M$ be a matching in an $(\epsilon,c)$-kernel $\kappa(G)$ such that every augmenting path in $\kappa(G)$ (w.r.t.~$M$) has length at least five. Then $M$ is a $(3+3\epsilon)$-approximation to the maximum matching in $G$. 
\end{theorem}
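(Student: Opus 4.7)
The plan is to prove the bound $|M^*(G)| \leq (3+3\eps)|M|$ directly via a charging argument that distributes each $M^*(G)$-edge's unit demand onto $M$-edges and then bounds the per-$M$-edge load.

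Setup and use of the hypotheses. Since there is no augmenting path in $\kappa(G)$ of length $1$, $M$ is maximal in $\kappa(G)$. The absence of length-$3$ augmenting paths has the following key structural consequence: for any $M$-edge $(w,w')$, at most one of $w,w'$ can have an unmatched (in $M$) $\kappa$-neighbor, since otherwise one could splice such neighbors with $(w,w')$ into a length-$3$ augmenting path in $\kappa(G)$. Call an endpoint \emph{bad} if it has an unmatched friend in $\kappa(G)$ and \emph{good} otherwise; thus every $M$-edge is either Type I (one bad, one good) or Type II (both good).

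Classifying $M^*$-edges. Let $M^* = M^*(G)$ and write $M^* = T_1 \cup T_2 \cup T_3$ by how many endpoints are matched in $M$ (two, one, or zero). The length-$1$ maximality forces every $T_3$-edge $(x,y)$ to lie outside $\kappa(E)$; then by Invariant~\ref{inv:3} at least one of $x,y$ must be tight.

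The charging scheme. For each $e^* \in M^*$ I distribute one unit of charge. If $e^* \in T_1$, send $1/2$ to each of the two $M$-edges meeting its endpoints. If $e^* \in T_2$ with matched endpoint $x$, send $1$ to the unique $M$-edge containing $x$. If $e^* \in T_3$, pick a tight endpoint $x$; by Invariant~\ref{inv:2} and length-$1$ maximality, $x$ has at least $(1-\eps)c$ friends in $\kappa(G)$, all matched, so distribute $1/|\kappa(\N_x)|$ to the $M$-edge through each such friend.

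Per-$M$-edge charge bound. Fix $e = (w,w') \in M$. A good endpoint $v \in \{w,w'\}$ can receive charge only from its own $M^*$-mate (if any), contributing $1/2$ when that mate is matched in $M$ and $1$ when it is unmatched; crucially, $v$ cannot absorb any $T_3$ charge, because every $T_3$ charge arrives at an endpoint that is a $\kappa$-friend of an unmatched node and is therefore bad. So a good endpoint contributes at most $1$. A bad endpoint $v$ contributes the same at most $1$ from its own $M^*$-mate, plus fractional $T_3$ charges from its unmatched tight $\kappa$-friends: by Invariant~\ref{inv:2} each such friend $t$ sends at most $1/((1-\eps)c)$, and by Invariant~\ref{inv:1} there are at most $(1+\eps)c$ such friends of $v$, for a $T_3$ contribution of at most $(1+\eps)/(1-\eps)$. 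Therefore a Type II edge absorbs $\leq 2$ and a Type I edge absorbs $\leq 2+(1+\eps)/(1-\eps)$, so $|M^*| \leq \big(2+(1+\eps)/(1-\eps)\big)|M|$. Combining with the elementary inequality $(1+\eps)/(1-\eps) \leq 1+3\eps$ (which is equivalent to $3\eps^2 \leq \eps$, i.e.\ holds exactly on $\eps \in [0,1/3]$) yields $|M^*(G)| \leq (3+3\eps)|M|$.

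The main obstacle to the argument is leveraging the no-length-$3$ hypothesis correctly, so that all fractional $T_3$ charges are deposited on bad endpoints only. Without this restriction, each $M$-edge could absorb $T_3$ load at both endpoints, inflating the bound to roughly $2 + 2(1+\eps)/(1-\eps) \approx 4$ and yielding only a $4$-approximation rather than the desired $3+3\eps$.
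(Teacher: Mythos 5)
Your charging scheme is in essence the same as the paper's: you split $M^*$ by how many endpoints are matched in $M$ (the paper's $M^o_1$ is your $T_3$ and $M^o_2 = T_1 \cup T_2$), handle $T_1 \cup T_2$ by the standard "two per matched node" observation, and handle $T_3$ by having each free tight endpoint spread one unit over the matched edges at its $\kappa$-neighbors at rate $1/|\kappa(\N_x)| \leq 1/((1-\eps)c)$, bounded per $M$-edge via the degree cap $(1+\eps)c$. That mirrors the paper's Lemmas~\ref{lm:3matching:1:1} and \ref{lm:3matching:1:2} closely.

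There is, however, a genuine gap in the structural step you lean on. Your claim that ``for any $M$-edge $(w,w')$, at most one of $w,w'$ can have an unmatched $\kappa$-neighbor'' is false: if $w$ and $w'$ share a \emph{single} common unmatched friend $u$ and have no other unmatched friends, both endpoints are ``bad'' by your definition, yet $(u,w,w',u)$ is not a path and no length-$3$ augmenting path exists. (Concretely, take $\kappa(E) \supseteq \{(u,w),(u,w'),(w,w')\}$ with $M=\{(w,w')\}$.) Your case analysis has no bucket for such an $M$-edge, and applying your per-endpoint bound to both endpoints would give $\leq 2 + 2(1+\eps)/(1-\eps)$, which exceeds $3+3\eps$. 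The paper's Lemma~\ref{lm:3matching:1:1} sidesteps this: it bounds the size of the \emph{union} $\bigl(\kappa(\N_w)\cap F_T\bigr)\cup\bigl(\kappa(\N_{w'})\cap F_T\bigr)$ by $(1+\eps)c$, noting that if the union exceeded $(1+\eps)c$ then (since each piece has size $\leq (1+\eps)c$) there would exist two \emph{distinct} free tight neighbors on opposite sides, producing a genuine length-$3$ augmenting path. With that union bound, the per-$M$-edge fractional charge is $\leq (1+\eps)/(1-\eps)$ regardless of how the free neighbors are distributed between $w$ and $w'$. Your argument can be repaired by replacing the ``at most one bad endpoint'' claim with this union bound (or, in the problematic corner case, by observing directly that the total $T_3$ charge is only $\leq 2/((1-\eps)c)$), but as written the step would fail.
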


The next theorem shows that the approximation ratio derived in Theorem~\ref{th:invariant} is essentially tight. Its proof appears in Section~\ref{sec:th:kernel:3}.
\begin{theorem}
\label{th:kernel:3}
For every positive even integer $c$, there is a graph $G = (V,E)$, a $(1/c,c)$-kernel $\kappa(G) = (V,\kappa(E))$, and a maximal matching $M$ in $\kappa(G)$ such that the size of the maximum matching in $G$ is at least $4$ times the size of $M$.
\end{theorem}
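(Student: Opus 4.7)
I would prove the claim by constructing an explicit witness graph $G$, engineered to make the analysis behind Theorem~\ref{th:invariant} tight. The ratio $4$ comes from two simultaneous extremes: (a) every matched node of $M$ is re-matched in a maximum matching of $G$ to a brand-new partner, giving $2|M|$ edges; and (b) every tight free node (whose existence Invariant~\ref{inv:2} forces in large numbers around $V(M)$) is matched in a maximum matching of $G$ to a slack partner via an edge of $E\setminus \kappa(E)$, giving up to $|F_T|\le 2|M|\cdot c/(c-1)$ additional edges. The inequality $|F_T|(c-1)\le 2|M|c$ comes from the fact that each tight free node has at least $c-1$ friends in $\kappa(G)$ (all matched, by maximality) while each matched node has at most $c+1$ friends (one being its partner). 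A construction that simultaneously saturates both budgets yields ratio $(4c-2)/(c-1)\ge 4$.

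\textbf{The construction.} Take four groups of tight nodes
$U=\{u_1,\ldots,u_{c-1}\}$, $V'=\{v_1,\ldots,v_{c-1}\}$,
$A=\{a_1,\ldots,a_c\}$, $B=\{b_1,\ldots,b_c\}$,
and four groups of slack nodes: $S_U,S_V$ of size $c-1$ and $T_A,T_B$ of size $c$. Let $\kappa(E)$ be the union of the edges $(u_i,v_i)$ for $1\le i\le c-1$, the complete bipartite graph between $U$ and $A$, and the complete bipartite graph between $V'$ and $B$. Let $E\setminus\kappa(E)$ consist of a perfect matching $(u_i,s_i^u)$ between $U$ and $S_U$, together with analogous perfect matchings between $V'$ and $S_V$, between $A$ and $T_A$, and between $B$ and $T_B$. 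Set $M=\{(u_i,v_i):1\le i\le c-1\}$, so $|M|=c-1$.

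\textbf{Verification, in order.} I would check: (i) every $u_i,v_i$ has kernel-degree $c+1=(1+1/c)c$ and every $a_j,b_j$ has kernel-degree $c-1=(1-1/c)c$, while every slack node has kernel-degree $0$, establishing Invariants~\ref{inv:1} and~\ref{inv:2}; (ii) Invariant~\ref{inv:3} holds vacuously because every slack node has a single $G$-neighbour, which is tight, so no edge of $E$ has both endpoints slack; (iii) $M$ is maximal in $\kappa(G)$ because the only unmatched nodes with any $\kappa$-edge are those of $A\cup B$, and these are $\kappa$-adjacent only to the already-matched nodes of $U\cup V'$; (iv) the $2(c-1)+2c=4c-2$ edges of $E\setminus\kappa(E)$ are pairwise vertex-disjoint, so the maximum matching in $G$ has size at least $4c-2\ge 4(c-1)=4|M|$.

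\textbf{Main obstacle.} The only non-routine part is choosing the sizes so that every tight node simultaneously lands in the allowed kernel-degree range $[c-1,c+1]$. This forces $|U|=|V'|=c-1$ and $|A|=|B|=c$, and the numerical identity $|U|\cdot c=|A|\cdot(c-1)$ is exactly what makes the complete bipartite graph on $U\cup A$ consistent with both sides of the degree bound at once (each $u_i$ getting $c+1$ friends, each $a_j$ getting $c-1$). Once these sizes are fixed, the rest of the argument is a direct check.
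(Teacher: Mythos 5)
Your construction is correct and follows essentially the same plan as the paper's: a small kernel matching $M$ whose endpoints have maximum allowed kernel-degree, a pool of tight free nodes whose only kernel-neighbours are the $M$-matched vertices (making $M$ maximal), and slack pendants attached via non-kernel edges to realize the large matching. The only difference is cosmetic: you use $c-1$ matched pairs and two separate friend pools $A,B$ where the paper uses $c/2$ pairs sharing a single pool $U^*$ (which is why the paper requires $c$ even); both yield the claimed ratio.
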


\subsubsection{Proof of Theorem~\ref{th:invariant}.}
\label{sec:th:invariant}

Let $V(M) = \{v \in V : (u,v) \in M \text{ for some } u \in \N_v\}$ be the set of nodes that are matched in $M$, and let $F_T = \kappa_T(V) \setminus V(M)$  be the subset of tight nodes in $\kappa(G)$  that are free in $M$.

\begin{lemma}
\label{lem:bound:free}
Recall that $M$  is a  maximal matching in the $(\epsilon,c)$-kernel $\kappa(G)$. We have $|F_T| \leq (2+6\eps) \cdot |M|$.  
\end{lemma}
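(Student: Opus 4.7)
The plan is to count pairs (edges of $\kappa(G)$) between $F_T$ and $V(M)$ in two ways, using Invariant~\ref{inv:1} from above and Invariant~\ref{inv:2} from below.

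First I would make the crucial observation that every friend (in $\kappa(G)$) of a node $v \in F_T$ must be matched in $M$. Indeed, $v$ itself is unmatched in $M$, so if some friend $u \in \kappa(\N_v)$ were also unmatched, then the edge $(u,v) \in \kappa(E)$ could be added to $M$, contradicting the maximality of $M$ in $\kappa(G)$. Hence $\kappa(\N_v) \subseteq V(M)$ for every $v \in F_T$.

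Next, let $P = \{(v,u) : v \in F_T, \, u \in \kappa(\N_v)\}$ be the set of (ordered) friend-pairs with one endpoint in $F_T$. On one hand, since $v \in F_T \subseteq \kappa_T(V)$, Invariant~\ref{inv:2} gives $|\kappa(\N_v)| \geq (1-\epsilon)c$, so
\[
|P| \;\geq\; (1-\epsilon)\,c\,|F_T|.
\]
On the other hand, by the observation above each $(v,u) \in P$ has $u \in V(M)$, and by Invariant~\ref{inv:1} every node $u \in V$ has $|\kappa(\N_u)| \leq (1+\epsilon)c$. Since $|V(M)| = 2|M|$, we obtain
\[
|P| \;\leq\; (1+\epsilon)\,c\,|V(M)| \;=\; 2(1+\epsilon)\,c\,|M|.
\]

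Combining the two bounds and cancelling $c$ (recall $c \geq 1$) yields $|F_T| \leq \frac{2(1+\epsilon)}{1-\epsilon}\,|M|$. It remains to verify that $\frac{2(1+\epsilon)}{1-\epsilon} \leq 2+6\epsilon$ for $\epsilon \in [0,1/3)$; clearing denominators this is equivalent to $0 \leq 2\epsilon(1-3\epsilon)$, which holds in the allowed range. This gives the claimed bound $|F_T| \leq (2+6\epsilon)\,|M|$.

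I do not expect any serious obstacle here: the entire argument is a one-line double-counting after the maximality observation, and the only slightly delicate point is the algebraic manipulation that converts $\frac{2(1+\epsilon)}{1-\epsilon}$ into the cleaner expression $2+6\epsilon$, which is precisely where the hypothesis $\epsilon < 1/3$ from Definition~\ref{def:kernel:appendix} is used.
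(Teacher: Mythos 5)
Your proof is correct and is essentially the same argument as the paper's: both rest on the observation that maximality of $M$ forces $\kappa(\N_v)\subseteq V(M)$ for every $v\in F_T$, then bound the number of kernel-edges between $F_T$ and $V(M)$ using Invariant~\ref{inv:2} from below and Invariant~\ref{inv:1} from above. The paper phrases this as a normalized charging scheme (each $v\in F_T$ distributes one dollar among its friends), whereas you phrase it as an unnormalized double count of the pair set $P$; the two are algebraically interchangeable, and the final inequality $\tfrac{1+\epsilon}{1-\epsilon}\leq 1+3\epsilon$ for $\epsilon<1/3$ is used identically in both.
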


\begin{proof}
We will show that $|F_T| \leq (1+3\eps) \cdot |V(M)|$. Since $|V(M)| = 2 \cdot |M|$,  the lemma follows.

We  design a charging scheme where  each node in $F_T$ contributes one dollar  to a {\em global fund}. So the total amount of money in this fund is equal to $|F_T|$ dollars. Below, we demonstrate how to transfer this fund to the nodes in $V(M)$ so that each node $x \in V(M)$ receives at most $(1+3\eps)$ dollars. This  implies that $|F_T| \leq (1+3\eps) \cdot |V(M)|$. 

Since $M$ is a maximal matching in $\kappa(G)$, we must have  $\kappa(\N_v) \subseteq V(M)$ for all $v \in F_T$. For each node $v \in F_T$, we distribute its one dollar  equally among its friends, i.e., each node $x \in \kappa(\N_v)$ gets $1/|\kappa(\N_v)|$ dollars from $v$. Since $F_T \subseteq \kappa_T(V)$,  Invariant~\ref{inv:2} implies that  $1/|\kappa(\N_v)| \leq 1/\left((1-\eps)c\right)$ for all $v \in F_T$. In other words, a node in $V(M)$ receives at most $1/\left((1-\eps)c\right)$ dollars from each of its friends under this money-transfer scheme. But,  by Invariant~\ref{inv:1},  a node  can have at most $(1+\eps)c$ friends. So the total amount of money received by a node in $V(M)$ is at most $(1+\eps)c/\left((1-\eps)c\right)  \leq  (1+3\eps)$,  for  $\eps \in [0,1/3)$. 
\end{proof}

To continue with the proof of Theorem~\ref{th:invariant}, let $M^{o} \subseteq E$ be a maximum-cardinality matching in  $G = (V,E)$. Define $M^{o}_1 \subseteq M^{o}$ to be the subset of edges whose both endpoints are unmatched in $M$, and let $M^{o}_2 = M^{o} \setminus M^{o}_1$. 

Consider any edge $(u,v) \in M^o_1$. By definition, both the nodes $u, v$ are free in $M$. Since $M$ is a maximal matching in $\kappa(G)$, it follows that the edge $(u,v)$ is not part of the kernel, i.e., $(u,v) \neq \kappa(E)$. Hence, Invariant~\ref{inv:3} implies that either $u \notin \kappa_S(V)$ or $v \notin \kappa_S(V)$. Without any loss of generality, suppose that $u \notin \kappa_S(V)$. This means that the node $u$ is tight, and, furthermore, it is free in $M$.   We infer that every edge in $M^o_1$ is incident to at least one node from $F_T$, where $F_T \subseteq \kappa_T(V)$ is the subset of tight nodes that are free in $M$. Accordingly, we have $|M^o_1| \leq |F_T|$. Combining this inequality with Lemma~\ref{lem:bound:free}, we get: 
\begin{equation}
\label{eq:1}
|M^o_1| \leq (2+6\eps) \cdot |M|
\end{equation}

Next, every edge in $M^o_2$ has at least one  endpoint that is matched in $M$. Thus, $M$ is a maximal matching in the graph $G' = (V, M^o_2 \cup M)$. Since $M^o_2$ is also a  matching in  $G'$, we get:
\begin{equation}
\label{eq:2}
|M^o_2| \leq 2 \cdot |M|
\end{equation}

The theorem follows if we add equations~\ref{eq:1} and~\ref{eq:2}.

\subsubsection{Proof of Theorem~\ref{th:3matching:1}.}
\label{sec:3matching:1}

Let $V(M) = \{v \in V : (u,v) \in M \text{ for some } u \in \N_v\}$ be the set of nodes that are matched in $M$. For all nodes $v \in V(M)$, let $e_{M}(v)$ denote the edge in $M$ that is incident to $v$. Let $F_T = \kappa_T(V) \setminus V(M)$  be the subset of tight nodes in $\kappa(G)$  that are free in $M$.

\begin{lemma}
\label{lm:3matching:1:1} 
Recall that $M$ is a matching in the $(\epsilon,c)$-kernel $\kappa(G)$ such that every augmenting path in $\kappa(G)$ (w.r.t.~$M$) has length at least five. 
For every edge $(u,v) \in M$, we have $|\left(\kappa(\N_u) \cap F_T\right)  \cup \left(\kappa(\N_v) \cap F_T \right)|  \leq (1+\eps)c$.
\end{lemma}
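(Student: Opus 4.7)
The plan is to argue by contradiction, exploiting the hypothesis that $\kappa(G)$ contains no augmenting path of length $3$ with respect to $M$. Specifically, I will show that at most one ``side'' of the edge $(u,v) \in M$ can contribute free tight friends to the union; once this is established, the union sits inside either $\kappa(\N_u)$ or $\kappa(\N_v)$, and Invariant~\ref{inv:1} immediately gives the bound $(1+\eps)c$.

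Concretely, I would suppose for contradiction that there exist $x \in \kappa(\N_u) \cap F_T$ and $y \in \kappa(\N_v) \cap F_T$ with $x \neq y$. Since $x, y \in F_T$ are free in $M$ while $u$ and $v$ are matched to each other via $(u,v) \in M$, the four vertices $x, u, v, y$ are pairwise distinct. The edges $(x,u)$ and $(v,y)$ lie in $\kappa(E)$ (by $x \in \kappa(\N_u)$ and $y \in \kappa(\N_v)$) and neither lies in $M$ (since $x$ and $y$ are unmatched), whereas $(u,v) \in M$. Hence $(x, u, v, y)$ is a simple augmenting path of length $3$ in $\kappa(G)$ with respect to $M$, contradicting the hypothesis of the lemma.

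From this I would draw the conclusion that for every $x \in \kappa(\N_u) \cap F_T$ and every $y \in \kappa(\N_v) \cap F_T$, we must have $x = y$. Two situations are then possible: (i) at least one of the two intersections is empty, whence the union is exactly the other intersection and is contained in $\kappa(\N_u)$ or $\kappa(\N_v)$; (ii) both intersections are nonempty, in which case each must reduce to a single common vertex — for if some $\kappa(\N_u) \cap F_T$ contained two distinct elements $x_1, x_2$, any $y \in \kappa(\N_v) \cap F_T$ could not equal both, contradicting the previous step. In every situation the union $(\kappa(\N_u) \cap F_T) \cup (\kappa(\N_v) \cap F_T)$ is a subset of $\kappa(\N_u)$ or of $\kappa(\N_v)$, so Invariant~\ref{inv:1} yields the desired bound of $(1+\eps)c$.

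The only technical subtlety — and it is a minor one — is verifying that the candidate augmenting path is genuinely \emph{simple}, which Definition~\ref{def:app:path} implicitly requires; this follows at once from the fact that $F_T$ consists of unmatched vertices while $u,v$ are matched to each other. No delicate counting or slack in the kernel parameters is used; the length-$5$ girth condition on augmenting paths carries the entire argument.
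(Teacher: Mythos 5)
Your proof is correct and uses the same central observation as the paper's: distinct free tight neighbours $x$ of $u$ and $y$ of $v$ would form a length-$3$ augmenting path $(x,u,v,y)$ in $\kappa(G)$, which is forbidden. The paper organizes this as a proof by contradiction (assume the bound fails, use Invariant~\ref{inv:1} and pigeonhole to extract two distinct such nodes, then contradict), whereas you prove the slightly stronger structural fact that $\kappa(\N_u)\cap F_T$ and $\kappa(\N_v)\cap F_T$ share at most one vertex and hence their union lies entirely inside one of $\kappa(\N_u)$, $\kappa(\N_v)$ before applying Invariant~\ref{inv:1}; the two arguments are mirror images of each other and rely on exactly the same ingredients.
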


\begin{proof}
Suppose that the lemma is false and we have $|\left(\kappa(\N_u) \cap F_T\right)  \cup \left(\kappa(\N_v) \cap F_T\right)|  > (1+\eps)c$ for some edge $(u,v) \in M$. As Invariant~\ref{inv:1} guarantees that $|\kappa(\N_u)| \leq (1+\eps)c$ and $|\kappa(\N_v)| \leq (1+\eps)c$,  there has to be a pair of distinct nodes $u', v'\in F_T$  such that $u'\in \kappa(\N_u)$ and $v'\in \kappa(\N_v)$. This means that the path $(u',u,v,v')$ is an augmenting path in $\kappa(G)$ (w.r.t.~$M$) and has length three. We reach a contradiction. 
\end{proof}

\begin{lemma}
\label{lm:3matching:1:2}
Recall that $M$ is  a matching in the $(\epsilon,c)$-kernel $\kappa(G)$ such that every augmenting path in $\kappa(G)$ (w.r.t.~$M$) has length at least five. 
We have $|F_T| \leq (1+3\eps) \cdot |M|$.  
\end{lemma}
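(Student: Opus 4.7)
The plan is to strengthen the charging argument of Lemma~\ref{lem:bound:free} by exploiting the stronger hypothesis that $M$ admits no augmenting path of length $3$ in $\kappa(G)$. As in that earlier lemma, each $w \in F_T$ will distribute one dollar equally among its friends in $\kappa(\N_w)$; each friend receives $1/d_w$ dollars, where $d_w := |\kappa(\N_w)| \ge (1-\eps)c$ by Invariant~\ref{inv:2}. Because $M$ has no length-$1$ augmenting path in $\kappa(G)$, every friend of $w$ must be matched in $M$, so all dollars end up at vertices of $V(M)$. For each matching edge $e_j = (u_j,v_j) \in M$ let $D_j$ denote the total dollars landing at $u_j$ and $v_j$ together. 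Then $|F_T| = \sum_j D_j$, and it suffices to show $D_j \le 1+3\eps$ for every $j$.

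The key structural step sharpens Lemma~\ref{lm:3matching:1:1}. Writing $A_j := \kappa(\N_{u_j}) \cap F_T$ and $B_j := \kappa(\N_{v_j}) \cap F_T$, the claim is that whenever both $A_j$ and $B_j$ are nonempty, in fact $A_j = B_j$ is a common singleton $\{w_j^*\}$. Indeed, any pair $x \in A_j$, $y \in B_j$ with $x \neq y$ would produce the length-$3$ augmenting path $(x,u_j,v_j,y)$, contradicting the hypothesis. This yields three exhaustive cases. If $A_j = B_j = \emptyset$, then $D_j = 0$. If exactly one of $A_j, B_j$ is empty, say $B_j$, then $D_j = \sum_{w \in A_j} 1/d_w \le |A_j|/((1-\eps)c) \le (1+\eps)/(1-\eps) \le 1+3\eps$, using Invariant~\ref{inv:1} and $\eps \le 1/3$. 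Finally, if $A_j = B_j = \{w_j^*\}$, then only $w_j^*$ contributes, so $D_j = 2/d_{w_j^*} \le 2/((1-\eps)c)$, which is also at most $1+3\eps$ whenever $c \ge 2$. Summing over $j$ gives $|F_T| = \sum_j D_j \le (1+3\eps)|M|$.

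The main obstacle is exactly the factor-of-$2$ gap relative to Lemma~\ref{lem:bound:free}: a naive per-vertex double counting allows each endpoint of a matching edge to independently absorb up to $(1+\eps)c$ free-tight friendships, giving only $|F_T| \le (2+6\eps)|M|$. The no-length-$3$-augmenting-path assumption is precisely what prevents both endpoints of the same matching edge from simultaneously having many free tight neighbors—whenever both sides contribute at all, the contribution collapses to a single shared node, so $D_j$ is bounded by a single $(1+\eps)/(1-\eps)$ term rather than two. The degenerate case $c = 1$ is handled separately and trivially: Invariants~\ref{inv:1}--\ref{inv:2} force every tight vertex to have exactly one friend, so any $w \in F_T$ would force its unique friend (necessarily a matching endpoint) to have more than one friend, a contradiction; hence $F_T = \emptyset$.
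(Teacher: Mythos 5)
Your proof is correct, and it follows the same charging scheme as the paper's: each free tight node $w$ spreads one dollar evenly over its $\ge (1-\eps)c$ friends, and the total collected at the two endpoints of a matching edge is bounded per edge. Where you diverge is in the structural lemma you use to bound the per-edge total $D_j$. The paper proves Lemma~\ref{lm:3matching:1:1}, i.e.\ $|A_j \cup B_j| \le (1+\eps)c$, and then asserts that the edge receives at most $1/((1-\eps)c)$ dollars ``from each node in the union.'' As stated, that assertion silently ignores the case where a node $w$ lies in $A_j \cap B_j$ (friend of both endpoints): such a $w$ delivers two installments, $2/d_w$, not one. Your sharper structural claim---if both $A_j$ and $B_j$ are nonempty then $A_j = B_j$ is a singleton---is exactly the fact needed to close that corner case: whenever double-counting occurs, the union collapses to one node and $D_j = 2/d_{w^*} \le 2/((1-\eps)c) \le 1+3\eps$ for $c \ge 2$. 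So your argument is not merely a rederivation; it makes explicit an accounting step the paper glosses over, and your three-way case split ($A_j=B_j=\emptyset$; exactly one nonempty; both nonempty and forced equal) is a cleaner way to get the $(1+3\eps)$ bound than the paper's single union inequality. The separate treatment of $c=1$ (showing $F_T = \emptyset$) is also sound, though it is not needed in the paper's applications where $c \in \{\sqrt{n}, m^{1/3}\}$. One minor stylistic note: you should state the symmetry $u \in \kappa(\N_w) \Leftrightarrow w \in \kappa(\N_u)$ (it follows from $\kappa(E)$ being an undirected edge set) when rewriting $D_j$ as $\sum_{w \in A_j} 1/d_w + \sum_{w \in B_j} 1/d_w$, since that identity is what ties the dollars received at $u_j,v_j$ back to the friend sets of those endpoints.
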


\begin{proof}
We  design a charging scheme where  each node in $F_T$ contributes one dollar  to a {\em global fund}. So the total amount of money in this fund is equal to $|F_T|$ dollars. Below, we demonstrate how to transfer this fund to the edges in $M$ so that each edge $e \in M$ receives at most $(1+3\eps)$ dollars. This  implies that $|F_T| \leq (1+3\eps) \cdot |M|$. 

Since $M$ is a maximal matching in $\kappa(G)$, we must have  $\kappa(\N_v) \subseteq V(M)$ for all $v \in F_T$. For each node $v \in F_T$, we distribute its one dollar  equally among the matched edges incident to its friends, i.e., for each node $x \in \kappa(\N_v)$, the edge $e_{M}(x)$ gets $1/|\kappa(\N_v)|$ dollars from $v$. Since $F_T \subseteq \kappa_T(V)$,  Invariant~\ref{inv:2} implies that  $1/|\kappa(\N_v)| \leq 1/\left((1-\eps)c\right)$ for all $v \in F_T$. In other words, an edge $(x,y)  \in M$ receives at most $1/\left((1-\eps)c\right)$ dollars from each of the nodes $v \in (\kappa(\N_x) \cap F_T) \cup (\kappa(\N_y) \cap F_T)$ under this money-transfer scheme. Hence, Lemma~\ref{lm:3matching:1:1} implies that the total amount of money received by an edge $(x,y) \in M$ is at most $(1+\eps)c/((1-\eps)c) \leq (1+3\eps)$, for  $\eps \in [0,1/3)$.
\end{proof}

To continue with the proof of Theorem~\ref{th:3matching:1}, define $M^{o} \subseteq E$ to be a maximum-cardinality matching in  $G = (V,E)$. Furthermore, let $M^{o}_1 \subseteq M^{o}$  be the subset of edges whose both endpoints are unmatched in $M$, and let $M^{o}_2 = M^{o} \setminus M^{o}_1$. 

Consider any edge $(u,v) \in M^o_1$. By definition, both the nodes $u, v$ are free in $M$. Since $M$ is a maximal matching in $\kappa(G)$, it follows that the edge $(u,v)$ is not part of the kernel, i.e., $(u,v) \neq \kappa(E)$. Hence, Invariant~\ref{inv:3} implies that either $u \notin \kappa_S(V)$ or $v \notin \kappa_S(V)$. Without any loss of generality, suppose that $u \notin \kappa_S(V)$. This means that the node $u$ is tight, and, furthermore, it is free in $M$.   We infer that every edge in $M^o_1$ is incident to at least one node from $F_T$. Accordingly, we have $|M^o_1| \leq |F_T|$. Combining this inequality with Lemma~\ref{lm:3matching:1:2}, we get: 
\begin{equation}
\label{eq:3matching:1}
|M^o_1| \leq (1+3\eps) \cdot |M|
\end{equation}

Next, every edge in $M^o_2$ has at least one  endpoint that is matched in $M$. Thus, $M$ is a maximal matching in the graph $G' = (V, M^o_2 \cup M)$. Since $M^o_2$ is also a  matching in  $G'$, we get:
\begin{equation}
\label{eq:3matching:2}
|M^o_2| \leq 2 \cdot |M|
\end{equation}

The theorem follows if we add equations~\ref{eq:3matching:1} and~\ref{eq:3matching:2}.

\subsubsection{Proof of Theorem~\ref{th:kernel:3}}
\label{sec:th:kernel:3}

We define the set of nodes $U = \{u_1, \ldots, u_{c/2}\}$, $U'= \{u'_1, \ldots, u'_{c/2}\}$, $U^* = \{u^*_1, \ldots, u^*_c\}$, $X = \{x_1, \ldots, x_c\}$, $Y = \{y_1, \ldots, y_{c/2}\}$ and $Z = \{z_1, \ldots, z_{c/2}\}$. Next, we define the following sets of edges.
\begin{eqnarray*}
E(U,U') & = & \bigcup_{i=1}^{c/2} \{(u_i,u'_i)\} \\
E(U,U^*) & = & \bigcup_{u \in U, u^* \in U^*} \{(u,u^*)\} \\
E(U',U^*) & = & \bigcup_{u' \in U, u^* \in U^*} \{(u',u^*)\} \\
E(X,U^*) & = & \bigcup_{i=1}^c \{(x_i,u^*_i)\} \\
E(Y,U,U',Z) & = & \bigcup_{i=1}^{c/2} \{(y_i,u_i), (u'_i,z_i)\}
\end{eqnarray*}
 Next, we consider a graph $G = (V,E)$ defined on the node-set $V = U \cup U' \cup U^* \cup X \cup Y \cup Z$ and edge-set $E = E(U,U') \cup E(U,U^*) \cup E(U',U^*) \cup E(X,U^*) \cup E(Y,U,U', Z)$. See Figure~\ref{fig:picture}. Define the subset of edges $\kappa(E) = E(U,U') \cup E(U,U^*) \cup E(U',U^*)$. It is easy to check that the subgraph $\kappa(G) = (V,\kappa(E))$ is a $(1/c,c)$-kernel in $G$, with the set of tight nodes being $\kappa_T(V) = U \cup U' \cup U^*$, and the set of slack nodes being $\kappa_S(V) = X \cup Y \cup Z$. We now consider two matchings $M$ and $M^*$ defines as follows. 
\begin{eqnarray*}
M & = & \bigcup_{i=1}^{c/2} \{ (u_i,u'_i) \}  \\
M^* & = & E(X,U^*) \bigcup E(Y,U,U',Z)
\end{eqnarray*}
The theorem follows from the observations that $M$ is a maximal matching in $\kappa(G)$ and  $|M| = c/2$, $|M^*| = 2c$.

\begin{figure}[!ht]
  \centering
      \includegraphics[width=0.6\textwidth]{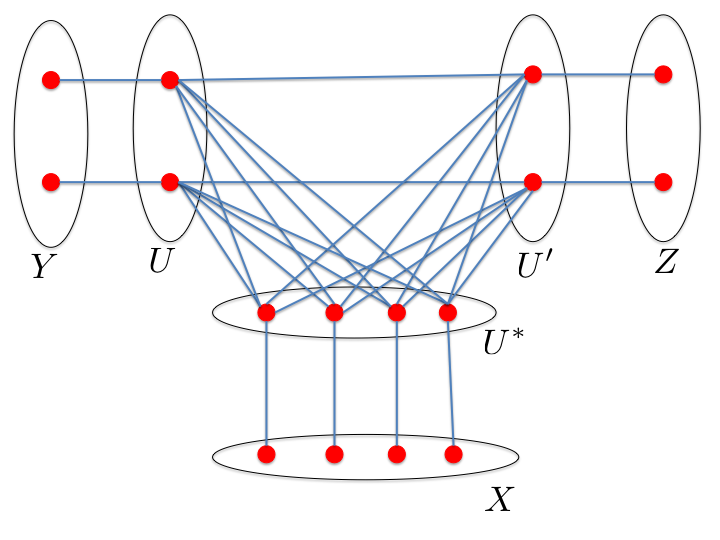}
  \caption{\label{fig:picture} The graph considered in the proof of Theorem~\ref{th:kernel:3}, for $c = 4$.}
  \end{figure}

\subsection{An algorithm for building a kernel in a static graph.}
\label{sec:preprocess}

We present a linear-time algorithm for constructing a $(0,c)$-kernel $\kappa(G)$ of a static graph $G = (V,E)$.

\begin{theorem}
\label{th:preprocess}
We have an algorithm for computing a $(0,c)$-kernel $\kappa(G) = (V,\kappa(E))$ of   a graph $G = (V,E)$. The kernel returned by the algorithm has the added property that 
tight nodes have exactly
$c$ friends and slack nodes have less than $c$ friends.
For every $c \geq 1$, the algorithm runs in $O(|E|)$ time.
\end{theorem}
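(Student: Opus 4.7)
The plan is to construct $\kappa(G)$ by a single greedy pass over the edges of $G$. I maintain, for each node $w \in V$, a counter $d[w] \in \{0,1,\ldots,c\}$ initialized to $0$ that will always equal its current number of friends, and a status bit $s[w] \in \{\text{slack},\text{tight}\}$ initialized to slack; the edge set $\kappa(E)$ is initially empty. I then iterate over the edges of $E$ in arbitrary order. When processing an edge $(u,v)$, if $s[u] = s[v] = \text{slack}$ I insert $(u,v)$ into $\kappa(E)$ and increment both $d[u]$ and $d[v]$; if an increment brings $d[u]$ (resp.\ $d[v]$) to $c$ I set $s[u] \leftarrow \text{tight}$ (resp.\ $s[v] \leftarrow \text{tight}$). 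If at least one of $u,v$ is already tight, the edge is simply discarded. After the scan ends, $\kappa_T(V)$ is the set of nodes with $s[w] = \text{tight}$ and $\kappa_S(V)$ is its complement.

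First I would verify the three invariants of Definition~\ref{def:kernel:appendix} for $\epsilon = 0$. A status bit only flips from slack to tight and never back, and the greedy rule never adds an edge incident to a tight node; hence $d[w]$ never grows above $c$, which gives $|\kappa(\N_w)| \leq c$ for every $w$ and establishes Invariant~\ref{inv:1}. By construction, a node becomes tight precisely when its counter reaches $c$, and its counter is frozen thereafter, so every tight node satisfies $|\kappa(\N_w)| = c$ at termination, yielding Invariant~\ref{inv:2} together with the ``tight nodes have exactly $c$ friends'' half of the extra property; conversely a node that remains slack throughout never had $d[w]$ reach $c$, giving the ``slack nodes have fewer than $c$ friends'' half. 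For Invariant~\ref{inv:3}, suppose $u, v \in \kappa_S(V)$ at termination and $(u,v) \in E$. Since statuses are monotone, both $u$ and $v$ were still slack at the moment $(u,v)$ was processed, so the algorithm added $(u,v)$ to $\kappa(E)$.

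For the running time, each edge is inspected exactly once, and each inspection performs a constant number of status lookups, counter increments, and status updates, so the total work is $O(|E|)$, as required. There is no real obstacle in this proof: the key observation is that the monotone slack-to-tight transition, triggered exactly when a counter hits $c$, simultaneously enforces the degree cap of Invariant~\ref{inv:1}, produces tight nodes of friend count exactly $c$ to meet Invariant~\ref{inv:2}, and guarantees that any edge with both endpoints eventually slack was processed while both endpoints were still slack and hence included in $\kappa(E)$, as Invariant~\ref{inv:3} demands.
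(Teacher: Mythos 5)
Your proposal is correct and takes essentially the same approach as the paper: a single greedy pass over $E$, adding $(u,v)$ to $\kappa(E)$ exactly when both endpoints currently have fewer than $c$ friends, and then declaring the nodes with exactly $c$ friends tight. Your explicit, monotonically-advancing status bit is just an implementation of the paper's test $|\kappa(\N_u)| < c$ and $|\kappa(\N_v)| < c$, so the two proofs are interchangeable.
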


\begin{proof}
Initially, each node in $V$ has zero friends, and the edge-set $\kappa(E)$ is empty. We then execute the {\sc For} loop stated below.
\begin{itemize}
\item {\sc For all} $(u,v) \in E$
\begin{itemize}
\item {\sc if} $|\kappa(\N_u)| < c$ and $|\kappa(\N_v)| < c$, {\sc Then}
\begin{itemize} 
\item Set $\kappa(\N_v) \leftarrow \kappa(\N_v) \cup \{u\}$ and $\kappa(\N_u) \leftarrow \kappa(\N_u) \cup \{v\}$. 
\item Set $\kappa(E) \leftarrow \kappa(E) \cup \{(u,v)\}$. 
\end{itemize}
\end{itemize}
\end{itemize}

\noindent Consider the  $\kappa(G) = (V,\kappa(E))$ we get at the end of the {\sc For} loop. Clearly, in $\kappa(G)$  every  node has at most $c$ friends, i.e., $|\kappa(\N_v)| \leq c$  for all $v \in V$. Furthermore, for all edges $(u,v) \in E$, if $|\kappa(\N_u)| < c$ and $|\kappa(\N_v)| < c$, then it is guaranteed that $(u,v) \in \kappa(E)$. Thus, if we define $\kappa_T(V) = \{ v \in V : |\kappa(\N_v)| = c\}$ and $\kappa_S(V) =  \{ v \in V : |\kappa(\N_v)| < c\}$, then  $\kappa(G)$ becomes a $(0,c)$-kernel of $G$. The whole procedure can be implemented in $O(|E|)$ time.
\end{proof}

\subsection{Data structures for representing a kernel in a dynamic graph.}
\label{sec:data}

In the graph $G$, the set of   neighbors of a node $v$ is  stored in the form of a linked list $\text{{\sc neighbors}}(v)$, which is part of an adjacency-list data structure. Further, each node $v$ maintains the following information. 
\begin{itemize}
\item A bit $\text{{\sc Type}}(v)$  indicating if the node is tight or slack. 
\[
 \text{{\sc Type}}(v)  = \begin{cases}
  \text{tight} &  \text{if } v \in \kappa_T(V); \\
  \text{slack} & \text{if } v \in \kappa_S(V).
 \end{cases}
\]
\item The set of nodes $\kappa(\N_v)$, in the form of a doubly linked list $\text{{\sc Friends}}(v)$. 
\item A counter $\text{{\sc \#friends}}(v) = |\kappa(\N_v)|$ that keeps track of the number of  friends  of the node.
\end{itemize}


\noindent Furthermore, 
we store at each edge 
$(u,v) \in \kappa(E)$ two pointers, corresponding to the two occurrences of edge $(u,v)$ in the linked lists
$\text{{\sc Friends}}(u)$ and $\text{{\sc Friends}}(v)$. In particular, we denote by $\text{{\sc Pointer}}[u,v]$ (resp.~$\text{{\sc Pointer}}[v,u]$) the pointer to the position of $u$ (resp.~$v$) in the list $\text{{\sc Friends}}(v)$ (resp.~$\text{{\sc Friends}}(u)$).  Using those pointers,  we can insert an edge into $\kappa(G)$ or delete an edge from $\kappa(G)$ in $O(1)$ time.

\section{$(3+\eps)$-approximate matching in $O(\sqrt{n}/\eps)$ amortized update time}
\label{sec:matching:sqrtn}

Fix any $\eps \in (0,1/3)$. We present an algorithm for maintaining a $(3+\eps)$-approximate matching in a graph $G = (V,E)$ 
undergoing a sequence of updates with $O(\sqrt{n}/\eps)$ amortized update time, where $n = |V|$ denotes the number of nodes in $G$.

\subsection{Overview of our approach.}
\label{sec:matching:sqrtn:overview}

Before the graph $G = (V,E)$ starts changing, we preprocess it in linear time  to build a $(0,\sqrt{n})$-kernel $\kappa(G)$ of $G$ (see Theorem~\ref{th:preprocess}). Next, we compute a matching $M$ in $\kappa(G)$ such that every augmenting path in $\kappa(G)$ (w.r.t.~$M$) has length at least five. This also takes $O(|E|)$ time.

Thus, initially we have a $(0,\sqrt{n})$-kernel of $G$, which is also an $(\eps,\sqrt{n})$-kernel of $G$ for all $\eps \geq 0$. After each update (edge insertion/deletion) in $G$, we first modify  $\kappa(G)$ using the algorithm in Section~\ref{sec:n:kernel}, and subsequently we modify the matching $M$ in $\kappa(G)$ as per Theorem~\ref{th:maxdeg:2}. The algorithm in Section~\ref{sec:n:kernel} guarantees that $\kappa(G)$ remains an $(\eps,\sqrt{n})$-kernel of $G$ after this update. Hence, by Theorem~\ref{th:3matching:1},  the matching $M$ always remains a $(3+\eps)$-approximation to the maximum matching in $G$.

We now explain the idea behind the algorithm in Section~\ref{sec:n:kernel}. In the beginning of the phase, Theorem~\ref{th:preprocess} guarantees that every tight node has exactly $\sqrt{n}$ friends. During the phase, whenever the number of friends of a tight node $v$ drops below $(1-\eps)\sqrt{n}$, it   scans through its {\em entire list} of neighbors in $\N_v$. While passing through a node $u \in \N_v$ during this scan, if it discovers that both $|\kappa(\N_u)|, |\kappa(\N_v)| < \sqrt{n}$ and $u \notin \kappa(\N_v)$, then it adds the edge $(u,v)$ to $\kappa(G)$. This procedure is implemented in the subroutine $\text{REFILL-NOW-FRIENDS}(v)$ (see Figure~\ref{fig:adjust:n}). Note that this subroutine can take $O(n)$ time, but it is called only when a tight node has lost $\eps \sqrt{n}$ of its friends. Each such loss corresponds to an edge deletion in $\kappa(E) \subseteq E$. Thus, distributing this cost among the concerned edge deletions in $G$, we get an amortized update time of $O(\sqrt{n}/\eps)$.

Finally, in Section~\ref{sec:n:matching}, we show that the amortized update time for maintaining the matching $M$ in $\kappa(G)$ is also $O(\sqrt{n}/\eps)$, and this leads  to the main result of this section.

\begin{theorem}
\label{th:n:matching}
In a dynamic graph $G = (V,E)$, we can maintain a $(3+\eps)$-approximate matching $M \subseteq E$ in $O(\sqrt{n}/\eps)$ amortized update time.
\end{theorem}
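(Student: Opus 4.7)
The plan is to assemble the theorem from three ingredients already foreshadowed in Section~\ref{sec:matching:sqrtn:overview}: (i) a bounded-degree $(\eps, \sqrt{n})$-kernel $\kappa(G)$ is maintained under updates to $G$, (ii) inside that kernel a matching $M$ is maintained whose augmenting paths have length $\geq 5$, and (iii) the approximation guarantee is then inherited from Theorem~\ref{th:3matching:1}. For the approximation ratio, the desired $(3+\eps)$ factor is obtained simply by running the construction with a rescaled parameter $\eps' = \eps/3$, so that Theorem~\ref{th:3matching:1} yields a $(3+3\eps') = (3+\eps)$ factor. Correctness is therefore reduced entirely to verifying that the Section~\ref{sec:n:kernel} kernel-maintenance subroutine preserves Invariants~\ref{inv:1}--\ref{inv:3} and that the matching-maintenance procedure is an instance of Theorem~\ref{th:maxdeg:2} applied to $\kappa(G)$ (whose maximum degree is at most $(1+\eps)\sqrt{n} = O(\sqrt{n})$ by Invariant~\ref{inv:1}).

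The heart of the proof is the amortized running-time analysis. I would charge the work into two buckets: kernel-maintenance cost and matching-maintenance cost. For the first bucket, a single edge insertion/deletion in $G$ costs $O(1)$ for bookkeeping, plus the occasional invocation of $\text{REFILL-NOW-FRIENDS}(v)$, which costs $O(|\N_v|) = O(n)$ in the worst case. The crucial observation is the trigger condition: $\text{REFILL-NOW-FRIENDS}(v)$ fires only when the tight node $v$ has shed at least $\eps\sqrt{n}$ friends since its last refill, i.e., after at least $\eps\sqrt{n}$ deletions from $\kappa(E)$ incident to $v$. Since every such deletion is charged to a distinct update in $G$, the amortized cost per update is $O(n/(\eps\sqrt{n})) = O(\sqrt{n}/\eps)$. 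A standard potential-function argument (with $\Phi(v) = (|\kappa(\N_v)| \text{ deficit})\cdot \sqrt{n}/\eps$ for tight $v$) makes this rigorous.

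For the second bucket, I would track the total number of kernel edge changes triggered per update to $G$. Direct edge insertions/deletions contribute at most one kernel edge change each, and every $\text{REFILL-NOW-FRIENDS}(v)$ call inserts at most $\sqrt{n}$ new kernel edges (since $|\kappa(\N_v)| \leq \sqrt{n}$). Combining with the $O(1/(\eps\sqrt{n}))$ amortized frequency of REFILL calls, the amortized number of kernel edge changes per $G$-update is $O(1/\eps)$. Each such kernel edge change feeds into the Theorem~\ref{th:maxdeg:2} data structure, which handles it in $O(\Delta) = O(\sqrt{n})$ worst-case time while preserving the length-$\geq 5$ augmenting-path property. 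Multiplying, the matching-maintenance cost is $O(\sqrt{n}/\eps)$ amortized per $G$-update. Summing the two buckets gives the claimed $O(\sqrt{n}/\eps)$ bound.

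The main obstacle I anticipate is the careful bookkeeping in the amortized argument for REFILL: one must verify that tokens deposited at kernel edges (at insertion time) are sufficient to cover all future REFILL work their eventual deletion may cause, and that a single kernel edge deletion cannot be "charged" twice by refills at both endpoints. This is cleanly handled by assigning the $O(\sqrt{n}/\eps)$ credit to the update in $G$ that causes the kernel-edge deletion and then splitting it equally between the two endpoints; the remaining verification is to check that every edge added by a REFILL call is paid for by its own future removal (i.e., it goes into the same accounting pool), so the potential stays non-negative. Once this is in place, combining with Theorem~\ref{th:3matching:1} gives both the approximation ratio and the runtime, establishing Theorem~\ref{th:n:matching}.
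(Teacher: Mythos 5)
Your proposal is correct and follows essentially the same route as the paper: build a $(0,\sqrt{n})$-kernel once, trigger $\text{REFILL-NOW-FRIENDS}(v)$ only after a tight $v$ has lost $\eps\sqrt{n}$ friends (each loss charged to a distinct edge deletion in $G$ incident to $v$), bound the amortized kernel-maintenance cost by $O(n)\cdot O(1/(\eps\sqrt{n})) = O(\sqrt{n}/\eps)$, bound the amortized number of kernel-edge changes per $G$-update by $O(1/\eps)$ so that the Theorem~\ref{th:maxdeg:2} matching structure also costs $O(\sqrt{n}/\eps)$ amortized, and finish via Theorem~\ref{th:3matching:1}. The only cosmetic difference is that you phrase the REFILL charging as a potential function while the paper uses a direct per-node ratio count $r_v/t_v \leq 1/(\eps\sqrt{n})$ summed as $\sum_v r_v / t = 2\sum_v r_v/\sum_v t_v \leq 2/(\eps\sqrt{n})$; these are equivalent, and the worries you flag at the end (double-charging endpoints, paying for REFILL-inserted edges) are already discharged by that counting since $\kappa(E)\subseteq E$ and every friend loss corresponds to a $G$-deletion incident to the node.
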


\subsection{Maintaining the kernel $\kappa(G)$.}
\label{sec:n:kernel}

We will ensure that the kernel $\kappa(G)$ always satisfies the five properties described below. Clearly, these properties hold immediately after the preprocessing step (see Theorem~\ref{th:preprocess}).

\begin{itemize}
\item (P1) Every node has at most $\sqrt{n}$ friends in $\kappa(G)$. 
\item (P2) Every tight node has at least $(1-\eps)\sqrt{n}$ friends in $\kappa(G)$.
\item (P3) If two slack nodes are connected by an edge in $G$, then they are friends of each other in $\kappa(G)$.
\item (P4) If a  node has exactly $\sqrt{n}$ friends in $\kappa(G)$, then it is tight.
\item (P5) A slack node $v$ becomes tight only when $|\kappa(\N_v)|$ increases from $\sqrt{n}-1$ to $\sqrt{n}$.
\end{itemize}
\noindent  We now describe our algorithm for updating $\kappa(G)$ after an edge insertion/deletion in $G$. It is easy to check that if (P1)-(P5) were true before the update, then they continue to be true after the update. 

The  three properties (P1)-(P3) imply that $\kappa(G)$ is an $(\eps,\sqrt{n})$-kernel. The remaining properties (P4)-(P5) are helpful for reasoning about the correctness of our algorithm and for bounding its amortized update time.

\paragraph{Handling an edge insertion in $G$.}

Suppose that an edge $(u,v)$ is inserted into the graph $G = (V, E)$. To handle this edge insertion, we first  update the lists $\text{{\sc neighbors}}(u)$ and $\text{{\sc neighbors}}(v)$, and then process the edge as follows.

\begin{itemize}
\item {\bf Case 1:} Either  $\text{{\sc \#Friends}}(u) = \sqrt{n}$ or $\text{{\sc \#Friends}}(v) =  \sqrt{n}$.

We do nothing.

\item {\bf Case 2:} Both $\text{{\sc \#Friends}}(u) <  \sqrt{n}$ and $\text{{\sc \#Friends}}(v) <  \sqrt{n}$.

We add the edge $(u,v)$ to the kernel $\kappa(G)$ and make $u, v$ friends of each other. Specifically, we  add the node $u$ to the set $\text{{\sc Friends}}(v)$, the node $v$ to the set  $\text{{\sc Friends}}(u)$, update the pointers $\text{{\sc Pointer}}[u,v]$, $\text{{\sc Pointer}}[v,u]$, and increment each of  the counters $\text{{\sc \#Friends}}(u)$, $\text{{\sc \#Friends}}(v)$ by one unit. After this increment, if $\text{{\sc \#Friends}}(u) =  \sqrt{n}$, then we set $\text{{\sc Type}}(u) \leftarrow \text{tight}$. Similarly, if $\text{{\sc \#Friends}}(v) =  \sqrt{n}$, then we set $\text{{\sc Type}}(v) \leftarrow \text{tight}$.
\end{itemize}

\paragraph{Handling an edge deletion in $G$.}

Suppose that an edge $(u,v)$ gets deleted from the graph $G = (V, E)$. To handle this edge deletion, we first  update the lists $\text{{\sc neighbors}}(u)$ and $\text{{\sc neighbors}}(v)$.

Next, we  check if the edge $(u,v)$ was part of the kernel  $\kappa(G)$, and, if yes, then we delete $(u,v)$ from $\kappa(E)$.  Specifically,  we delete $u$ from $\text{{\sc Friends}}(v)$, $v$ from $\text{{\sc Friends}}(u)$, update the pointers $\text{{\sc Pointer}}[u,v]$, $\text{{\sc Pointer}}[v,u]$, and decrement each of the counters $\text{{\sc \#Friends}}(u)$, $\text{{\sc \#Friends}}(v)$ by one unit.  

We then process the nodes $u$ and $v$ one after another. Below, we describe  only the procedure that runs on the node $u$. The procedure for the node $v$ is exactly the same.

\begin{itemize}
\item {\bf Case 1:} $\text{{\sc Type}}(u) = \text{tight}$. Here, we  check if the number of friends of $u$ has dropped below the prescribed limit  due to the edge deletion, and, accordingly, we  consider two possible subcases.
\begin{itemize}
\item {\bf Case 1a:} $\text{{\sc \#Friends}}(u) < (1- \epsilon) \sqrt{n}$. Here,  we call the subroutine $\text{{\sc REFILL-NOW-FRIENDS}}(u)$  described in Figure~\ref{fig:adjust:n}. This takes $O(n)$ time.
\item {\bf Case 1b:} $\text{{\sc \#Friends}}(u) \geq (1 -\epsilon) \sqrt{n}$. We do nothing in this case.
\end{itemize}
\item {\bf Case 2:} $\text{{\sc Type}}(u) = \text{slack}$. We do nothing in this case. 
\end{itemize}

\begin{figure}[htbp]
\centerline{\framebox{
\begin{minipage}{5.5in}
\begin{tabbing}
{\sc If} the list $\text{{\sc neighbors}}(u)$ is empty, {\sc Then} \\
\ \ \ \ \ \ \ \ \  \=  Set $\text{{\sc Type}}(u) \leftarrow \text{slack}$. \\
\> {\sc  Return}. \\
  Let $x$ be the first node in the list $\text{{\sc neighbors}}(u)$. \\ 
  {\sc While} $\left(\text{{\sc \#Friends}}(u) <  \sqrt{n}\right)$ \\
  \> {\sc If} $x \notin \text{{\sc Friends}}(u)$ and $\text{{\sc \#Friends}}(x) <  \sqrt{n}$, {\sc Then} \\
   \> \ \ \ \ \ \= Insert  $x$ into the list $\text{{\sc Friends}}(u)$, and  $u$ into the list $\text{{\sc Friends}}(x)$, so that  the edge $(u,x)$ \\
   \> \> becomes part of the kernel $\kappa(G) = (V,\kappa(E))$. Update the pointers  $\text{{\sc Pointer}}[u,x]$ and \\
\> \>   $\text{{\sc Pointer}}[x,u]$. Increment each of the counters  $\text{{\sc \#Friends}}(u)$, $\text{{\sc \#Friends}}(v)$   by one unit. \\
\> \>  After this increment, if $\text{{\sc \#Friends}}(x) =  \sqrt{n}$, then set $\text{{\sc Type}}(x) \leftarrow \text{tight}$. \\
 \>  {\sc If} $x$ is the last node in the list $\text{{\sc neighbors}}(u)$, {\sc Then}  \\ 
 \> \> exit the {\sc While} loop. \\
  \>  {\sc Else} \\
\>    \> Let $y$ be the node that succeeds $x$ in the list $\text{{\sc neighbors}}(u)$. \\
  \>  \>  Set $x \leftarrow y$. \\ 
 {\sc If} $\text{{\sc \#Friends}}(u) <  \sqrt{n}$, {\sc Then} \\
 \> Set $\text{{\sc Type}}(u) \leftarrow \text{slack}$.
  \end{tabbing}
\end{minipage}
}}
\caption{\label{fig:adjust:n} $\text{{\sc REFILL-NOW-FRIENDS}}(u)$.} 
\end{figure}

\paragraph{Bounding the amortized update time.} If we ignore the calls to the subroutine $\text{REFILL-NOW-FRIENDS}(.)$, then handling every edge insertion/deletion in $G$ requires $O(1)$ time in the worst case.  We will show that the time taken by all the calls to $\text{REFILL-NOW-FRIENDS}(.)$, amortized over a sequence of updates in $G$, is $O(\sqrt{n}/\eps)$.

Fix any sequence of $t$ edge updates in $G$, and let $t_v$ denote the number of edge updates in this sequence that are incident to a node $v \in V$. Since each edge is incident to two nodes, we have $2t = \sum_{v \in V} t_v$. Let $r_v$ denote the number of times the subroutine $\text{REFILL-NOW-FRIENDS}(v)$ is called during this sequence, so that the total number of calls to the subroutine $\text{REFILL-NOW-FRIENDS}(.)$ is given by $\sum_{v \in V} r_v$. Since each individual call to $\text{REFILL-NOW-FRIENDS}(.)$ takes $O(n)$ time in the worst case, it suffices to show the following guarantee.
\begin{equation}
\label{eq:n:runtime}
\frac{\sum_v r_v}{t} \leq \frac{\alpha}{\eps \sqrt{n}}, \text{ for some constant } \alpha > 0.
\end{equation} 

Consider any node $v \in V$. Our preprocessing step guarantees that in the beginning of the algorithm, the node $v$ is tight only if it has exactly $\sqrt{n}$ friends in $\kappa(G)$. By (P5), during the sequence of updates in $G$, the node $v$ changes its status from slack to tight only if $|\kappa(\N_v)|$ becomes equal to $\sqrt{n}$. On the other hand, the subroutine $\text{REFILL-NOW-FRIENDS}(v)$ is called only if $v$ is tight and $|\kappa(\N_v)|$ drops below $(1-\eps) \sqrt{n}$. Accordingly, a call to $\text{REFILL-NOW-FRIENDS}(v)$ occurs when $v$ has lost at least $\eps \sqrt{n}$ friends in $\kappa(G)$, and each such loss stems from  the deletion of an edge in $\kappa(E) \subseteq E$ that was incident to $v$. Thus, we conclude that $r_v/t_v \leq 1/(\eps \sqrt{n})$. We now derive equation~\ref{eq:n:runtime} as follows.
$$\frac{\sum_v r_v}{t} = 2 \cdot \frac{\sum_v r_v}{\sum_v t_v} \leq 2 \cdot \max_v \left(\frac{r_v}{t_v}\right) = \frac{2}{\eps \sqrt{n}}.$$

\begin{lemma}
\label{lm:n:runtime}
We can maintain an $(\eps,c)$-kernel $\kappa(G)$ of a dynamic graph $G$. Updating the kernel  requires $O(\sqrt{n}/\eps)$ time, amortized over the number of edge insertions/deletions in $G$.
\end{lemma}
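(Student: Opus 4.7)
The plan is to first verify that the algorithm in Section~\ref{sec:n:kernel} maintains properties (P1)--(P5) as invariants throughout the execution; properties (P1), (P2), (P3) are precisely Invariants~\ref{inv:1},~\ref{inv:2},~\ref{inv:3} for $c = \sqrt{n}$, so they certify that $\kappa(G)$ remains an $(\eps,\sqrt{n})$-kernel. The preprocessing step (Theorem~\ref{th:preprocess}) establishes (P1)--(P5) at time zero: tight nodes have exactly $\sqrt{n}$ friends, slack nodes have fewer, and no edge between two slack nodes exists outside $\kappa(E)$. I would then perform a short case analysis on each of the two update types. For insertions, the routine adds $(u,v)$ to $\kappa(E)$ iff both endpoints have fewer than $\sqrt{n}$ friends, immediately ensuring (P1); when a counter hits $\sqrt{n}$ we flip the status to tight, preserving (P4) and (P5). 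For deletions the key case is when a tight node's friend count drops strictly below $(1-\eps)\sqrt{n}$: $\text{REFILL-NOW-FRIENDS}(u)$ scans $\text{{\sc neighbors}}(u)$ and greedily re-adds edges $(u,x)$ with $|\kappa(\N_x)|<\sqrt{n}$ until either $|\kappa(\N_u)|=\sqrt{n}$ or the neighbor list is exhausted, in which case $u$ is demoted to slack. This preserves (P2) and (P4), and one can check that (P3) continues to hold by observing that if two slack nodes share an edge in $G$, both had counts $<\sqrt{n}$ at the time that edge was inserted, so the insertion routine necessarily placed it in $\kappa(E)$, and nothing in the deletion or refill logic removes it thereafter.

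Next I would bound the runtime. Ignoring the subroutine $\text{REFILL-NOW-FRIENDS}(\cdot)$, each update touches only the two endpoints' friend lists, counters, type bits, and the $\text{{\sc Pointer}}$ entries for the affected edge, so the per-update cost is $O(1)$ in the worst case. A single call to $\text{REFILL-NOW-FRIENDS}(v)$ clearly runs in $O(n)$ worst-case time since the {\sc While} loop walks down $\text{{\sc neighbors}}(v)$ at most once, with $O(1)$ work per iteration.

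The main task is therefore to amortize the cost of the calls to $\text{REFILL-NOW-FRIENDS}(\cdot)$. I would adopt the accounting already sketched in the excerpt: fix a sequence of $t$ updates, let $t_v$ be the number of updates incident to $v$, and let $r_v$ be the number of times $\text{REFILL-NOW-FRIENDS}(v)$ is invoked; since each update is incident to exactly two nodes, $\sum_v t_v = 2t$. I would establish the charging inequality
\[
r_v \;\le\; \frac{t_v}{\eps \sqrt{n}}
\]
by arguing that between any two consecutive invocations at $v$, at least $\eps\sqrt{n}$ deletions incident to $v$ must occur. Indeed, each invocation requires $v$ to be tight with $|\kappa(\N_v)|<(1-\eps)\sqrt{n}$; by (P5) $v$ only (re)acquires tight status by having $|\kappa(\N_v)|$ rise to $\sqrt{n}$, so between two firings the friend count must travel from $\geq (1-\eps)\sqrt{n}$ up through $\sqrt{n}$ (after the first call, either it is still tight, or it was demoted and must climb back) and back down below $(1-\eps)\sqrt{n}$, each downward step corresponding to the deletion of a kernel edge incident to $v$ in $G$, i.e., a distinct incident update. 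Summing, $\sum_v r_v \le \sum_v t_v/(\eps\sqrt{n}) = 2t/(\eps\sqrt{n})$, so the total extra work over the sequence is $O(n \cdot t/(\eps\sqrt{n})) = O(t \sqrt{n}/\eps)$, giving the claimed $O(\sqrt{n}/\eps)$ amortized update time.

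The most delicate step will be the "between invocations" argument that underlies $r_v \le t_v/(\eps\sqrt{n})$. Care is needed because $v$ can oscillate between tight and slack, and because insertions can raise $|\kappa(\N_v)|$ without a corresponding deletion; the cleanest way to handle both cases is to observe that (P5) forces every transition from slack back to tight to cross the value $\sqrt{n}$ from below, so the only way to set up a subsequent invocation is to lose at least $\eps\sqrt{n}$ friends afterwards, each loss being chargeable to a distinct deletion of a kernel edge incident to $v$ and thus to a distinct edge update in $G$ incident to $v$.
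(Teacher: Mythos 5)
Your time-bound argument is exactly the paper's: $O(1)$ per update outside the refill calls, $O(n)$ per call to \textsc{Refill-Now-Friends}, and the charge $r_v \le t_v/(\eps\sqrt{n})$ justified via (P5) (every return to tight status passes through a friend count of $\sqrt{n}$, so each invocation is preceded by at least $\eps\sqrt{n}$ distinct kernel-edge deletions incident to $v$), summed using $\sum_v t_v = 2t$. That is the same decomposition and the same charging scheme as the paper, and it is correct.

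The one step that fails as written is your justification of property (P3). From the fact that $u$ and $v$ are slack \emph{now} you infer that both had fewer than $\sqrt{n}$ friends at the moment the edge $(u,v)$ was inserted, and hence that the insertion routine put $(u,v)$ into $\kappa(E)$. That inference is invalid: $u$ may have been tight with exactly $\sqrt{n}$ friends when $(u,v)$ arrived (Case~1 of the insertion handler, which adds nothing to the kernel) and only later have been demoted to slack, so the edge need not enter $\kappa(E)$ at insertion time at all. The invariant does hold, but for a different reason, which goes through the refill logic rather than the insertion routine: a node is demoted to slack only at the end of \textsc{Refill-Now-Friends}, after the {\sc While} loop has scanned its \emph{entire} neighbor list without reaching $\sqrt{n}$ friends (Figure~\ref{fig:adjust:n}); during that scan every non-friend neighbor $x$ with fewer than $\sqrt{n}$ friends is made a friend, and no kernel edge incident to the scanned neighbors is deleted within the call, so at the moment $u$ becomes slack every remaining non-friend neighbor of $u$ is tight. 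Symmetrically, if such a neighbor later becomes slack, its own full scan will add the edge to $u$ (since $u$, being slack, has fewer than $\sqrt{n}$ friends by (P1) and (P4)), unless that edge has meanwhile been deleted from $G$, in which case (P3) is vacuous for the pair. You need this ``demotion only after a full scan'' observation (or something equivalent) to close the induction; with it, the rest of your proof stands.
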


\subsection{Maintaining the matching $M$ in $\kappa(G)$.}
\label{sec:n:matching}

We maintain the matching $M$ in $\kappa(G)$ using Theorem~\ref{th:maxdeg:2}. To bound the amortized update time of this procedure, fix any sequence of $t$ edge updates in $G$.  Ignoring the calls to  $\text{REFILL-NOW-FRIENDS}(.)$, each update in $G$ leads to at most one update in $\kappa(G)$. On the other hand, every single call to $\text{REFILL-NOW-FRIENDS}(.)$ leads to $O(\sqrt{n})$ updates in $\kappa(G)$. By equation~\ref{eq:n:runtime},  a sequence of   $t$ updates in $G$ results in at most $O(t/(\eps \sqrt{n}))$ calls to $\text{REFILL-NOW-FRIENDS}(.)$. Hence, such a sequence can cause at most $O(t/\eps)$ updates in $\kappa(G)$. As the maximum degree of a node in $\kappa(G)$ is $O(c)$, each update in $\kappa(G)$ is handled in $O(\sqrt{n})$ time by the algorithm in Theorem~\ref{th:maxdeg:2}.  This gives us a total running time  of $O((t/\eps) \cdot \sqrt{n})$ for maintaining the matching $M$ during a sequence of $t$ updates in $G$. Thus, the amortized update time becomes $O(\sqrt{n}/\eps)$.

\begin{lemma}
\label{lm:n:runtime:1}
 In a dynamic graph $G$, we can maintain  an $(\eps,c)$-kernel $\kappa(G)$ and a matching $M$ in $\kappa(G)$ such that every augmenting path in $\kappa(G)$ (w.r.t.~$M$) has length at least five. Updating the kernel $\kappa(G)$ and the matching $M$ requires $O(\sqrt{n}/\eps)$ time, amortized over the number of egde insertions/deletions in $G$.
\end{lemma}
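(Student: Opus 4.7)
The plan is to combine Lemma~\ref{lm:n:runtime} with Theorem~\ref{th:maxdeg:2}. Invoke Lemma~\ref{lm:n:runtime} to maintain an $(\eps,\sqrt{n})$-kernel $\kappa(G)$ of $G$, at amortized cost $O(\sqrt{n}/\eps)$ per update in $G$. Run on top of $\kappa(G)$ the bounded-degree data structure of Theorem~\ref{th:maxdeg:2}, which maintains a matching $M$ with no augmenting path of length less than five in $O(\Delta)$ worst-case time per update in the underlying graph; here $\Delta = O(\sqrt{n})$ by property (P1) of Section~\ref{sec:n:kernel}. Since the combined data structure explicitly maintains the set of matched edges, the query-time guarantees claimed at the start of Section~\ref{appendix:matching:1} follow immediately.

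The only non-trivial step is to bound the number of edge updates that are forwarded into $\kappa(G)$ when $G$ undergoes $t$ updates. First, each insertion or deletion in $G$, ignoring calls to $\text{REFILL-NOW-FRIENDS}(\cdot)$, causes at most one edge change in $\kappa(E)$. Second, every invocation of $\text{REFILL-NOW-FRIENDS}(v)$ inserts at most $\sqrt{n}$ new edges into $\kappa(G)$ (by the termination test of the {\sc While} loop in Figure~\ref{fig:adjust:n}). Equation~\ref{eq:n:runtime} from Section~\ref{sec:n:kernel} shows that over $t$ updates in $G$ there are at most $O(t/(\eps\sqrt{n}))$ calls to $\text{REFILL-NOW-FRIENDS}(\cdot)$, so the total number of edge changes in $\kappa(G)$ is at most
\[
O(t) \;+\; O\!\left(\tfrac{t}{\eps\sqrt{n}}\right)\cdot O(\sqrt{n}) \;=\; O(t/\eps).
\]

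By Theorem~\ref{th:maxdeg:2} each such change in $\kappa(G)$ is processed in $O(\sqrt{n})$ worst-case time by the matching data structure, so the total work for maintaining $M$ over the whole sequence is $O((t/\eps)\cdot \sqrt{n}) = O(t\sqrt{n}/\eps)$. Adding the $O(t\sqrt{n}/\eps)$ bound from Lemma~\ref{lm:n:runtime} for maintaining $\kappa(G)$ itself gives total work $O(t\sqrt{n}/\eps)$, i.e.~amortized update time $O(\sqrt{n}/\eps)$, as claimed.

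The potential obstacle — which is already resolved inside Lemma~\ref{lm:n:runtime} and propagates here — is the amortized accounting for $\text{REFILL-NOW-FRIENDS}(v)$: a single call may scan $\Theta(n)$ neighbors of $v$ and push $\Theta(\sqrt{n})$ edge insertions into $\kappa(G)$, but by properties (P4)–(P5) such a call can only fire once $v$ has lost $\Omega(\eps\sqrt{n})$ friends in the kernel, each loss corresponding to a prior deletion in $\kappa(E)\subseteq E$ incident to $v$. Charging both the $\Theta(n)$ scanning cost \emph{and} the $\Theta(\sqrt{n})$ downstream matching-maintenance cost to those $\Omega(\eps\sqrt{n})$ deletions yields the $O(\sqrt{n}/\eps)$ amortized bound for both subsystems simultaneously.
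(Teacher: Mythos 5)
Your proposal is correct and follows essentially the same route as the paper: maintain the kernel via Lemma~\ref{lm:n:runtime}, run the bounded-degree algorithm of Theorem~\ref{th:maxdeg:2} on top, and use equation~\ref{eq:n:runtime} to bound the number of kernel updates (at most one per update in $G$ plus $O(\sqrt{n})$ per $\text{REFILL-NOW-FRIENDS}$ call) by $O(t/\eps)$, each costing $O(\sqrt{n})$ matching-maintenance time. This is exactly the paper's accounting, so no further comparison is needed.
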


\begin{proof}[Proof of Theorem~\ref{th:n:matching}] By Theorem~\ref{th:3matching:1}, the matching maintained by Lemma~\ref{lm:n:runtime:1} is a $(3+\epsilon)$-approximate maximum matching in the input graph $G = (V, E)$.

\end{proof}

\section{$(3+\eps)$-approximate matching in $O(m^{1/3}/\eps^2)$ amortized update time}
\label{sec:3matching}
\label{app:matching:2}

Fix any constant $\eps \in (0,1/3)$. In this section, we present an algorithm for maintaining a $(3+\eps)$-approximate matching $M$ in a graph $G = (V,E)$ undergoing a sequence of edge insertions/deletions. It requires $O(m^{1/3}/\eps^2)$ amortized update time.

\subsection{Overview of our approach.}
\label{sec:3matching:overview}

The main idea is to partition the sequence of updates (edge insertions/deletions) in $G$ into {\em phases}. Each phase lasts for $\eps^2 c^2/2$ consecutive updates in $G$. Let $G_{i,t}$ denote the state of  $G$ just after the $t^{th}$ update in phase $i$, for any $0 \leq t \leq \eps^2 c^2/2$. The initial state of the graph, before it starts changing, is given by $G_{1,0}$. We reach the graph $G_{i,t}$ from $G_{1,0}$ after a sequence of $(i-1) \cdot (\eps^2 c^2/2) + t$ updates in $G$.

For the rest of this section, we  focus on describing our algorithm for any given phase $i \geq 1$. We define $m \leftarrow |E_{i,0}|$ to be the number of edges in the input graph in the beginning of the phase, and  set $c = m^{1/3}$. Since the phase lasts for only $O(\eps^2 m^{2/3})$ updates in $G$, it follows that $|E_{i,t}| = O(m)$ for all $0 \le t \le \epsilon^2 c^2/2$.  During the phase, we   maintain a $(3+\eps)$-approximate matching in $G$  as described below.

Just before phase $i$ begins, we build a $(0,c)$-kernel $\kappa(G)$ on $G = G_{i,0}$ as per Theorem~\ref{th:preprocess}. This takes $O(m)$ time. Next, we compute a $3/2$-approximate maximum matching $M = M_{i,0}$ on $\kappa(G_{i,0})$ by ensuring that every augmenting path in $\kappa(G_{i,0})$ (w.r.t.~$M_{i,0}$) has length at least five. This also takes $O(m)$ time, and concludes the preprocessing step.

After each update in $G$ during phase $i$, we first modify the kernel $\kappa(G)$ using the algorithm  in Section~\ref{sec:3matching:maintain:1}, and subsequently we modify the matching $M$ in $\kappa(G)$ using Theorem~\ref{th:maxdeg:2}. Theorem~\ref{th:phase:new} below guarantees that the graph $\kappa(G)$ remains an $(\eps, c)$-kernel of $G$ throughout the phase. Hence, by Theorem~\ref{th:3matching:1}, the matching $M$ remains a $(3+\eps)$-approximation to the maximum matching in $G$.

The idea behind the algorithm in Section~\ref{sec:3matching:maintain:1} is simple. In the beginning of the phase, Theorem~\ref{th:preprocess} guarantees that every tight node has exactly $c$ friends. During the phase, whenever the number of friends of a tight node $v$ drops below $(1-\eps)c$, it   scans through its first $c$ neighbors in $\N_v$, and keeps making friends out of them until $|\kappa(\N_v)|$ becomes equal to $c$. This procedure is implemented in the subroutine $\text{REFILL}(v)$ (see Figure~\ref{fig:adjust:new}). A potential problem with this approach is that after a while a node $v$ may have more than $(1+\eps)c$ friends due to the repeated invocations of the subroutine $\text{REFILL}(u)$, for neighbors $u \in \N_v$. We show that this event can be ruled out  by ending the current phase after $\eps^2 c^2/2$ updates in $G$.

\paragraph{Bounding the update time.} 
\begin{itemize}
\item {\em Preprocessing.}

The preprocessing in the beginning of the phase takes $O(m)$ time. Since the phase lasts for $\eps^2 c^2/2$ updates, we get an amortized bound of $O(m/(\eps^2 c^2)) = O(m^{1/3}/\eps^2)$.

\item {\em Maintaining the kernel $\kappa(G)$.} 

By Theorem~\ref{th:phase:new} below, the kernel $\kappa(G)$ can be modified after each update in $G$ in $O(c) = O(m^{1/3})$ time. 

\item {\em Maintaining the matching $M$ in $\kappa(G)$.}

By Theorem~\ref{th:phase:new} below, the number of updates made into $\kappa(G)$ during the entire phase is $O(\eps^2 c^2)$. Since the maximum degree of a node in $\kappa(G)$ is $O(c)$, modifying the matching $M$ after each update in $\kappa(G)$ requires $O(c)$ time (see Theorem~\ref{th:maxdeg:2}). Thus, the total time spent during the phase in maintaining the matching $M$ is $O(\eps^2 c^2) \cdot O(c) = O(\eps^2 c^3)$. Since the phase lasts for $\eps^2 c^2/2$ updates in $G$, we get an amortized bound of $O(c) = O(m^{1/3})$.
\end{itemize}
\noindent We summarize the main result of Section~\ref{sec:3matching} in the theorem below. Note that in order to prove Theorem~\ref{th:3matching:main:appendix}, we only need to show how to maintain an $(\epsilon, c)$-kernel  in $O(c)$ update time. We do this in Section~\ref{sec:3matching:maintain:1}.

\begin{theorem}
\label{th:3matching:main:appendix}
In a dynamic graph $G = (V,E)$, we can maintain a $(3+\eps)$-approximate matching $M \subseteq E$ in $O(m^{1/3}/\eps^2)$ amortized update time.
\end{theorem}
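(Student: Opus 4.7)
The plan is to supply the one missing piece from Section~\ref{sec:3matching:overview}, namely a proof of the claim called Theorem~\ref{th:phase:new}: within each phase of length $T=\epsilon^2 c^2/2$, a deterministic algorithm maintains an $(\epsilon,c)$-kernel of $G$, uses $O(c)$ worst-case time per update, and performs at most $O(\epsilon^2 c^2)$ total modifications to $\kappa(E)$ during the phase. Given this, the overview reduces everything else to Theorem~\ref{th:preprocess}, Theorem~\ref{th:maxdeg:2}, and Theorem~\ref{th:3matching:1}: preprocessing costs $O(m)$ per phase, amortized $O(m/T)=O(m^{1/3}/\epsilon^2)$ per update; kernel maintenance costs $O(c)=O(m^{1/3})$ per update; and the bounded-degree algorithm of Theorem~\ref{th:maxdeg:2} spends $O(c)$ per kernel modification, i.e., $O(\epsilon^2 c^3)$ per phase, or $O(c)=O(m^{1/3})$ amortized per update. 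Summing the three contributions gives the stated bound, and approximation guarantee follows from Theorem~\ref{th:3matching:1}.

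The algorithm inside a phase is straightforward. On inserting $(u,v)$ into $G$ I would add it to $\kappa(E)$ iff both endpoints currently have fewer than $c$ friends; on deleting $(u,v)$ I would remove it from $\kappa(E)$ if present and then invoke $\text{REFILL}(x)$ for each endpoint $x$ that is tight with $|\kappa(\N_x)|<(1-\epsilon)c$. The subroutine $\text{REFILL}(v)$ scans the first $c$ entries of $\text{{\sc neighbors}}(v)$ and inserts into $\kappa(E)$ every not-already-friend it encounters, halting as soon as $|\kappa(\N_v)|$ reaches $c$. A node is flipped to tight the instant its friend-count first hits $c$. Each REFILL plainly costs $O(c)$ and at most two are invoked per update, giving the $O(c)$ kernel update time. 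Invariant~\ref{inv:3} is immediate from the insertion rule, since any edge joining two slack (i.e., sub-$c$) nodes is added to $\kappa(E)$. Invariant~\ref{inv:2} holds because REFILL is invoked the moment a tight node crosses below $(1-\epsilon)c$, and provided $v$ still has at least $(1-\epsilon)c$ neighbors in $G$, the first $c$ entries of $\text{{\sc neighbors}}(v)$ contain enough non-friends to push $|\kappa(\N_v)|$ back to $c$; the edge case in which $v$ has lost too many neighbors altogether is handled by flipping $v$ back to slack, analogous to the $\sqrt{n}$ algorithm in Figure~\ref{fig:adjust:n}.

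The main obstacle will be Invariant~\ref{inv:1}, i.e., $|\kappa(\N_u)|\leq (1+\epsilon)c$ for every node $u$ at every instant, since REFILL adds edges to $\kappa(E)$ without inspecting the friend-counts of the far endpoints. I plan to prove it by a two-step amortized counting argument that uses the phase length $T=\epsilon^2 c^2/2$ in an essential way. First, between any two consecutive REFILL$(v)$ invocations the node $v$ must lose at least $\epsilon c$ friends in the kernel, and every friend-loss is charged to a deletion in $G$; summing over all $v$ and using at most $T$ deletions gives a bound of $2T/(\epsilon c)=\epsilon c$ on the total number of REFILL calls in the phase. Second, fix $u$ and let $p_u$ denote the number of friend-gains at $u$ caused by REFILL calls on some $v\neq u$. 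A single REFILL$(v)$ contributes at most one to $p_u$ (a second contribution would require $(u,v)$ to leave $\kappa(E)$ in between, and REFILL never removes edges), so $p_u$ is at most the total number of REFILL calls, i.e., $p_u\leq\epsilon c$. I will then show by induction on the update sequence that $|\kappa(\N_u)|\leq c+p_u$ at all times: the $(0,c)$-kernel gives the base case; an edge-insertion into $\kappa(E)$ incident to $u$ is permitted only when $|\kappa(\N_u)|<c$ and therefore leaves the count at most $c$; a call to REFILL$(u)$ pulls $|\kappa(\N_u)|$ up to at most $c$ without changing $p_u$; and a REFILL-pickup of $u$ increments $|\kappa(\N_u)|$ and $p_u$ together by exactly one. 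Combining the two steps gives $|\kappa(\N_u)|\leq c+\epsilon c=(1+\epsilon)c$, completing the proof of Invariant~\ref{inv:1} and hence of Theorem~\ref{th:phase:new} and Theorem~\ref{th:3matching:main:appendix}.
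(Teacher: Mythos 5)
Your proposal is correct and follows essentially the same route as the paper: partition into phases of length $\epsilon^2 c^2/2$, rebuild a $(0,c)$-kernel at each phase boundary, maintain the kernel with a REFILL subroutine triggered when a tight node drops below $(1-\epsilon)c$ friends, bound the total number of REFILL calls per phase by $\epsilon c$ exactly as in the paper's Lemma~\ref{lm:phase:new}, and feed the kernel to the bounded-degree matching algorithm of Theorem~\ref{th:maxdeg:2}. Your explicit induction $|\kappa(\N_u)|\leq c+p_u$ is a cleaner formalization of the paper's informal argument for Invariant~\ref{inv:1}; one minor slip is the phrase ``at least $(1-\epsilon)c$ neighbors'' (it should be ``at least $c$''), but since, as you note, the algorithm flips $v$ to slack whenever REFILL fails to restore the friend count to $c$, this does not affect the argument.
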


\subsection{Algorithms for maintaining the kernel during a phase.}
\label{sec:3matching:maintain:1}

We present our algorithm for maintaining an $(\eps,c)$-kernel of the graph during a phase. 

\begin{theorem}
\label{th:phase:new}
Suppose that we are given a $(0,c)$-kernel $\kappa(G)$ of $G$  in the beginning of a phase. Then we have an algorithm for modifying $\kappa(G)$ after each update (edge insertion/deletion) in $G$ such that:
\begin{enumerate}
\item Each   update in $G$ is handled in $O(c)$ worst-case  time.
\item At most  $O(\eps^2 c^2)$ updates  are made into $\kappa(G)$ during the entire phase.
\item Throughout the duration of the phase,  $\kappa(G)$ remains an $(\eps,c)$-kernel of $G$.
\end{enumerate}
\end{theorem}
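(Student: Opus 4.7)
The plan is to describe a natural algorithm for maintaining $\kappa(G)$ and then to establish each of the three conclusions in turn. The algorithm uses the data structures of Section~\ref{sec:data}. Upon insertion of an edge $(u,v)$ in $G$, we update the adjacency lists, and if both $u$ and $v$ are currently slack we add $(u,v)$ to $\kappa(G)$, promoting either endpoint to tight the instant its friend count reaches $c$. Upon deletion of an edge $(u,v)$ in $G$, we update the adjacency lists and remove $(u,v)$ from $\kappa(G)$ if it was present; then for each endpoint $w \in \{u,v\}$, if $w$ is tight and $|\kappa(\N_w)| < (1-\eps)c$, we invoke the subroutine $\textsc{Refill}(w)$. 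The subroutine $\textsc{Refill}(w)$ walks down $w$'s neighbor list in $G$ and, for each neighbor $x$ that is not already a friend of $w$, inserts $(w,x)$ into $\kappa(G)$; the walk terminates as soon as $|\kappa(\N_w)| = c$ or once the first $c$ neighbors have been processed, and if at termination $|\kappa(\N_w)| < c$ the status of $w$ is reset to slack.

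For part~(1), the constant-time portion of each update handler runs in $O(1)$ time using the bidirectional pointers of Section~\ref{sec:data}, and each deletion triggers at most two $\textsc{Refill}$ calls, each of which scans at most $c$ neighbors with $O(1)$ work per neighbor; this gives $O(c)$ worst-case time per update. For part~(2), I would partition the kernel updates into (a)~kernel removals, each caused by a $G$-deletion of a kernel edge and hence at most $D \le \eps^2 c^2/2$ in total, (b)~direct kernel additions from slack-slack $G$-insertions, of which there are at most $\eps^2 c^2/2$, and (c)~kernel additions performed inside $\textsc{Refill}$ calls. The key observation is that $\textsc{Refill}(v)$ is invoked precisely at the moment $|\kappa(\N_v)|$ first drops below $(1-\eps)c$, so at invocation $c - |\kappa(\N_v)| \le \eps c + 1$, and since $\textsc{Refill}$ halts upon reaching $c$ friends or exhausting the scanned neighbors, each call inserts at most $\eps c + 1 = O(\eps c)$ edges into $\kappa(G)$. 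To count the number of $\textsc{Refill}$ calls, note that between two consecutive $\textsc{Refill}$s on the same node $v$ (or between the moment $v$ first becomes tight and its first $\textsc{Refill}$), $|\kappa(\N_v)|$ drops from $c$ down to below $(1-\eps)c$, a net drop of more than $\eps c$, which implies at least $\eps c$ kernel-edge removals incident to $v$ during that interval. Since each $G$-deletion removes at most one kernel edge (contributing to the friend-loss counters of its two endpoints), summing over nodes shows that the total number of $\textsc{Refill}$ calls across the phase is at most $2D/(\eps c) \le \eps c$. Combining (a)--(c) yields at most $\eps^2 c^2/2 + \eps^2 c^2/2 + (\eps c)\cdot O(\eps c) = O(\eps^2 c^2)$ kernel updates.

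For part~(3), I would verify the three invariants of an $(\eps,c)$-kernel. Invariant~\ref{inv:2} is immediate from the $\textsc{Refill}$ trigger: the instant a tight node's friend count falls below $(1-\eps)c$, $\textsc{Refill}$ either restores it to $c$ or demotes the node to slack, so every tight node continuously satisfies $|\kappa(\N_v)| \ge (1-\eps)c$. For Invariant~\ref{inv:1}, observe that $|\kappa(\N_v)|$ cannot exceed $c$ through direct $G$-insertions (these stop affecting $v$ once $v$ reaches $c$ friends and becomes tight) or through $\textsc{Refill}(v)$ itself (which halts at $c$); the only way to push $|\kappa(\N_v)|$ above $c$ is via $\textsc{Refill}$ calls on neighbors of $v$, each of which contributes at most one kernel edge incident to $v$, so combined with the $\eps c$ bound on total $\textsc{Refill}$ calls from part~(2) we get $|\kappa(\N_v)| \le c + \eps c = (1+\eps)c$. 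The most delicate part will be Invariant~\ref{inv:3}: I would argue by induction on updates that Theorem~\ref{th:preprocess} establishes the invariant at the start of the phase, that slack-slack $G$-insertions are added to $\kappa(G)$ by design, and that the only dangerous event is a tight-to-slack transition of some node $u$. Such a transition can occur only when $\textsc{Refill}(u)$ processes all of $u$'s neighbors in $G$ without reaching $c$ friends, which forces $|\N_u| < c$ and guarantees that every $G$-neighbor of $u$ has been added to $\kappa(G)$ as a friend during the scan; hence at the instant $u$ becomes slack, every $G$-edge incident to $u$ already lies in $\kappa(G)$, preserving Invariant~\ref{inv:3}.
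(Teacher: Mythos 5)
Your proposal is correct and follows the paper's approach closely: the key counting lemma — that at most $\eps c$ calls to $\textsc{Refill}$ occur in a phase, because each such call is triggered only after a tight node has lost at least $\eps c$ friends, and each $G$-deletion removes at most one kernel edge (charging two endpoints) — is exactly the paper's Lemma~\ref{lm:phase:new}, and your decomposition of kernel updates and verification of Invariants~\ref{inv:1}--\ref{inv:3} mirror the paper's argument in Section~\ref{sec:phase:proof:new}. One small imprecision worth fixing: your insertion handler should, as the paper's does, check whether a slack endpoint already has at least $c$ friends \emph{before} adding the new edge to $\kappa(G)$ (and promote such a node to tight without adding the edge), because a slack node can accumulate $\geq c$ friends through $\textsc{Refill}$ calls on its neighbors; your Invariant~\ref{inv:1} argument (``direct $G$-insertions cannot push $|\kappa(\N_v)|$ above $c$'') implicitly relies on exactly this check.
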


\noindent The rest of Section~\ref{sec:3matching:maintain:1} is organized as follows. In Section~\ref{sec:insertion:new} (resp.~\ref{sec:deletion:new}), we present our   algorithm  for handling an edge insertion  (resp.~edge deletion) in $G$ during the phase. In Section~\ref{sec:phase:proof:new}, we  prove that the algorithm satisfies the properties stated in Theorem~\ref{th:phase:new}.

\subsubsection{Handling an edge insertion in $G$ during the phase.}
\label{sec:insertion:new}

Suppose that an edge $(u,v)$ is inserted into the graph $G = (V, E)$. To handle this edge insertion, we first  update the lists $\text{{\sc neighbors}}(u)$ and $\text{{\sc neighbors}}(v)$, and then process the edge as follows.

\begin{itemize}
\item {\bf Case 1:} Either $\text{{\sc Type}}(u)  = \text{tight}$ or $\text{{\sc Type}}(v)  = \text{tight}$.

 We do nothing and conclude the procedure.
\item {\bf Case 2:} Both $\text{{\sc Type}}(u) = \text{slack}$ and $\text{{\sc Type}}(v) = \text{slack}$.
\begin{itemize}
\item {\bf Case 2a:} Either $\text{{\sc \#Friends}}(u) \geq c$ or $\text{{\sc \#Friends}}(v) \geq c$.

If $\text{{\sc \#Friends}}(u) \geq c$, then we set $\text{{\sc Type}}(u) \leftarrow \text{tight}$. Next, if $\text{{\sc \#Friends}}(v) \geq c$, then we set $\text{{\sc Type}}(v) \leftarrow \text{tight}$.
\item {\bf Case 2b:} Both $\text{{\sc \#Friends}}(u) < c$ and $\text{{\sc \#Friends}}(v) < c$. 

We add the edge $(u,v)$ to the kernel $\kappa(G)$ and make $u, v$ friends of each other. Specifically, we  add the node $u$ to the list $\text{{\sc Friends}}(v)$ and  $v$ to the list $\text{{\sc Friends}}(u)$, update the pointers $\text{{\sc Pointer}}[u,v]$ and $\text{{\sc Pointer}}[v,u]$ accordingly, and increment each of  the counters $\text{{\sc \#Friends}}(u)$, $\text{{\sc \#Friends}}(v)$ by one unit.
\end{itemize}
\end{itemize}

\begin{lemma}
\label{lm:insertion:new}
Suppose that an edge is inserted into the  graph $G$ and we run the procedure described above.
\begin{enumerate}
\item This causes at most one edge insertion into $\kappa(G)$ and at most one edge deletion from   $\kappa(G)$.
\item  The procedure runs in $O(1)$ time in the worst case.
\end{enumerate}
\end{lemma}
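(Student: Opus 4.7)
The proof is a direct case analysis of the procedure in Section~\ref{sec:insertion:new}, and there is no serious obstacle. The plan is to go through the three cases (Case 1, Case 2a, Case 2b) one by one, count the edges added to or removed from $\kappa(E)$ in each, and check the cost of the operations performed using the data structures of Section~\ref{sec:data}.

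For the structural claim (part~1), I would observe that the only line in the entire procedure that touches $\kappa(E)$ is in Case 2b, where the edge $(u,v)$ is added to $\text{{\sc Friends}}(u)$ and $\text{{\sc Friends}}(v)$. In Case 1 the procedure returns immediately after reading $\text{{\sc Type}}(u)$ and $\text{{\sc Type}}(v)$, and in Case 2a only the bits $\text{{\sc Type}}(u)$ and $\text{{\sc Type}}(v)$ are flipped, so in both of these cases no edge is inserted into or removed from $\kappa(G)$. In Case 2b exactly one edge (namely $(u,v)$) is inserted into $\kappa(G)$, and again no edge is deleted. Thus, across all cases, the procedure causes at most one insertion and zero deletions in $\kappa(G)$, which implies the stated bound of ``at most one'' for each.

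For the runtime claim (part~2), I would argue case by case that each step is $O(1)$. Updating $\text{{\sc neighbors}}(u)$ and $\text{{\sc neighbors}}(v)$ upon arrival of the new edge takes $O(1)$ using standard adjacency-list pointers. In Case 1, only two constant-time bit reads are performed. In Case 2a, the procedure performs two counter comparisons against $c$ and at most two bit assignments. In Case 2b, we insert $u$ at the head of the doubly linked list $\text{{\sc Friends}}(v)$ and $v$ at the head of $\text{{\sc Friends}}(u)$, store the resulting list positions in $\text{{\sc Pointer}}[u,v]$ and $\text{{\sc Pointer}}[v,u]$ (accessed in $O(1)$ via the $|V| \times |V|$ pointer matrix from Section~\ref{sec:data}), and increment two counters. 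Each of these is a constant-time operation, so the worst case over all branches is $O(1)$.

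The only mild point to justify is that the pointer matrix maintained in Section~\ref{sec:data} really does allow the bidirectional $\text{{\sc Pointer}}[\cdot,\cdot]$ entries to be read and written in constant time, but this is immediate from the description of the data structures. Hence both parts of the lemma follow.
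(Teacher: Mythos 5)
Your proof is correct and matches what the paper (implicitly) intends: the paper states Lemma~\ref{lm:insertion:new} without an explicit proof, treating it as an immediate consequence of inspecting the procedure, and your case-by-case walkthrough simply makes that inspection explicit. One small observation: your analysis actually establishes a slightly stronger statement than claimed (zero deletions rather than ``at most one''), which is fine since the lemma only asserts an upper bound; the ``at most one deletion'' phrasing likely exists for uniformity with the companion Lemma~\ref{lm:deletion:new}, where deletions genuinely occur.
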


\subsubsection{Handling an edge deletion in $G$ during the phase.}
\label{sec:deletion:new}

Suppose that an edge $(u,v)$ is deleted from the graph $G = (V, E)$. To handle this edge deletion, we first  update the lists $\text{{\sc neighbors}}(u)$ and $\text{{\sc neighbors}}(v)$, and then proceed as follows.

We  check if the edge $(u,v)$ was part of the kernel  $\kappa(G)$, and, if yes, then we delete $(u,v)$ from $\kappa(G)$.  Specifically,  we delete $u$ from $\text{{\sc Friends}}(v)$, $v$ from $\text{{\sc Friends}}(u)$ (using $\text{{\sc Pointer}}[u,v]$ and $\text{{\sc Pointer}}[v,u]$), and decrement each of the counters $\text{{\sc \#Friends}}(u)$, $\text{{\sc \#Friends}}(v)$ by one unit.

We then process the nodes $u$ and $v$ one after another. Below, we describe  only the procedure that runs on the node $u$. The procedure for the node $v$ is exactly the same.

\begin{itemize}
\item {\bf Case 1:} $\text{{\sc Type}}(u) = \text{tight}$. Here, we  check if the number of friends of $u$ has dropped below the prescribed limit  due to the edge deletion, and, accordingly, we  consider two possible sub-cases.
\begin{itemize}
\item {\bf Case 1a:} $\text{{\sc \#Friends}}(u) < (1-\epsilon)c$. Here,  we call the subroutine $\text{{\sc REFILL}}(u)$ as described in Figure~\ref{fig:adjust:new}.
\item {\bf Case 1b:} $\text{{\sc \#Friends}}(u) \geq (1-\epsilon)c$. In this case, we do nothing and conclude the procedure.
\end{itemize}
\item {\bf Case 2:} $\text{{\sc Type}}(u) = \text{slack}$. In this case, we do nothing and conclude the procedure. 
\end{itemize}

\paragraph{Remark.} Note the main difference between the $\text{{\sc REFILL-NOW-FRIENDS}}(u)$ procedure in Section~\ref{sec:matching:sqrtn} (see Figure~\ref{fig:adjust:n}) and the $\text{{\sc REFILL}}(u)$ procedure in this section (see Figure~\ref{fig:adjust:new}). The former procedure (Figure~\ref{fig:adjust:n}) ensures that no node ever gets more than $\sqrt{n}$ friends.  On the other hand, in this section a node $x \notin \text{{\sc Friends}}[u]$ can become a friend of $u$ {\em no matter} how many friends $x$ already has.  Thus, a node $x$ can possibly get more than $c$ friends. A phase, however, lasts for no more than $\eps^2 c^2 / 2$ edge insertion/deletions. Using this fact, we will show that a node $x$ can have at most $(1+\eps) c$-many  friends at any point in time.

\begin{lemma}
\label{lm:deletion:new}
Suppose that an edge is deleted  from the  graph $G$ and we run the procedure described above.
\begin{enumerate}
\item This causes at most $O(\eps c)$ edge insertions into $\kappa(G)$ and at most one edge deletion from   $\kappa(G)$.
\item  The procedure runs in $O(c)$ time in the worst case.
\end{enumerate}
\end{lemma}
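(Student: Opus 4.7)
The plan is to address the two claims of the lemma separately. For the first claim, I observe that the initial bookkeeping before any call to $\text{{\sc REFILL}}$ can remove at most one edge from $\kappa(G)$---namely $(u,v)$ itself, if it belonged to $\kappa(E)$---and inserts nothing. Only the subsequent invocations of $\text{{\sc REFILL}}(u)$ and $\text{{\sc REFILL}}(v)$, at most two in total, can generate kernel insertions, so it suffices to bound the insertions per $\text{{\sc REFILL}}$ call.

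The key step is the invariant: at the moment $\text{{\sc REFILL}}(u)$ is invoked in Case~1a, the counter $\text{{\sc \#Friends}}(u)$ is at least $(1-\epsilon)c - 1$. I would prove it by tracking the history of $\text{{\sc \#Friends}}(u)$. Only a deletion of a kernel edge incident to $u$ can decrement this counter, and each such decrement is immediately followed by the Case~1a check on $u$. The most recent event that set $\text{{\sc \#Friends}}(u)$ to its current value---either $u$'s transition to tight in Case~2a of the insertion routine or the completion of a previous $\text{{\sc REFILL}}(u)$ call---left it at least equal to $c$, and any intervening friendship additions produced by $\text{{\sc REFILL}}$ calls on other nodes can only raise the value. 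Hence, just before the decrement that triggers Case~1a the counter is $\ge (1-\epsilon)c$, and just after it lies in $[(1-\epsilon)c - 1,\, (1-\epsilon)c)$. Since $\text{{\sc REFILL}}(u)$ adds kernel edges only until $\text{{\sc \#Friends}}(u)$ reaches $c$, at most $\epsilon c + 1 = O(\epsilon c)$ edges are inserted per call, yielding the claimed $O(\epsilon c)$ bound across both endpoints.

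For the worst-case runtime, the initial bookkeeping clearly costs $O(1)$ time, using the $\text{{\sc Pointer}}$ arrays to unlink $(u,v)$ from the friends lists and update the counters. A call to $\text{{\sc REFILL}}(u)$ scans only the first $c$ entries of $\text{{\sc neighbors}}(u)$, spending $O(1)$ per scanned neighbor: a constant-time membership test against $\text{{\sc Friends}}(u)$ and, when a new friendship is added, constant-time updates to the linked lists, pointers, counters, and type bits. Hence each $\text{{\sc REFILL}}$ runs in $O(c)$ worst-case time, and with at most two invocations the overall cost is $O(c)$.

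The main obstacle is pinning down the invariant $\text{{\sc \#Friends}}(u) \ge (1-\epsilon)c - 1$ at entry to $\text{{\sc REFILL}}(u)$. The subtlety is that $\text{{\sc \#Friends}}(u)$ is not modified solely by operations directly on $u$: a $\text{{\sc REFILL}}$ launched on any neighbor $w$ of $u$ can add $u$ as a friend of $w$ and thereby increment $\text{{\sc \#Friends}}(u)$. One must argue that these externally-triggered modifications never \emph{decrement} the counter, so that between successive Case~1a checks on $u$ the counter can only grow or stay the same; this is what guarantees that the first time Case~1a fires after the counter last crossed $c$ downward, its value has not yet had a chance to drop below $(1-\epsilon)c - 1$.
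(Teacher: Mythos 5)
Your proof is correct, and it supplies an argument that the paper does not spell out: Lemma~\ref{lm:deletion:new} is stated in the paper without an accompanying proof, and the $O(\eps c)$ bound is then used as a black box in the proof of Theorem~\ref{th:phase:new}. The key step you identify --- that $\text{{\sc \#Friends}}(u) \geq (1-\eps)c-1$ whenever $\text{{\sc REFILL}}(u)$ is invoked --- is exactly what makes the bound $O(\eps c)$ rather than the trivial $O(c)$, and your justification via the history of the counter is sound: the counter decreases only when a kernel edge incident to $u$ is deleted, each such unit decrement is immediately followed by the Case~1a test, and between consecutive such decrements the counter can only grow (via insertions or via $\text{{\sc REFILL}}$ calls on neighbors). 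Two small points worth tightening: (i) you should add the initialization to your list of ``checkpoint'' events, since at the start of a phase the $(0,c)$-kernel of Theorem~\ref{th:preprocess} already gives every tight node exactly $c$ friends; (ii) the phrase ``after the counter last crossed $c$ downward'' is a bit loose --- the counter may fluctuate below $c$ many times, each drop triggering a Case~1a/1b check that either restores it to $c$ or leaves it at $\geq (1-\eps)c$ --- but this is exactly the chain of reasoning you compress into the claim that the counter cannot skip past $(1-\eps)c$ without being caught. Your observation that $\text{{\sc REFILL}}(u)$ scans at most $c$ entries (since each skipped entry is one of the at most $(1-\eps)c-1 < c$ initial friends and each non-skip advances the counter, which stops at $c$) gives the $O(c)$ worst-case runtime cleanly, assuming the $O(1)$ membership test via the $\text{{\sc Pointer}}$ structure, which the paper does provide.
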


\begin{figure}[htbp]
\centerline{\framebox{
\begin{minipage}{5.5in}
\begin{tabbing}
{\sc If} the list $\text{{\sc neighbors}}(u)$ is empty, {\sc Then} \\
\ \ \ \ \ \ \ \  \  \=  Set $\text{{\sc Type}}(u) \leftarrow \text{slack}$. \\
\> {\sc  Return}. \\
  Let $x$ be the first node in the list $\text{{\sc neighbors}}(u)$. \\ 
  {\sc While} $\left(\text{{\sc \#Friends}}(u) < c\right)$ \\
  \> {\sc If} $x \notin \text{{\sc Friends}}(u)$, {\sc Then} \\
   \> \ \ \ \ \ \ \ \ \ \= Add $x$ to $\text{{\sc Friends}}(u)$ and $u$ to   $\text{{\sc Friends}}(x)$,  so that  the edge $(u,x)$  \\
   \> \> becomes  part of  the kernel   $\kappa(G) = (V,\kappa(E))$.  Increment each of the counters  \\
 \> \> $\text{{\sc \#Friends}}(u)$, $\text{{\sc \#Friends}}(v)$ by one unit.  Update the pointers \\
 \> \>  $\text{{\sc Pointer}}[u,x]$  and  $\text{{\sc Pointer}}[x,u]$.\\
 \>  {\sc If} $x$ is the last node in  $\text{{\sc neighbors}}(u)$, {\sc Then}  \\ 
 \> \> exit the {\sc While} loop. \\
  \>  {\sc Else} \\
\>    \> Let $y$ be the node that succeeds $x$ in  the list $\text{{\sc neighbors}}(u)$. \\
  \>  \>  Set $x \leftarrow y$. \\ 
 {\sc If} $\text{{\sc \#Friends}}(u) < c$, {\sc Then} \\
 \> Set $\text{{\sc Type}}(u) \leftarrow \text{slack}$.
  \end{tabbing}
\end{minipage}
}}
\caption{\label{fig:adjust:new} $\text{{\sc REFILL}}(u)$.} 
\end{figure}

\subsubsection{Proof of Theorem~\ref{th:phase:new}.}
\label{sec:phase:proof:new}

The first  part of the theorem immediately follows from Lemmas~\ref{lm:insertion:new},~\ref{lm:deletion:new}. We now focus on the second part.
Note that at most one edge is deleted from $\kappa(G)$ after an update (edge insertion or deletion) in $G$. Since the phase lasts for $\eps^2 c^2/2$ updates in $G$,  at most $\eps^2 c^2/2$ edge deletions occur in $\kappa(G)$ during the  phase. To complete the proof, we will show that  the corresponding number of edge insertions in $\kappa(G)$ is  also $O(\eps^2 c^2)$.

For an edge insertion in $G$, at most one edge is inserted into $\kappa(G)$. For an edge deletion in $G$, there can be $O(\eps c)$ edge insertions in $\kappa(G)$, {\em but only if} the subroutine $\text{REFILL}(.)$ is called. Lemma~\ref{lm:phase:new} shows that at most $\eps c$ calls are made to the subroutine $\text{REFILL}(.)$ during the phase. This implies that at most $O(\eps^2 c^2)$ edge insertions occur in $\kappa(G)$ during the phase, which proves the second part of the theorem.

\begin{lemma}
\label{lm:phase:new}
The subroutine $\text{{\sc REFILL}}(.)$ is called at most $\eps c$ times during the phase.
\end{lemma}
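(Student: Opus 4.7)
My plan is a charging argument. I will show that each call to $\text{{\sc REFILL}}(u)$ can be charged to more than $\eps c$ distinct friend-losses at $u$ (i.e., deletions of kernel edges incident to $u$), and that the total number of friend-losses summed across all nodes during a phase is at most $\eps^2 c^2$. Dividing then yields at most $\eps c$ calls to REFILL in total.

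The key substep is the following claim: for each node $u$ and each time-interval $I$ ending in a call to $\text{{\sc REFILL}}(u)$ and beginning either at the preceding $\text{{\sc REFILL}}(u)$ call or at the start of the phase (whichever is later), the friend-count $|\kappa(\N_u)|$ drops by more than $\eps c$. At the right endpoint of $I$ we have $|\kappa(\N_u)|<(1-\eps)c$, since this is the precondition for firing REFILL. At the left endpoint I claim $|\kappa(\N_u)|\geq c$: right after a $\text{{\sc REFILL}}(u)$ call either $u$ has exactly $c$ friends and remains tight, or $u$'s type is reset to slack; in the latter case, for $\text{{\sc REFILL}}(u)$ to fire again $u$ must first become tight, which by inspection of the insertion procedure in Section~\ref{sec:insertion:new} only happens via Case~2a -- precisely at the moment an incident edge is inserted while $|\kappa(\N_u)|\geq c$. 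For the very first interval, if $u$ was tight at the start of the phase then Theorem~\ref{th:preprocess} forces $|\kappa(\N_u)|=c$, and if $u$ was slack then the same Case~2a argument applies. Since the number of friend-losses in $I$ is at least the net decrease of the friend-count over $I$, $u$ loses more than $\eps c$ friends during $I$.

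Letting $R_u$ denote the number of $\text{{\sc REFILL}}(u)$ calls during the phase and $\Delta^-_u$ the number of friend-losses at $u$, the claim yields $R_u\cdot \eps c\leq \Delta^-_u$. Summing over $u$, the total number of REFILL calls $R$ satisfies $R\cdot\eps c\leq \sum_u \Delta^-_u$. Each friend-loss at a node is one endpoint of a kernel-edge deletion, so $\sum_u\Delta^-_u$ equals exactly twice the number of kernel-edge deletions in the phase; by Lemmas~\ref{lm:insertion:new} and~\ref{lm:deletion:new} these can only be triggered by $G$-edge deletions, of which there are at most $\eps^2 c^2/2$ during the phase, giving $\sum_u\Delta^-_u\leq \eps^2 c^2$ and hence $R\leq \eps c$.

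The main obstacle will be nailing down the key claim -- specifically, ensuring that $\text{{\sc REFILL}}(u)$ never fires without $u$ first having reached $c$ friends since the last firing. I would handle this by an exhaustive case analysis over the insertion and deletion handlers, verifying that $u$'s type transitions to tight only via Case~2a of insertion (which requires $|\kappa(\N_u)|\geq c$) and transitions to slack only at the end of a REFILL call that fails to reach $c$ friends.
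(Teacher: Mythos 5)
Your proof is correct and follows essentially the same route as the paper's: amortize REFILL calls against kernel-edge (``friend'') losses, using that $u$ is promoted to tight only when $|\kappa(\N_u)| \geq c$ while $\text{REFILL}(u)$ fires only when $|\kappa(\N_u)| < (1-\eps)c$, so each call is preceded by $>\eps c$ friend-losses of $u$, and the total number of friend-losses over the phase is at most $2\cdot\eps^2c^2/2 = \eps^2 c^2$ because each of the $\leq \eps^2 c^2/2$ updates deletes at most one kernel edge. One small precision issue: for the first interval of a node $u$ that is \emph{slack} at phase start, the friend count at the literal left endpoint of $I$ need not be $\geq c$ (it could be $0$), so the $>\eps c$ drop should be measured from the moment within $I$ at which $u$ becomes tight (Case~2a, at $\geq c$ friends) to the REFILL call, rather than over all of $I$; you gesture at this with ``the same Case~2a argument applies,'' but the charging should be stated with that intermediate time as the anchor.
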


\begin{proof}
When the phase begins, every tight node has exactly $c$ friends (see Theorem~\ref{th:preprocess}), and  during the phase,  the status of a  node $u$ is changed from slack to tight only when $|\kappa(\N_u)|$  becomes at least $c$. On the other hand,  the subroutine $\text{{\sc REFILL}}(u)$ is called only when $u$ is tight and $|\kappa(\N_u)|$ falls below $(1-\eps)c$. Thus, each call to   $\text{{\sc REFILL}}(.)$  corresponds to a scenario where a tight node has lost at least  $\eps c$  friends. Each  edge deletion in $G$ leads to  at most  two such losses (one for each of the endpoints), whereas an edge insertion in $G$ leads to no such event.  Since the phase  lasts for $\eps^2 c^2/2$ edge insertions/deletions in $G$, a counting argument shows that it can lead to at most $(\eps^2 c^2/2) \cdot 2 / (\eps c)  = \eps c$ calls to $\text{{\sc REFILL}}(.)$.
\end{proof}

It remains to prove the final part of the theorem, which states that the algorithm maintains an $(\eps,c)$-kernel. Specifically, we will show that  throughout the duration of the phase, the subgraph $\kappa(G)$ maintained by the algorithm  satisfies Invariants~\ref{inv:1}--\ref{inv:3}.

\begin{lemma}
\label{lm:insertion:1:new}
Suppose that  $\kappa(G)$ satisfies Invariants~\ref{inv:2},~\ref{inv:3} before an edge update in  $G$.
Then these  invariants  continue to hold even after we modify $\kappa(G)$ as per the procedure in Section~\ref{sec:insertion:new} (resp.~Section~\ref{sec:deletion:new}).
\end{lemma}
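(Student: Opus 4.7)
The plan is to proceed by a case analysis that tracks exactly how each edge update changes friendships and tight/slack labels, and then verify Invariants~\ref{inv:2} and~\ref{inv:3} separately in each case. Throughout, I would rely on the fact that neither procedure ever demotes a node from tight to slack except inside $\text{{\sc REFILL}}(\cdot)$, so most branches are essentially trivial.

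First I would handle edge insertions. In Case~1 (at least one endpoint is tight), no friend lists or statuses are modified, so the invariants are preserved trivially. In Case~2a (both endpoints slack, at least one already has $\ge c$ friends), the only changes are that one or both endpoints are promoted to tight; any node so promoted has at least $c \ge (1-\eps)c$ friends at the moment of promotion, so Invariant~\ref{inv:2} holds, and since the newly inserted edge $(u,v)$ now has at least one tight endpoint, Invariant~\ref{inv:3} imposes no constraint on it. In Case~2b (both endpoints slack, both with strictly fewer than $c$ friends), both endpoints remain slack and the edge $(u,v)$ is explicitly inserted into $\kappa(E)$, which establishes Invariant~\ref{inv:3} for this new edge; Invariant~\ref{inv:2} is unaffected since no node becomes tight.

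Next I would turn to edge deletions, handling each endpoint, say $u$, independently. If $u$ was slack (Case~2 of the deletion procedure), nothing changes at $u$, and the only $G$-neighbor that could have been lost as a friend is $v$ itself, whose edge to $u$ is also gone from $E$, so Invariant~\ref{inv:3} is maintained; Invariant~\ref{inv:2} is vacuous at a slack node. If $u$ was tight and its friend count remains at least $(1-\eps)c$ (Case~1b), Invariant~\ref{inv:2} still holds trivially at $u$. The heart of the argument is Case~1a, where $u$ is tight and $\text{{\sc REFILL}}(u)$ is invoked: the routine scans the list $\text{{\sc neighbors}}(u)$ and promotes every non-friend neighbor to a friend until either $\text{{\sc \#Friends}}(u)$ reaches $c$ or the list is exhausted. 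In the first situation $u$ remains tight with at least $c \ge (1-\eps)c$ friends, so Invariant~\ref{inv:2} holds. In the second situation $u$ is relabeled slack, and I would use the key observation that the {\sc While} loop only stops because it ran off the end of $\text{{\sc neighbors}}(u)$; hence at that moment \emph{every} $G$-neighbor of $u$ already belongs to $\text{{\sc Friends}}(u)$, which in particular ensures that for every slack neighbor $x$ of $u$ the edge $(u,x)$ lies in $\kappa(E)$, giving Invariant~\ref{inv:3} at $u$.

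The main obstacle I expect is precisely this last argument: namely, proving that the side effects of $\text{{\sc REFILL}}(u)$ do not secretly break either invariant at some \emph{other} node. To close this, I would note that $\text{{\sc REFILL}}(u)$ only \emph{adds} edges to $\kappa(E)$ and never removes any, so it can only help Invariant~\ref{inv:3}; and it modifies the tight/slack label only of $u$, while the neighbors $x$ whose friend lists it enlarges keep their old labels. Consequently no tight node loses friends as a result of the refill, so Invariant~\ref{inv:2} cannot be newly violated at any third party. The symmetric argument for the other endpoint $v$ then completes the proof.
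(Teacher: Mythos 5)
Your proof is correct and takes essentially the same approach as the paper, which simply asserts that the lemma ``follows from the descriptions of the procedures'' and leaves the case-by-case inspection implicit; you have spelled out exactly that direct verification, including the key observations that $\text{\sc REFILL}(\cdot)$ only ever adds edges to $\kappa(E)$ and only relabels $u$ itself, and that when $u$ is demoted to slack the {\sc While} loop has exhausted the neighbor list so every remaining $G$-neighbor is already a friend.
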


\begin{proof}
Follows from the descriptions of the procedures in Sections~\ref{sec:insertion:new} and~\ref{sec:deletion:new}.
\end{proof}

Recall that in the beginning of the phase, the graph $\kappa(G)$ is a $(0,c)$-kernel of $G$. Since every $(0,c)$-kernel is also an $(\eps,c)$-kernel, we repeatedly invoke Lemma~\ref{lm:insertion:1:new} after each update in $G$, and conclude that $\kappa(G)$ satisfies Invariants~\ref{inv:2},~\ref{inv:3} throughout the duration of the phase. For the rest of this section, we focus on proving the remaining Invariant~\ref{inv:1}.

Fix any node $v \in V$. When the phase begins, the subgraph $\kappa(G)$ is a $(0,c)$-kernel of $G$, so that we have $|\kappa(\N_v)| \leq c$. During the phase, the node $v$ can get new friends under two possible situations.
\begin{enumerate}
\item An edge incident to $v$ has just been inserted into (resp.~deleted from) the graph $G$, and the procedure in Section~\ref{sec:insertion:new} (resp.~the subroutine $\text{REFILL}(v)$ in Figure~\ref{fig:adjust:new}) is going to be called.
\item The subroutine $\text{REFILL}(u)$ is going to be called for some $u \in \N_v$.
\end{enumerate}

If we are  in situation (1)  and the node $v$ already has at least $c$ friends, then the procedure under consideration will not end up adding any  more node to $\kappa(\N_v)$. Thus, it suffices to show that the node $v$ can get at most $\eps c$ new friends during the phase under  situation (2). Note that each call to $\text{{\sc REFILL}}(u)$, $u \neq v$,  creates at most one new friend for $v$. Accordingly, if we  show that the subroutine $\text{{\sc REFILL}}(.)$ is called at most $\eps c$ times during the entire phase, then this will suffice to conclude the proof of Theorem~\ref{th:phase:new}. But this has already been done in Lemma~\ref{lm:phase:new}.


\section{$(4+\eps)$-approximate matching in $O(m^{1/3}/\eps^2)$ worst-case update time}
\label{sec:4matching}
\label{app:matching:3}

Fix any $\eps \in (0,1/6)$. We present an algorithm for maintaining a $(4+\eps)$-approximate matching $M$ in a dynamic graph $G = (V,E)$ with $O(m^{1/3}/\eps^2)$ worst-case update time.

\subsection{Overview of our approach.}
\label{sec:4matching:overview}

As in Section~\ref{sec:3matching:overview}, we partition the sequence of updates (edge insertions/deletions) in $G$ into {\em phases}. Each phase lasts for $\eps^2 c^2/2$ consecutive updates in $G$.  Let $G_{i,t}$ denote the state of  $G$ just after the $t^{th}$ update in phase $i$, where $0 \leq t \leq \eps^2 c^2/2$. The initial state of the graph, before it starts changing, is given by $G_{1,0}$. Thus, we reach the graph $G_{i,t}$ from $G_{1,0}$ after a sequence of $(i-1) \cdot (\eps^2 c^2/2) + t$ updates in $G$.

For the rest of this section, we  focus on describing our algorithm for any given phase $i \geq 1$. We define $m \leftarrow |E_{i,0}|$ to be the number of edges in the input graph in the beginning of the phase, and  set $c = m^{1/3}$. Since the phase lasts for only $O(\eps^2 m^{2/3})$ updates in $G$, it follows that $|E_{i,t}| = O(m)$ for all $0 \le t \le \epsilon^2 c^2/2$.

\paragraph{Preprocessing.}
If $i = 1$, then we build a $(0,c)$-kernel $\kappa(G_{1,0})$  in the beginning of the phase as per Theorem~\ref{th:preprocess}.  Next, we build a maximal matching $M_{1,0}$ in $\kappa(G_{1,0})$. 
Overall, this takes $O(|V| + |E_{1,0}|)$ time. \\

\paragraph{Algorithm for each phase.}
We first present the high level approach. The main difference between the framework of Section~\ref{sec:3matching:overview} and that of this section is as follows. In Section~\ref{sec:3matching:overview}, the algorithm  rebuilds the kernel and a corresponding matching from scratch at the beginning of each phase, and we amortize
the cost of the rebuild over the
phase. In this section, to achieve a worst-case running time we do this rebuilding ``in the background'' during the phase.
This means that at the beginning of a phase we start with an empty graph $G^*$ and an empty kernel and insert $O(m/(\eps^2 c^2))$ edges into 
$G^*$ and its kernel during each update in $G$. Thus, edge insertions into $G^*$ need to be handled in constant time.
We can handle ``bunch updates'' into $G^*$ to build the kernel efficiently, but we cannot efficiently
update a $3/2$-approximate matching in the kernel as each edge insertion takes $O(c)$ time  (Theorem~\ref{th:maxdeg:2}). However, we can update a {\em maximal} matching, i.e., a
2-approximate matching in constant time per
edge insertions (Theorem~\ref{th:maxdeg:1}). Thus, we  run the 2-approximation algorithm on the kernel instead of the $3/2$-approximation algorithm, leading to 
a $(4+\eps)$ overall approximation. Additionally, we  need to perform the updates of the current phase in $G^*$. For that we use basically 
the same algorithm as in Section~\ref{sec:3matching:maintain:1}. As a result at the end of a phase $G^* = G$ and the kernel that we built for $G^*$ is only an
$(\eps,c)$ kernel of $G$, i.e., we have to start the next phase with a $(\eps,c)$-kernel instead of a $(0,c)$-kernel. This, however, degrades the approximation ratio
only by an additional factor of $\eps$.

To summarize our algorithm for phase $i$  has two  {\em  components}.
\begin{itemize} 
\item {\em Dealing with the current phase.} Just before the start of phase $i$, an $(\eps,c)$-kernel  $\kappa(G_{i,0})$ and a maximal matching $M_{i,0}$ in $\kappa(G_{i,0})$ are handed over to us. Then, as $G$ keeps changing, we keep modifying the subgraph $\kappa(G)$ and the matching $M$. Till the end of phase $i$, we ensure that  $\kappa(G)$ remains a $(2\eps,c)$-kernel of $G$ and that  $M$ remains a maximal matching in $\kappa(G)$. Hence, by Theorem~\ref{th:invariant}, the matching $M$ gives a $(4+ \eps)$-approximation\footnote{To be precise, the theorem shows a $(4+12\eps)$-approximation but running the algorithm with
$\eps' = \eps/12$ results in a $(4 + \eps)$-approximation.} to the maximum matching in $G$ throughout the duration of phase $i$.
\item {\em Preparing for the next phase.} We build a new $(\eps,c)$-kernel (and a maximal matching in it) for $G_{i+1,0}$ in the background. They are handed over to  the algorithm for phase $(i+1)$   at the start of phase $(i+1)$. 
\end{itemize}

\noindent We  elaborate upon each of these components in more details in Section~\ref{sec:kernel} (see Lemmas~\ref{lm:main:update},~\ref{lm:rebuild}). For maintaining a kernel in a dynamic graph, both these components use a procedure that is essentially the one described in Section~\ref{sec:3matching:maintain:1}. 
However, a more fine tuned analysis of the procedure becomes necessary.

Finally, Theorem~\ref{th:invariant} and Lemmas~\ref{lm:main:update},~\ref{lm:rebuild} give us the desired guarantee.

\begin{theorem}
\label{th:worstcase}
We can maintain a $(4+\eps)$-approximate matching in a dynamic graph $G = (V,E)$ in $O(m^{1/3}/\eps^2)$ worst-case update time.
\end{theorem}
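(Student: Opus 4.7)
The plan is to simulate the amortized algorithm of Section~\ref{sec:3matching} via a background rebuilding scheme, spreading the per-phase preprocessing cost across the $\eps^2 c^2/2$ updates of the phase so that no single update pays for a full rebuild. At the start of phase $i$ I will set $m=|E_{i,0}|$ and $c=m^{1/3}$ and maintain \emph{two} parallel copies of the kernel data structures: an \emph{active} pair that answers the approximate-matching queries and a \emph{background} pair that assembles the kernel to be used in phase $i+1$. Because the kernel handed over between phases will be only an $(\eps,c)$-kernel (rather than the $(0,c)$-kernel of Section~\ref{sec:3matching}) and in-phase drift further degrades it to a $(2\eps,c)$-kernel, a maximal matching in the active kernel is only a $(4+12\eps)$-approximation by Theorem~\ref{th:invariant}; rescaling $\eps$ by a constant factor absorbed into the big-$O$ recovers the advertised $(4+\eps)$-approximation.

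The active copy carries $\kappa(G)$ and a maximal matching $M$ in $\kappa(G)$ maintained via Theorem~\ref{th:maxdeg:1}. For each update in $G$ I would run the insertion/deletion routine of Section~\ref{sec:3matching:maintain:1}: by Lemmas~\ref{lm:insertion:new} and~\ref{lm:deletion:new} this takes $O(c)$ worst-case time and triggers at most $O(\eps c)$ insertions and one deletion in $\kappa(G)$. Using a \emph{maximal} matching rather than the $3/2$-approximate matching of Theorem~\ref{th:maxdeg:2} is crucial here: insertions into a maximal matching cost $O(1)$ each, and the single deletion costs $O(\text{max degree})=O(c)$, giving $O(\eps c)\cdot O(1)+O(c)=O(c)$ worst-case time to propagate the kernel changes to $M$. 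The in-phase drift bound follows exactly as in Section~\ref{sec:phase:proof:new}: starting from friend counts $\leq(1+\eps)c$, a node can gain new friends only via Case~2b of an insertion (which never pushes it above $c$) or via a \textsc{Refill} call on a neighbour, and by Lemma~\ref{lm:phase:new} there are at most $\eps c$ \textsc{Refill} calls in the phase, each contributing at most one new friend to a fixed node. So $\kappa(G)$ stays a $(2\eps,c)$-kernel throughout the phase, and $M$ stays a $(4+12\eps)$-approximation of the maximum matching in $G$.

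In parallel, the background copy operates on a separate graph $G^*$ that will satisfy $G^*=G_{i+1,0}$ at phase end. I initialise $G^*=\emptyset$ and $\kappa^*=\emptyset$; then at each update in $G$ I insert $\lceil 2m/(\eps^2 c^2)\rceil=O(m^{1/3}/\eps^2)$ of the edges of $G_{i,0}$ into $G^*$ using the Case~2b procedure of Section~\ref{sec:insertion:new}, which is precisely the linear-time static builder of Theorem~\ref{th:preprocess} sliced into $\eps^2 c^2/2$ batches. I also apply the current update of $G$ to $G^*$ via the Section~\ref{sec:3matching:maintain:1} routines in $O(c)$ worst-case time. A maximal matching $M^*$ in $\kappa^*$ is kept via Theorem~\ref{th:maxdeg:1}: the $O(m^{1/3}/\eps^2)$ background edge insertions each cost $O(1)$ for the matching, and each update-triggered kernel change is handled as in the active copy. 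The background cost per update in $G$ is therefore $O(m^{1/3}/\eps^2)$, and the same drift analysis shows that $\kappa^*$ is an $(\eps,c)$-kernel of $G_{i+1,0}$ when the phase ends.

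At the phase boundary I swap the active and background pointers in $O(1)$ time, handing the freshly built $(\eps,c)$-kernel and maximal matching to phase $i+1$. Summing the contributions gives $O(c)+O(m^{1/3}/\eps^2)=O(m^{1/3}/\eps^2)$ worst-case time per update, with a standard doubling trick restarting phases whenever $|E|$ changes by a constant factor so that $c=\Theta(|E|^{1/3})$ throughout. The main obstacle will be the synchronisation argument: I need to verify that (i) the schedule inserts all of $G_{i,0}$ into $G^*$ before the phase ends, which follows immediately from $\lceil 2m/(\eps^2 c^2)\rceil\cdot(\eps^2 c^2/2)\geq m$, and that (ii) the drift analysis of Section~\ref{sec:phase:proof:new} still applies to the background copy even though its kernel is being built and updated simultaneously rather than starting from a fully formed $(0,c)$-kernel. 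Part (ii) reduces to the observation that Lemma~\ref{lm:phase:new} depends only on the number of edge deletions in $G$ during the phase, while Lemma~\ref{lm:insertion:1:new} preserves Invariants~\ref{inv:2} and~\ref{inv:3} locally at every update.
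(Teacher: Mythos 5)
Your proposal has the right high-level architecture and it matches the paper's: two copies of the kernel and matching data structures, an active one serving queries and a background one rebuilding $\kappa(G_{i+1,0})$ in slices of $O(m/(\eps^2 c^2))$ background insertions per update, a pointer swap at the phase boundary, and the use of Theorem~\ref{th:maxdeg:1} (maximal matching, $O(1)$ per insertion) instead of Theorem~\ref{th:maxdeg:2}, which degrades the approximation from $3+\eps$ to $4+\eps$. Your argument for the background copy (tight nodes only become tight at $c$ friends, and Lemma~\ref{lm:phase:new} really only needs a bound on the \emph{deletions}, which stays at $\eps^2 c^2/2$) is also sound.

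The gap is in the \emph{active} copy. You run the Section~\ref{sec:3matching:maintain:1} routines, where $\text{{\sc REFILL}}(u)$ triggers when $|\kappa(\N_u)|<(1-\eps)c$ and refills to exactly $c$ friends, and you cite Lemma~\ref{lm:phase:new} to bound the number of $\text{{\sc REFILL}}$ calls in the phase by $\eps c$. But the proof of Lemma~\ref{lm:phase:new} rests on the premise that at the start of the phase every tight node has \emph{exactly} $c$ friends (because the phase opened with the freshly built $(0,c)$-kernel of Theorem~\ref{th:preprocess}), so that each $\text{{\sc REFILL}}$ call corresponds to a loss of at least $\eps c$ friends. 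That premise is violated for the active copy: the kernel handed to it at the phase boundary is only an $(\eps,c)$-kernel, in which a tight node may have as few as $(1-\eps)c$ friends. Such a node triggers $\text{{\sc REFILL}}$ after a \emph{single} edge deletion. Since each of the $\eps^2 c^2/2$ deletions in the phase can trigger up to two such ``cheap'' first calls, the number of $\text{{\sc REFILL}}$ calls can be $\Theta(\eps^2 c^2)$ rather than $\eps c$. Each call can contribute one new friend to a fixed node $v$, so $|\kappa(\N_v)|$ could grow by $\Theta(\eps^2 c^2)=\Theta(\eps^2 m^{2/3})$, far beyond the $(1+2\eps)c=O(m^{1/3})$ ceiling needed to stay a $(2\eps,c)$-kernel. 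Invariant~\ref{inv:1} breaks and, with it, the $O(c)$-degree bound your $O(c)$ worst-case update time relies on. The paper avoids this by replacing $\text{{\sc REFILL}}$ with $\text{{\sc REFILL-NOW}}$, which triggers only when the friend count drops below $(1-\lambda-\eps)c$ and refills only to $(1-\eps)c$, and by proving Theorem~\ref{th:epoch} and Lemma~\ref{lm:epoch} under the weaker precondition that tight nodes start with at least $(1-\eps)c$ friends; then each call costs $\lambda c$ losses regardless of the starting kernel, which restores the $\lambda c=\eps c$ bound on the number of calls. You need the same relaxation (or an equivalent amortisation of the cheap first calls) to make the active copy's drift analysis go through.
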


\subsection{Algorithm for maintaining and building the kernel during each phase.}
\label{sec:kernel}

In Subsections~\ref{sec:insertion} and~\ref{sec:deletion}
we  present an algorithm for maintaining a kernel in a fully dynamic graph. 
It is basically a relaxation of the algorithm from Subsection~\ref{sec:3matching:maintain:1}
and is based on the concept of an ``epoch''.

\begin{definition}
\label{def:epoch}
For  $c \geq 1$, $\lambda \in  (0,1)$,  a sequence of updates in  a graph is called a {\em $(\lambda,c)$-epoch} iff it contains at most $\lambda^2 c^2/2$ edge deletions (which can be arbitrarily  interspersed with any number of edge insertions).
\end{definition}

The resulting algorithm is summarized in the theorem below.

\begin{theorem}
\label{th:epoch}
Fix any  $c \geq 1$ and  $\eps, \lambda > 0$ with $(\eps +\lambda) < 1/3$. Consider any $(\lambda,c)$-epoch in the input graph $G = (V,E)$, and suppose that we are given an $(\epsilon,c)$-kernel $\kappa(G)$ in the beginning of the epoch. Then we have an algorithm for updating $\kappa(G)$ after each update in $G$. The algorithm satisfies three properties.
\begin{enumerate}
\item An edge insertion in $G$ is handled in $O(1)$ worst-case  time, and this leads to at most one edge insertion in $\kappa(G)$ and zero edge deletion in $\kappa(G)$. 
\item An edge deletion in $G$ is handled in $O(c)$ worst-case  time, and this leads to at most $O(\lambda c)$ edge insertions in $\kappa(G)$ and at most one edge deletion in $\kappa(G)$. 
\item Throughout the duration of the epoch,  $\kappa(G)$ remains a $(\lambda+\eps,c)$-kernel of $G$.
\end{enumerate}
\end{theorem}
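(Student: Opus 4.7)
The plan is to adapt the algorithm of Subsection~\ref{sec:3matching:maintain:1} almost verbatim, with one parameter change and one slight modification to the REFILL subroutine. Set $T := \lceil (1-\lambda-\eps)c \rceil$ and $H := \lceil (1-\eps)c \rceil$. Handle insertions using the procedure of Subsection~\ref{sec:insertion:new} unchanged, and handle deletions using the procedure of Subsection~\ref{sec:deletion:new} with two modifications: Case 1a fires whenever the friend count of $u$ drops strictly below $T$, and the REFILL subroutine in Figure~\ref{fig:adjust:new} runs its \textsc{While} loop until the friend count of $u$ reaches $H$ rather than $c$, switching the type of $u$ to slack only when the neighbor list of $u$ is exhausted before that happens. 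Parts~1 and~2 of the theorem follow from direct analogues of Lemmas~\ref{lm:insertion:new} and~\ref{lm:deletion:new}: an insertion is handled in $O(1)$ worst-case time and adds at most one edge to $\kappa(E)$ via Case 2b; a deletion is handled in $O(c)$ worst-case time (at most one call to the modified REFILL per endpoint, each scanning at most $c$ neighbors), removes at most one edge from $\kappa(E)$, and each REFILL inserts at most $H-T+1 = \lambda c + O(1) = O(\lambda c)$ edges into $\kappa(E)$.

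The substance of the proof is Part~3. Invariant~\ref{inv:3} is preserved by construction: an edge leaves $\kappa(E)$ only when deleted from $G$, and a freshly inserted slack--slack edge is added to $\kappa(E)$ immediately by Case 2b. Invariant~\ref{inv:2} holds initially because the given $(\eps,c)$-kernel satisfies $|\kappa(\N_v)| \geq (1-\eps)c \geq (1-\lambda-\eps)c$ for every tight $v$, and after any update the modified REFILL either restores the triggering tight node's friend count to $H \geq T \geq (1-\lambda-\eps)c$, or marks that node slack. For Invariant~\ref{inv:1} the key claim is that \emph{the total number of REFILL calls during the entire epoch is at most $\lambda c$}. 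I would prove it by the same amortized argument as in Lemma~\ref{lm:phase:new}: between the start of any ``tight phase'' of a node $u$ (the start of the epoch if $u$ is initially tight with count $\geq (1-\eps)c$; the moment $u$ is promoted to tight in Case 2a of an insertion with count $\geq c$; or the end of a previous REFILL on $u$ with count equal to $H$) and the next REFILL on $u$, the friend count of $u$ must drop by at least $H-T = \lambda c$ via edge deletions from $\kappa(E)$ incident to $u$. Since the epoch contains at most $\lambda^2 c^2/2$ edge deletions in $G$ and each deletion in $\kappa(E)$ decreases exactly two friend counts, the total ``drop budget'' is at most $\lambda^2 c^2$ and the number of REFILL calls is at most $\lambda^2 c^2 / (\lambda c) = \lambda c$.

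To finish Invariant~\ref{inv:1}, fix any node $v$ and any time $t^*$ at which $|\kappa(\N_v)|$ is maximum during the epoch. Let $K_0 \leq (1+\eps)c$ denote the initial value of $|\kappa(\N_v)|$, set $K^* := \max(c,\, K_0)$, and let $t_0$ be the latest time up to $t^*$ at which $|\kappa(\N_v)| \leq K^*$. During the interval $(t_0, t^*]$, the count of $v$ stays strictly above $K^* \geq c$, so neither Case 2b of an insertion (which requires count $<c$) nor a REFILL on $v$ itself (which requires count $<T<c$) can fire, and the count can only grow through REFILL calls on neighbors of $v$, each contributing at most one unit. By the key claim, there are at most $\lambda c$ such calls in total, so $|\kappa(\N_v)|$ at $t^*$ is at most $K^* + \lambda c \leq (1+\eps)c + \lambda c = (1+\lambda+\eps)c$, establishing Invariant~\ref{inv:1}. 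The main technical obstacle will be making the tight-phase bookkeeping of the amortized argument precise: a node can toggle between tight and slack many times during the epoch, and we must verify carefully that each new tight phase begins with friend count at least $H$ (or $\geq c$ in Case 2a), so that the $\lambda c$ charge per REFILL is valid on every such phase.
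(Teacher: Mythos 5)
Your proposal is correct and follows essentially the same approach as the paper: the crux is the same key lemma (at most $\lambda c$ refill calls per epoch, proved by the identical amortization of friend-count drops over edge deletions of $\kappa(E)$), and the same structure for verifying Invariants~\ref{inv:1}--\ref{inv:3}. The only cosmetic difference is that you reuse the insertion procedure of Subsection~\ref{sec:insertion:new} (threshold $c$) whereas the paper's Subsection~\ref{sec:insertion} lowers that threshold to $(1-\eps)c$ to match the refill target; both choices are consistent with the rest of the analysis, and your time-window phrasing of the Invariant~\ref{inv:1} argument is just a more explicit rendering of the paper's.
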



We use the kernel update algorithm in the two components of a phase, which we describe next in detail.
Recall  that  $G_{i,t}$ denotes the state of  $G$ just after the $t^{th}$ update in  phase $i \geq 1$ for all $0 \leq t \leq \epsilon^2 c^2/2$, $m = |E_{i,0}|$, and $c = m^{1/3}$. 

\subsubsection{Dealing with the current phase}
\label{sec:4matching:current}

In the beginning of  phase $i$, we are given an $(\eps,c)$-kernel $\kappa(G_{i,0})$ and a maximal matching $M_{i,0}$ in $\kappa(G_{i,0})$. Since a phase lasts for $\eps^2 c^2/2$ edge updates in $G$, we  treat phase $i$  as an $(\eps,c)$-epoch (see Definition~\ref{def:epoch}). Thus, after each edge insertion/deletion in $G$ during phase $i$, we modify $\kappa(G)$  as per  Theorem~\ref{th:epoch}. This ensures that $\kappa(G)$ remains an $(2\eps,c)$-kernel till the end of phase $i$. After each update in $\kappa(G)$, we modify the matching $M$  using Theorem~\ref{th:maxdeg:1} so as to ensure that $M$ remains a maximal matching in $\kappa(G)$. \\

\noindent We  show that this procedure requires $O(c)$ time in the worst case to handle an  update in $G$.
\begin{itemize}
 \item {\em Case 1: An edge is inserted into the graph $G$}. By Theorem~\ref{th:epoch}, in this case updating the kernel $\kappa(G)$ requires $O(c)$ time, and this results in at most one edge insertion into $\kappa(G)$ and zero edge deletion from $\kappa(G)$. By Theorem~\ref{th:maxdeg:1}, updating the matching $M$ requires $O(1)$ time.
\item {\em Case 2: An edge is deleted from the graph $G$.} By Theorem~\ref{th:epoch}, in this case updating the kernel $\kappa(G)$ requires $O(c)$ time, and this results in $O(\eps c)$ edge insertions into $\kappa(G)$ and at most one edge deletion from $\kappa(G)$. Since the maximum degree of a node in $\kappa(G)$ is $O(c)$, updating the matching $M$ requires $O(\eps c) + O(c) = O(c)$ time (see Theorem~\ref{th:maxdeg:1}).
\end{itemize}

\begin{lemma}
\label{lm:main:update}
Suppose that when phase $i$ begins, we are given an $(\eps,c)$-kernel $\kappa(G)$ of $G$ and a maximal matching $M$ in $\kappa(G)$. After each update in $G$ during phase $i$, we can modify  $\kappa(G)$ and $M$ in $O(c)$ time. Till the end of phase $i$,  $\kappa(G)$ remains an $(2\eps,c)$-kernel of $G$ and $M$ remains a maximal matching in $\kappa(G)$.
\end{lemma}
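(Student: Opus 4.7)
The plan is to view phase $i$ as an $(\eps,c)$-epoch in the sense of Definition~\ref{def:epoch}: since the phase contains at most $\eps^2 c^2/2$ updates in $G$, it contains in particular at most $\eps^2 c^2/2$ edge deletions. Starting from the $(\eps,c)$-kernel that we are handed at the beginning of the phase, I would invoke Theorem~\ref{th:epoch} with parameter $\lambda := \eps$ to update $\kappa(G)$ after every edge insertion or deletion in $G$ during the phase. By the guarantee of that theorem, $\kappa(G)$ remains a $(\lambda+\eps,c) = (2\eps,c)$-kernel throughout the phase, which is the structural part of the lemma.

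To maintain the matching, I would observe that Invariant~\ref{inv:1} applied to the $(2\eps,c)$-kernel guarantees maximum degree $(1+2\eps)c = O(c)$ in $\kappa(G)$, so Theorem~\ref{th:maxdeg:1} is applicable: each edge insertion into $\kappa(G)$ costs $O(1)$ worst-case time and each edge deletion from $\kappa(G)$ costs $O(c)$ worst-case time to keep $M$ maximal. This explains why I pick the ``$2$-approximation'' algorithm (Theorem~\ref{th:maxdeg:1}) and not the ``$3/2$-approximation'' algorithm (Theorem~\ref{th:maxdeg:2}): only the former handles edge insertions in $O(1)$ time, which is essential because Theorem~\ref{th:epoch} can cascade a single edge deletion in $G$ into $\Theta(\eps c)$ edge insertions in $\kappa(G)$.

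It remains to verify the $O(c)$ worst-case bound per update in $G$ by a straightforward case split:
\begin{itemize}
\item If the update is an edge insertion in $G$, Theorem~\ref{th:epoch} uses $O(1)$ time and triggers at most one insertion and no deletion in $\kappa(G)$; Theorem~\ref{th:maxdeg:1} then updates $M$ in $O(1)$ time, giving $O(1) = O(c)$ overall.
\item If the update is an edge deletion in $G$, Theorem~\ref{th:epoch} uses $O(c)$ time and triggers $O(\eps c)$ insertions plus at most one deletion in $\kappa(G)$; Theorem~\ref{th:maxdeg:1} then needs $O(\eps c) \cdot O(1) + O(c) = O(c)$ time to restore maximality of $M$.
\end{itemize}
Adding the two components gives $O(c)$ worst-case time per update, as claimed.

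I do not anticipate a real obstacle here, since the heavy lifting is already packaged in Theorem~\ref{th:epoch} (kernel maintenance during an epoch) and Theorem~\ref{th:maxdeg:1} (maximal matching in a bounded-degree dynamic graph). The only point that deserves a brief sanity check is the parameter constraint $(\eps+\lambda)<1/3$ required by Theorem~\ref{th:epoch}: with $\lambda=\eps$ this reads $2\eps<1/3$, which is consistent with the standing assumption $\eps \in (0,1/6)$ made at the beginning of Section~\ref{sec:4matching}.
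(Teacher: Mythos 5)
Your proposal is correct and follows essentially the same route as the paper: treat phase $i$ as an $(\eps,c)$-epoch, invoke Theorem~\ref{th:epoch} with $\lambda=\eps$ to keep $\kappa(G)$ a $(2\eps,c)$-kernel, and maintain $M$ via Theorem~\ref{th:maxdeg:1}, with the same case split (insertion: $O(1)$ matching work; deletion: $O(\eps c)$ insertions plus one deletion in $\kappa(G)$, hence $O(c)$ total since the kernel has maximum degree $O(c)$). Your explicit check that $\lambda=\eps$ satisfies $(\eps+\lambda)<1/3$ under the standing assumption $\eps\in(0,1/6)$ is a detail the paper leaves implicit.
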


\subsubsection{Preparing   for the next phase.}
\label{sec:4matching:next}


Before phase $i$ begins, in $O(1)$ time  we initialize  a graph $G^* = (V, E^*)$ with $E^* = \emptyset$, a $(0,c)$-kernel $\kappa(G^*) = (V, \kappa(E^*))$, and a maximal matching $M^* = \emptyset$ in $\kappa(G^*)$. To achieve constant time we use an uninitialized array for the $\text{{\sc Type}}$-bit and  assume that every node
for which the bit is not initialized is of type slack.\\

\begin{figure}[htbp]
\centerline{\framebox{
\begin{minipage}{5.5in}
\begin{tabbing}
{\sc If} an edge $(u,v)$ is inserted into $G$, {\sc Then} \\
\ \ \ \ \ \ \ \ \ \= $E^* \leftarrow E^* \cup \{(u,v)\}$. \\
{\sc Else if}  an edge $(u,v)$ is deleted from $G$,  {\sc Then} \\
\> $E^* \leftarrow E^* \setminus \{(u,v)\}$. \\
{\sc For} $i = 1 $ to $2m/(\eps^2 c^2)$ \\
\> {\sc If}  $E \setminus E^* \neq \emptyset$, {\sc Then} \\
\> Pick any edge $e \in E \setminus E^*$. \\
\>  $E^* \leftarrow E^* \cup \{e\}$.
\end{tabbing}
\end{minipage}
}}
\caption{\label{fig:updateg} $\text{{\sc Update-}}G^*$.} 
\end{figure}

\noindent  {\em After each update in $G$ in phase $i$, we call  $\text{{\sc Update-}}G^*$ (Figure~\ref{fig:updateg}). This ensures the following properties.}
\begin{itemize}
\item (P1) Each edge insertion in $G$ leads to at most $O(m/(\eps^2 c^2))$ edge insertions in $G^*$. 
\item (P2) Each edge deletion in $G$ leads to at most one edge deletion in $G^*$ and at most $O(m/(\eps^2 c^2))$ edge insertions in $G^*$.
\item (P3) After each  edge insertion/deletion in $G$, changing the graph $G^*$ requires $O(m/(\eps^2 c^2))$ time.
\item (P4) We always have $E^* \subseteq E$. Furthermore, at the end of phase $i$, we have $E^* = E$. This  holds since $m$ is the size of $E$ in the beginning of phase $i$, and since the phase lasts for $\eps^2 c^2/2$ updates in $G$.
\end{itemize}

\noindent Now, consider the sequence of updates in $G^*$ that take place during phase $i$.  (P1) and (P2) guarantee that this sequence contains at most $\eps^2 c^2/2$ edge deletions. So this sequence can be treated as an $(\eps,c)$-epoch in $G^*$ (see Definition~\ref{def:epoch}).  Thus, after each update in $G^*$, we modify the subgraph $\kappa(G^*)$ using Theorem~\ref{th:epoch}, and subsequently, we modify the maximal matching $M^*$ in $\kappa(G^*)$ using Theorem~\ref{th:maxdeg:1}. \\

\noindent Since  $\kappa(G^*)$ was a $(0,c)$-kernel of $G^*$ in the beginning of the epoch, we get the following guarantee.

\begin{itemize}
\item (P5) Throughout the duration of phase $i$, the graph $\kappa(G^*)$ is an $(\eps,c)$-kernel of $G^* = (V,E^*)$ and $M^*$ is a maximal matching in $\kappa(G^*)$.
\end{itemize}

\noindent  By (P1) and (P2), an update  in $G$ leads  to $O(m/(\eps^2 c^2))$ edge insertions and at most one edge deletion in $G^*$. Each of these edge insertions in $G^*$ further leads to at most one edge insertion in $\kappa(G^*)$, while the potential edge deletion in $G^*$ leads to at most one edge deletion  and $O(\eps c)$ edge insertions in $\kappa(G^*)$ (see Theorem~\ref{th:epoch}). To summarize, an update in $G$ leads to the following updates in $G^*$ and $\kappa(G^*)$.
\begin{itemize}
\item At most one edge deletion and $O(m/(\eps^2 c^2))$ edge insertions in $G^*$.
\item At most one edge deletion and $O(\eps c+m/(\eps^2 c^2))$ edge insertions in $\kappa(G^*)$.
\end{itemize} 
Thus, by Theorems~\ref{th:epoch} and~\ref{th:maxdeg:1}, we get  the following bound on the update time

\begin{itemize}
\item (P6) After each edge update in $G$, the graph $\kappa(G^*)$ and the maximal matching $M^*$ in $\kappa(G^*)$ can be modified in $O(c) + O(\eps c + m/(\eps^2 c^2)) = O(c+m/(\eps^2 c^2))$ time.
\end{itemize}

\noindent Using (P3) and (P6), we derive that after each update in $G$, the  time required to modify the structures $G^*, \kappa(G^*)$ and $M^*$  is $O(c+m/(\eps^2 c^2)) = O(m^{1/3}/\eps^2)$ in the worst case. By (P4) and (P5), the graph $G^*$ becomes identical with $G$ at the end of phase $i$, and at this point $\kappa(G^*)$ becomes an $(\eps,c)$-kernel of $G_{i+1,0}$. 

\begin{lemma}
\label{lm:rebuild}
Starting from the beginning of phase $i$, we can run an algorithm with the following properties.
\begin{itemize}
\item After each update in $G$ during phase $i$, it performs $O(m^{1/3}/\eps^2)$ units of computation.
\item At the end of phase $i$, it returns an $(\eps,c)$-kernel of $G_{i+1,0}$ and a maximal matching in this kernel.
\end{itemize}
\end{lemma}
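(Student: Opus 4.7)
\begin{proofof}{Lemma~\ref{lm:rebuild} (proposal)}
The plan is essentially to package the construction already sketched above (properties (P1)--(P6)) into a clean inductive argument. I would initialize, at the start of phase $i$ in $O(1)$ time, an empty auxiliary graph $G^* = (V,E^*)$ with $E^* = \emptyset$, the empty kernel $\kappa(G^*)$ (using an uninitialized-array trick for the \textsc{Type}-bits so that every node is implicitly slack), and the empty maximal matching $M^* = \emptyset$. After each update in $G$ during phase $i$, I would invoke \textsc{Update-}$G^*$ from Figure~\ref{fig:updateg}: first mirror the update itself in $G^*$, then perform $\lceil 2m/(\eps^2 c^2)\rceil$ edge-insertions from $E \setminus E^*$ into $G^*$.

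The first key claim is that, at the end of phase $i$, we have $G^* = G_{i+1,0}$. For this, I would observe that the phase lasts $\eps^2 c^2/2$ updates, during which we perform a total of $\eps^2 c^2/2 \cdot 2m/(\eps^2 c^2) = m$ catch-up insertions in $G^*$. Since $|E_{i,0}| = m$ and each edge of $E_{i,0}$ either still appears in some $E_{i,t}$ (and is mirrored if it has not been copied yet) or was deleted during the phase (and hence first copied and then removed via the mirrored deletion), these $m$ catch-up insertions together with the mirrored updates suffice to make $E^*$ coincide with the current $E$ by the end of the phase. In particular $E^* \subseteq E$ always, and $E^* = E_{i+1,0}$ at the end.

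The second key claim is that the sequence of updates applied to $G^*$ during phase $i$ is an $(\eps,c)$-epoch, i.e., contains at most $\eps^2 c^2/2$ edge deletions. This is immediate since the only way an edge is removed from $E^*$ is when the corresponding edge is deleted from $G$ (Figure~\ref{fig:updateg}), and the phase contains at most $\eps^2 c^2/2$ updates in total. Since $\kappa(G^*)$ starts as a $(0,c)$-kernel of $G^* = (V,\emptyset)$ (trivially, since $G^*$ has no edges), Theorem~\ref{th:epoch} applies with $\lambda = \eps$ (and the initial kernel being vacuously an $(\eps,c)$-kernel), and throughout phase $i$ the subgraph $\kappa(G^*)$ remains an $(\eps,c)$-kernel of $G^*$. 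At the end of the phase, because $G^* = G_{i+1,0}$, this yields an $(\eps,c)$-kernel of $G_{i+1,0}$. I would simultaneously maintain $M^*$ using Theorem~\ref{th:maxdeg:1}, applied to the bounded-degree graph $\kappa(G^*)$, so that $M^*$ remains a maximal matching in $\kappa(G^*)$ at all times and in particular at the end of phase $i$.

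It remains to verify the $O(m^{1/3}/\eps^2)$ worst-case work per update in $G$. A single update in $G$ triggers (a) at most one deletion mirrored into $G^*$ and (b) $O(m/(\eps^2 c^2))$ catch-up insertions into $G^*$. By Theorem~\ref{th:epoch}, each insertion into $G^*$ costs $O(1)$ and causes at most one insertion into $\kappa(G^*)$ and no deletion, while the one deletion from $G^*$ costs $O(c)$ and causes at most one deletion and $O(\eps c)$ insertions into $\kappa(G^*)$. Each insertion into $\kappa(G^*)$ costs $O(1)$ for updating $M^*$ and each deletion costs $O(c)$ (Theorem~\ref{th:maxdeg:1}). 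Summing, the total per-update work is $O(c) + O(m/(\eps^2 c^2)) + O(\eps c) = O(c + m/(\eps^2 c^2)) = O(m^{1/3}/\eps^2)$ since $c = m^{1/3}$. I do not anticipate a real technical obstacle here: the statement is essentially the conjunction of (P3)--(P6), and the only subtle point is confirming, as in the first paragraph, that the catch-up rate $2m/(\eps^2 c^2)$ suffices to finish copying $E$ into $E^*$ within the $\eps^2 c^2/2$ updates of the phase.
\end{proofof}
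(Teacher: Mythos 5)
Your proposal is essentially the paper's own argument: it packages properties (P1)--(P6) from Section~\ref{sec:4matching:next}, and the catch-up counting, the observation that the update sequence seen by $G^*$ is an $(\eps,c)$-epoch, and the per-update time bound all match what the paper does. There is one small bookkeeping slip in the parenthetical where you invoke Theorem~\ref{th:epoch}: to conclude that $\kappa(G^*)$ remains an $(\eps,c)$-kernel, you must feed the initial $\kappa(G^*)$ into Theorem~\ref{th:epoch} as a $(0,c)$-kernel (the starting-slack parameter of the theorem is $0$), so that with $\lambda=\eps$ the theorem certifies a $(\lambda+0,c)=(\eps,c)$-kernel throughout the phase. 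Reading the initial kernel as \emph{vacuously an $(\eps,c)$-kernel} and then applying the theorem with starting slack $\eps$ and $\lambda=\eps$ would, per the theorem's statement, only give a $(2\eps,c)$-kernel, which is weaker than Lemma~\ref{lm:rebuild} asserts. The fix is trivial --- take the starting-slack parameter to be $0$; the proof of Theorem~\ref{th:epoch} goes through with $\eps=0$, and this is what the paper implicitly does to obtain (P5).
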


\noindent Theorem~\ref{th:worstcase} now follows from Theorem~\ref{th:invariant} and Lemmas~\ref{lm:main:update},~\ref{lm:rebuild}.

\subsubsection{Handling an edge insertion into $G$ during the epoch.}
\label{sec:insertion}

Suppose that an edge $(u,v)$ is inserted into the graph $G = (V, E)$. To handle this edge insertion, we first  update the lists $\text{{\sc neighbors}}(u)$ and $\text{{\sc neighbors}}(v)$, and then process the edge as follows.

\begin{itemize}
\item {\bf Case 1:} Either $\text{{\sc Type}}(u)  = \text{tight}$ or $\text{{\sc Type}}(v)  = \text{tight}$.

 We do nothing and conclude the procedure.
\item {\bf Case 2:} Both $\text{{\sc Type}}(u) = \text{slack}$ and $\text{{\sc Type}}(v) = \text{slack}$.
\begin{itemize}
\item {\bf Case 2a:} Either $\text{{\sc \#Friends}}(u) \geq (1-\eps)c$ or $\text{{\sc \#Friends}}(v) \geq (1-\eps)c$.

If $\text{{\sc \#Friends}}(u) \geq (1-\eps)c$, then we set $\text{{\sc Type}}(u) \leftarrow \text{tight}$. Next, if $\text{{\sc \#Friends}}(v) \geq (1-\eps)c$, then we set $\text{{\sc Type}}(v) \leftarrow \text{tight}$.
\item {\bf Case 2b:} Both $\text{{\sc \#Friends}}(u) < (1-\eps)c$ and $\text{{\sc \#Friends}}(v) < (1-\eps)c$. 

We add the edge $(u,v)$ to the kernel $\kappa(G)$. Specifically, we  add  $u$ to the list $\text{{\sc Friends}}(v)$ and $v$ to the list  $\text{{\sc Friends}}(u)$,  update the pointers $\text{{\sc Pointer}}[u,v]$ and $\text{{\sc Pointer}}[v,u]$ accordingly, and increment each of  the counters $\text{{\sc \#Friends}}(u)$, $\text{{\sc \#Friends}}(v)$ by one unit.
\end{itemize}
\end{itemize}

\begin{lemma}
\label{lm:insertion}
Suppose that an edge is inserted into the  graph $G$ and we run the procedure described above.
\begin{enumerate}
\item This causes at most one edge insertion into $\kappa(G)$ and at most one edge deletion from   $\kappa(G)$.
\item  The procedure runs in $O(1)$ time in the worst-case.
\end{enumerate}
\end{lemma}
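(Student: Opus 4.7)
The plan is to verify both assertions by a straightforward case analysis that follows the three-way branch of the insertion procedure (Case 1, Case 2a, Case 2b).

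First I would handle Case 1, where at least one of $u,v$ has $\text{{\sc Type}} = \text{tight}$. The procedure simply returns, so no changes at all are made to $\kappa(E)$, trivially giving zero insertions and zero deletions in the kernel, and the branch itself runs in $O(1)$ time since only two bits are read. Next, for Case 2a, in which both endpoints are slack but at least one has accumulated $\text{{\sc \#Friends}} \geq (1-\eps)c$, the procedure only promotes the relevant node(s) from slack to tight by flipping $\text{{\sc Type}}$. Again, $\kappa(E)$ is untouched, so the edge-change counts are both zero, and the branch takes $O(1)$ time since it consists of a constant number of counter comparisons and bit flips.

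The only case that actually modifies $\kappa(G)$ is Case 2b, where both endpoints are slack with $\text{{\sc \#Friends}} < (1-\eps)c$. Here exactly one edge, namely $(u,v)$, is added to $\kappa(E)$ and nothing is ever removed, giving exactly one insertion and zero deletions. I would then invoke the data-structure description of Section~\ref{sec:data}: adding $u$ to $\text{{\sc Friends}}(v)$ and $v$ to $\text{{\sc Friends}}(u)$ in the doubly linked lists, updating the two entries $\text{{\sc Pointer}}[u,v]$, $\text{{\sc Pointer}}[v,u]$ of the pointer matrix, and incrementing the two $\text{{\sc \#Friends}}$ counters, can all be carried out in $O(1)$ worst-case time. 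The pre-update maintenance of $\text{{\sc neighbors}}(u)$ and $\text{{\sc neighbors}}(v)$ in the adjacency list for $G$ is also $O(1)$.

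Combining the three cases gives at most one insertion and at most one deletion in $\kappa(G)$ (in fact zero deletions, but the weaker bound is what the lemma states) and an $O(1)$ worst-case running time for the whole procedure, proving both parts of the lemma. There is no real obstacle here; the only thing to be careful about is to point out explicitly that the constant-time pointer update in Case 2b relies on the $\text{{\sc Pointer}}[\cdot,\cdot]$ matrix introduced in Section~\ref{sec:data}, since otherwise insertion into the neighbor list of $v$ would naively require a search.
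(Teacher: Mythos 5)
Your proof is correct and follows essentially the same (implicit) argument the paper relies on; the paper in fact states Lemma~\ref{lm:insertion} without an explicit proof, since it is meant to be read off directly from the three-way case split of the insertion procedure in Section~\ref{sec:insertion}. Your walkthrough — Case~1 and Case~2a touch nothing in $\kappa(E)$, Case~2b inserts exactly the one edge $(u,v)$ and never deletes, and every branch is constant time given the $\text{{\sc \#Friends}}$ counters, the $\text{{\sc Type}}$ bits, and the $\text{{\sc Pointer}}[\cdot,\cdot]$ matrix from Section~\ref{sec:data} — is precisely that argument, and your explicit remark that the ``at most one deletion'' half of part (1) is in fact zero deletions (the procedure only ever adds to $\kappa(E)$) is a correct strengthening of the stated bound.
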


\subsubsection{Handling an edge deletion in $G$ during the epoch.}
\label{sec:deletion}

Suppose that an edge $(u,v)$ gets deleted from the graph $G = (V, E)$. To handle this edge deletion, we first consider the adjacency-list data structure for $G$, and update the lists $\text{{\sc neighbors}}(u)$ and $\text{{\sc neighbors}}(v)$.    

We  check if the edge $(u,v)$ was part of the kernel  $\kappa(G)$, and, if yes, then we delete $(u,v)$ from $\kappa(G)$.  Specifically,   we delete $u$ from $\text{{\sc Friends}}(v)$ and  $v$ from $\text{{\sc Friends}}(u)$ (using the pointers $\text{{\sc Pointer}}[u,v]$, $\text{{\sc Pointer}}[v,u]$), and  decrement each of the counters $\text{{\sc \#Friends}}(u)$, $\text{{\sc \#Friends}}(v)$ by one unit.

We then process the nodes $u$ and $v$ one after another. Below, we describe  only the procedure that runs on the node $u$. The procedure for the node $v$ is exactly the same.

\begin{itemize}
\item {\bf Case 1:} $\text{{\sc Type}}(u) = \text{tight}$. Here, we  check if the number of friends of $u$ has dropped below the prescribed limit  due to the edge deletion, and, accordingly, we  consider two possible sub-cases.
\begin{itemize}
\item {\bf Case 1a:} $\text{{\sc \#Friends}}(u) < (1-\lambda - \epsilon)c$. Here,  we call the subroutine $\text{{\sc REFILL-NOW}}(u)$ (see Figure~\ref{fig:adjust}).
\item {\bf Case 1b:} $\text{{\sc \#Friends}}(u) \geq (1-\lambda -\epsilon)c$. In this case, we do nothing and conclude the procedure.
\end{itemize}
\item {\bf Case 2:} $\text{{\sc Type}}(u) = \text{slack}$. In this case, we do nothing and conclude the procedure. 
\end{itemize}

\paragraph{Remark.} The procedure $\text{{\sc REFILL-NOW}}(u)$ (see Figure~\ref{fig:adjust}) is identical with the procedure $\text{{\sc REFILL}}(u)$ (see Figure~\ref{fig:adjust:new}) in Section~\ref{sec:3matching}, except that it uses a different threshold for the number of friends of a node.

\begin{lemma}
\label{lm:deletion}
Suppose that an edge is deleted  from the  graph $G$ and we run the procedure described above.
\begin{enumerate}
\item This causes at most $O(\lambda c)$ edge insertions into $\kappa(G)$ and at most one edge deletion from   $\kappa(G)$.
\item  The procedure runs in $O(c)$ time in the worst-case.
\end{enumerate}
\end{lemma}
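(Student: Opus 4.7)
My plan is to split the work done by the deletion procedure into three clearly separable pieces: (a) the constant-time bookkeeping that updates the adjacency list of $G$ and, if $(u,v)\in\kappa(E)$, removes $(u,v)$ from $\kappa(G)$ using the bidirectional pointers $\text{{\sc Pointer}}[u,v]$ and $\text{{\sc Pointer}}[v,u]$; and (b), (c) the processing of the endpoints $u$ and $v$, each of which performs at most one call to $\text{{\sc REFILL-NOW}}(\cdot)$. Part (a) contributes at most one deletion to $\kappa(E)$, zero insertions, and takes $O(1)$ time. So everything reduces to analysing a single call to $\text{{\sc REFILL-NOW}}(u)$ (the analysis for $v$ being identical), and summing.

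For the insertion count, the key observation is the gap between the two thresholds. By construction, $\text{{\sc REFILL-NOW}}(u)$ is triggered only at the moment $\text{{\sc \#Friends}}(u)$ first crosses $(1-\lambda-\eps)c$ from above, and since a single edge deletion decreases the friend count by at most $1$, the starting count satisfies $\text{{\sc \#Friends}}(u)\ge (1-\lambda-\eps)c-1$. The procedure is identical to $\text{{\sc REFILL}}(u)$ except that it halts at a target threshold $\tau$ which (by consistency with the stated kernel guarantees) is $(1-\eps)c$. Hence the number of new friends added in this call is at most
\[
\tau-\bigl((1-\lambda-\eps)c-1\bigr)\;\le\;\lambda c+1\;=\;O(\lambda c),
\]
so the call contributes $O(\lambda c)$ insertions to $\kappa(E)$, proving part~1 of the lemma.

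For the runtime, I would argue that each iteration of the $\text{{\sc While}}$ loop inside $\text{{\sc REFILL-NOW}}(u)$ examines one node $x$ in $\text{{\sc neighbors}}(u)$ and, using the friend-membership data structure, performs $O(1)$ work: it either skips $x$ (if $x$ is already a friend) or inserts the edge $(u,x)$ into $\kappa(E)$ and bumps both counters. The loop terminates once $\text{{\sc \#Friends}}(u)$ reaches $\tau$ or the list is exhausted. The subtle point here — and what I would treat as the main obstacle — is that $|\text{{\sc neighbors}}(u)|$ may be much larger than $c$, so the naive worry is that we scan through a long list of existing friends before finding any non-friends. But at the moment $\text{{\sc REFILL-NOW}}(u)$ is invoked, $u$ has fewer than $(1-\lambda-\eps)c<c$ existing friends, so the number of skip-iterations during the entire call is bounded by $c$. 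Combined with the $O(\lambda c)$ insert-iterations just established, the total iteration count is $O(c)$, which proves part~2 after summing the $O(c)$ cost for $u$, the symmetric $O(c)$ cost for $v$, and the $O(1)$ bookkeeping in part (a).

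In short, the proof is a straightforward bookkeeping argument built on one quantitative observation: that since $\text{{\sc REFILL-NOW}}$ triggers just below $(1-\lambda-\eps)c$ and restores the friend count to $(1-\eps)c$, both the number of new friendships it creates and the number of old friendships it has to scan past are each $O(c)$, with the new friendships in particular bounded tightly by $O(\lambda c)$.
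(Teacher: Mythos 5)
Your proof is correct and essentially fills in the argument the paper leaves implicit: Lemma~\ref{lm:deletion} is stated without a written proof, on the understanding that it follows by direct inspection of the deletion procedure and $\text{{\sc REFILL-NOW}}$, and your write-up is precisely that inspection made careful. Your two key quantitative observations---(i) that at the moment $\text{{\sc REFILL-NOW}}(u)$ fires the friend count is at least $(1-\lambda-\eps)c-1$ (since condition~1 of Lemma~\ref{lm:insertion:1} held inductively before the deletion and a single deletion removes at most one friend), so at most $\lambda c+1$ new friends are added before the loop's threshold $(1-\eps)c$ is reached; and (ii) that each skip-iteration of the while loop is charged to a friend present at the start of the call, of which there are fewer than $c$, so the loop runs for $O(c)$ iterations in total---are exactly what is needed to establish both parts, and summing over the two endpoints and the constant-time bookkeeping gives the stated bounds.
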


\begin{figure}[htbp]
\centerline{\framebox{
\begin{minipage}{5.5in}
\begin{tabbing}
{\sc If} the list $\text{{\sc neighbors}}(u)$ is empty, {\sc Then} \\
\ \ \ \ \ \ \ \ \ \  \=  Set $\text{{\sc Type}}(u) \leftarrow \text{slack}$. \\
\> {\sc  Return}. \\
  Let $x$ be the first node in the list $\text{{\sc neighbors}}(u)$. \\ 
  {\sc While} $\left(\text{{\sc \#Friends}}(u) < (1-\eps)c\right)$ \\
  \> {\sc If} $x \notin \text{{\sc Friends}}(u)$, {\sc Then} \\
   \> \ \ \ \ \ \ \ \ \ \  \ \= Add $x$ to $\text{{\sc Friends}}(u)$ and $u$ to $\text{{\sc Friends}}(x)$,   so that  the edge $(u,x)$ becomes part of the \\
 \> \> kernel $\kappa(G) = (V,\kappa(E))$.  Increment each of the counters  $\text{{\sc \#Friends}}(u)$, $\text{{\sc \#Friends}}(x)$ \\
 \> \> by one unit.  Update the pointers  $\text{{\sc Pointer}}[u,x]$, $\text{{\sc Pointer}}[x,u]$. \\
 \>  {\sc If} $x$ is the last node in  $\text{{\sc neighbors}}(u)$, {\sc Then}  \\ 
 \> \> exit the {\sc While} loop. \\
  \>  {\sc Else} \\
\>    \> Let $y$ be the node that succeeds $x$  in the list $\text{{\sc neighbors}}(u)$. \\
  \>  \>  Set $x \leftarrow y$. \\ 
 {\sc If} $\text{{\sc \#Friends}}(u) < (1-\eps)c$, {\sc Then} \\
 \> Set $\text{{\sc Type}}(u) \leftarrow \text{slack}$.
  \end{tabbing}
\end{minipage}
}}
\caption{\label{fig:adjust} $\text{{\sc REFILL-NOW}}(u)$.} 
\end{figure}

\subsubsection{Proof of Theorem~\ref{th:epoch}}
\label{sec:epoch:proof}

The first and the second parts of the theorem follows from Lemma~\ref{lm:insertion} and Lemma~\ref{lm:deletion} respectively. It remains to show that the algorithm maintains a $(\lambda+\eps,c)$-kernel. Specifically, we will show that  throughout the duration of the $(\lambda,c)$-epoch, the subgraph $\kappa(G)$ maintained by the algorithm continues to satisfy Invariants~\ref{inv:1}--\ref{inv:3} (replacing $\eps$ by $\lambda+\eps$ in Invariants~\ref{inv:1},~\ref{inv:2}).

\begin{lemma}
\label{lm:insertion:1}
Suppose that  $\kappa(G)$ satisfies two conditions before an edge insertion (resp.~deletion) in  $G$.
\begin{enumerate}
\item $|\kappa(\N_v)| \geq (1-\lambda-\eps)c$ for all nodes $v \in \kappa_T(V)$.
\item For all nodes $u, v \in \kappa_S(V)$, if $(u,v) \in E$, then either $u \in \kappa(\N_v)$ or $v \in \kappa(\N_u)$.
\end{enumerate}
\noindent Then these two conditions  continue to hold even after we modify $\kappa(G)$ using the procedure in Section~\ref{sec:insertion} (resp.~Section~\ref{sec:deletion}).
\end{lemma}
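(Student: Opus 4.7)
The plan is to verify both conditions by a straightforward case analysis that mirrors the branches of the procedures in Section~\ref{sec:insertion} and Section~\ref{sec:deletion}. The bookkeeping is routine once one tracks (i) which nodes flip type, (ii) which edges are added to or removed from $\kappa(E)$, and (iii) how the counters $|\kappa(\N_v)|$ change.

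For the insertion of an edge $(u,v)$ I would split into the three branches. In Case~1 nothing changes in $\kappa(E)$, in types, or in friend counts, so both conditions carry over trivially; the new edge $(u,v)$ has a non-slack endpoint, so condition~2 does not constrain it. In Case~2a, the guard of the case guarantees that every node $w \in \{u,v\}$ whose type flips to tight has $|\kappa(\N_w)| \ge (1-\epsilon)c \ge (1-\lambda-\epsilon)c$, which is condition~1, and at least one endpoint of $(u,v)$ ends up non-slack, so again $(u,v)$ is outside the scope of condition~2. In Case~2b the edge is inserted into $\kappa(E)$ directly, and since friend counts only increase the tight nodes retain condition~1.

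For the deletion of an edge $(u,v)$ the only nontrivial step is $\text{REFILL-NOW}(\cdot)$. The key observation I would isolate is a loop-exit dichotomy for $\text{REFILL-NOW}(u)$: either the while loop in Figure~\ref{fig:adjust} exits because $|\kappa(\N_u)| \ge (1-\epsilon)c$, in which case $u$ remains tight with at least $(1-\lambda-\epsilon)c$ friends and condition~1 is restored; or the loop walks off the end of $\text{neighbors}(u)$, in which case the final \textsc{If} demotes $u$ to slack \emph{and} every node of $\N_u$ has been turned into a friend of $u$, so for every slack neighbor $w$ of $u$ we have $(u,w) \in \kappa(E)$, giving condition~2 for all pairs involving $u$. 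The rest of the deletion procedure only removes $(u,v)$ from $E$ (and from $\kappa(E)$ if present), which shrinks friend counts of $u, v$ alone and can only remove edges from the scope of condition~2. Since $\text{REFILL-NOW}$ never removes edges from $\kappa(E)$, it cannot damage condition~2 for any pair disjoint from its argument either.

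The main obstacle, and essentially the only place where one cannot simply appeal to control flow, is the loop-exit dichotomy for $\text{REFILL-NOW}$: one must rule out the state ``still tight, yet fewer than $(1-\lambda-\epsilon)c$ friends,'' and must confirm that a demotion to slack is preceded by exhausting $\text{neighbors}(u)$. Both follow from reading Figure~\ref{fig:adjust}: the while loop cannot terminate while $|\kappa(\N_u)|<(1-\epsilon)c$ unless the neighbor list is exhausted, and the final \textsc{If} marks $u$ slack exactly when that has happened. Once this dichotomy is established, plugging it into the case analysis above closes both conditions for every individual edge update.
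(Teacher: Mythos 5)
Your proof is correct in substance and takes the same route as the paper, whose own proof is the one-line remark that the lemma ``follows from the descriptions of the procedures''; you simply spell out the case analysis that the paper leaves implicit. The one small slip is in how you phrase the loop-exit dichotomy for the \textsc{Refill-Now} subroutine: the loop can walk off the end of the neighbor list and nevertheless leave $u$ tight --- namely when processing the last neighbor brings $|\kappa(\N_u)|$ up to $(1-\eps)c$, after which the final \textsc{If} does not fire --- so ``walks off the end'' does not by itself imply demotion. The cleaner split is on the final type of $u$: if $u$ ends the subroutine tight, the final \textsc{If}'s guard forces $|\kappa(\N_u)|\geq(1-\eps)c\geq(1-\lambda-\eps)c$, giving condition~1; if $u$ ends the subroutine slack, the while condition would have kept the loop running unless the neighbor list had been exhausted, so every $x\in\N_u$ is then in $\kappa(\N_u)$, giving condition~2. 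Both branches support exactly the conclusion you draw, so the argument is repaired by a one-line rephrasing and otherwise matches the paper's intent.
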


\begin{proof}
Follows from the descriptions of the procedures in Sections~\ref{sec:insertion} and~\ref{sec:deletion}.
\end{proof}

Recall that before the $(\lambda,c)$-epoch begins, the graph $\kappa(G)$ is an $(\eps,c)$-kernel of $G$. Since every $(\eps,c)$-kernel is also a $(\lambda+\eps,c)$-kernel, we repeatedly invoke Lemma~\ref{lm:insertion:1} after each update in $G$, and conclude that $\kappa(G)$ satisfies Invariants~\ref{inv:2},~\ref{inv:3} (replacing $\eps$ by $\lambda+\eps$) throughout the duration of the epoch. For the rest of this section, we focus on proving the remaining Invariant~\ref{inv:1}. Specifically, we will show that at any point in time during the epoch, we have $|\kappa(\N_v)| \leq (1+\lambda+\eps)c$ for all $v \in V$.

Fix any node $v \in V$. When the $(\lambda,c)$-epoch begins, the subgraph $\kappa(G)$ is an $(\epsilon,c)$-kernel of $G$, so that we have $|\kappa(\N_v)| \leq (1+\eps)c$. During the epoch, the node $v$ can get new friends under two possible situations.
\begin{enumerate}
\item An edge incident to $v$ has just been inserted into (resp.~deleted from) the graph $G$, and the procedure in Section~\ref{sec:insertion} (resp.~the subroutine $\text{REFILL-NOW}(v)$) is going to be called.
\item The subroutine $\text{REFILL-NOW}(u)$ is going to be called for some $u \in \N_v$.
\end{enumerate}

If we are  in situation (1)  and the node $v$ already has more than $(1+\eps)c$ friends, then the procedure under consideration will not end up adding any  more node to $\kappa(\N_v)$. Thus, it suffices to show that the node $v$ can get at most $\lambda c$ new friends during the epoch under  situation (2). Note that each call to $\text{{\sc REFILL-NOW}}(u)$, $u \neq v$,  creates at most one new friend for $v$. Accordingly, if we  show that the subroutine $\text{{\sc REFILL-NOW}}(.)$ is called at most $\lambda c$ times during the entire epoch, then this will suffice to conclude the proof of Theorem~\ref{th:epoch}.

\begin{lemma}
\label{lm:epoch}
The subroutine $\text{{\sc REFILL-NOW}}(.)$ is called at most $\lambda c$ times during a $(\lambda,c)$-epoch.
\end{lemma}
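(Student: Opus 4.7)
\textbf{Proof plan for Lemma~\ref{lm:epoch}.}

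The plan is to use a charging argument whose key accounting unit is a \emph{friend drop at a node}: an event in which some node $u$ has $|\kappa(\N_u)|$ decrease by $1$. I will charge each call to $\text{{\sc REFILL-NOW}}(u)$ against a disjoint block of $\lambda c$ friend drops at the same node $u$, and then bound the global number of friend drops using the fact that we are inside a $(\lambda,c)$-epoch.

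First I would establish the right ``reset'' invariant for a tight node. Call a node $u$ \emph{freshly tight} if $u$ is tight and $|\kappa(\N_u)|\ge (1-\epsilon)c$. I claim that whenever $u$ transitions from not-tight to tight, or whenever a call to $\text{{\sc REFILL-NOW}}(u)$ terminates with $u$ still tight, then $u$ is freshly tight at that moment. For the first case, the only way a node becomes tight during the epoch is Case~2a of the insertion procedure in Section~\ref{sec:insertion}, whose guard is exactly $\text{{\sc \#Friends}}(u)\ge (1-\epsilon)c$. For the second case, inspect Figure~\ref{fig:adjust}: the {\sc While} loop exits only when either $|\kappa(\N_u)|\ge (1-\epsilon)c$ (in which case $u$ remains tight and is freshly tight) or when the neighbor list is exhausted (in which case $\text{{\sc Type}}(u)$ is set to \emph{slack}). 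In addition, at the beginning of the epoch every tight node is freshly tight by Invariant~\ref{inv:2}, since we start from an $(\epsilon,c)$-kernel.

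Next I would set up the charging. Consider any call to $\text{{\sc REFILL-NOW}}(u)$ during the epoch. By Case~1a of the deletion procedure in Section~\ref{sec:deletion}, this call happens only when $u$ is tight and $|\kappa(\N_u)|<(1-\lambda-\epsilon)c$. Trace $u$'s history back to the most recent moment $t_u$ at which $u$ was freshly tight; by the previous paragraph such a moment exists (the start of the epoch if nothing else). At $t_u$ we have $|\kappa(\N_u)|\ge (1-\epsilon)c$, and at the trigger we have $|\kappa(\N_u)|<(1-\lambda-\epsilon)c$, so strictly more than $\lambda c$ friend drops at $u$ must have occurred in the time interval $(t_u,\text{trigger}]$. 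Moreover, once $\text{{\sc REFILL-NOW}}(u)$ executes, $u$ either becomes freshly tight again or becomes slack; either way, the block of friend drops charged to this call is disjoint from the block charged to any subsequent call on $u$, and of course disjoint from blocks charged to calls on other nodes. Hence the number of $\text{{\sc REFILL-NOW}}$-calls is at most $(\text{total friend drops during the epoch})/(\lambda c)$.

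Finally I bound the total number of friend drops. A friend drop at a node $u$ occurs iff some edge of the form $(u,\cdot)$ is removed from $\kappa(E)$; each such removal produces exactly $2$ friend drops in total (one at each endpoint). By Lemmas~\ref{lm:insertion} and~\ref{lm:deletion}, an edge insertion in $G$ causes at most one deletion from $\kappa(G)$ and an edge deletion in $G$ causes at most one deletion from $\kappa(G)$; inspection of the insertion procedure in Section~\ref{sec:insertion} shows that it actually never deletes from $\kappa(G)$, so every deletion from $\kappa(G)$ is triggered by an edge deletion in $G$. Since the epoch contains at most $\lambda^2 c^2/2$ edge deletions in $G$ (Definition~\ref{def:epoch}), it produces at most $\lambda^2 c^2/2$ deletions from $\kappa(G)$, hence at most $\lambda^2 c^2$ friend drops. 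Dividing by $\lambda c$ gives the desired bound of $\lambda c$ calls.

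The only subtle point, and the one I would write most carefully, is the disjointness of the charging blocks: the ``freshly tight'' reset step is what guarantees that friend drops consumed by one $\text{{\sc REFILL-NOW}}(u)$ call cannot be reused by a later call on the same node. Everything else is bookkeeping on top of Lemmas~\ref{lm:insertion}--\ref{lm:deletion}.
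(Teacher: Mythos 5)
Your proposal is correct and follows essentially the same counting argument as the paper: each call to $\text{{\sc REFILL-NOW}}(u)$ is charged to at least $\lambda c$ friend losses at $u$ since it last became (or was) tight with at least $(1-\eps)c$ friends, and the at most $\lambda^2 c^2/2$ edge deletions of the epoch produce at most two friend losses each, giving the bound $\lambda c$. Your explicit ``freshly tight'' reset and disjointness bookkeeping just spell out what the paper's proof leaves implicit.
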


\begin{proof}
When the epoch begins, every tight node has at least $(1-\eps)c$ friends, and  during the epoch,  the status of a  node $u$ is changed from slack to tight only when $|\kappa(\N_u)|$  exceeds $(1-\eps)c$. On the other hand,  the subroutine $\text{{\sc REFILL-NOW}}(u)$ is called only when $u$ is tight and $|\kappa(\N_u)|$ falls below $(1-\lambda-\eps)c$. Thus, each call to   $\text{{\sc REFILL-NOW}}(.)$  corresponds to a scenario where a tight node has lost at least  $\lambda c$  friends. Each  edge deletion in $G$ leads to  at most  two such losses (one for each of the endpoints), whereas an edge insertion in $G$ leads to no such event.  Since a $(\lambda,c)$-epoch  contains at most $\lambda^2 c^2/2$ edge deletions (Definition~\ref{def:epoch}), a counting argument shows that such an epoch can result in at most $(\lambda^2 c^2/2) \cdot 2 / (\lambda c)  = \lambda c$ calls to $\text{{\sc REFILL-NOW}}(.)$.
\end{proof}

\bibliographystyle{plain}
\bibliography{citations}

\begin{thebibliography}{10}

\bibitem{Abboud14}
Amir Abboud and Virginia~Vassilevska Williams.
\newblock Popular conjectures imply strong lower bounds for dynamic problems.
\newblock In {\em 55th IEEE Symposium on Foundations of Computer Science},
  2014.

\bibitem{BaswanaGS11}
Surender Baswana, Manoj Gupta, and Sandeep Sen.
\newblock Fully dynamic maximal matching in ${O}(\log n)$ update time.
\newblock In {\em 52nd IEEE Symposium on Foundations of Computer Science},
  pages 383--392, 2011.

\bibitem{DuanP10}
Ran Duan and Seth Pettie.
\newblock Approximating maximum weight matching in near-linear time.
\newblock In {\em 51st IEEE Symposium on Foundations of Computer Science},
  pages 673--682, 2010.

\bibitem{GuptaP13}
Manoj Gupta and Richard Peng.
\newblock Fully dynamic $(1+\epsilon)$-approximate matchings.
\newblock In {\em 54th IEEE Symposium on Foundations of Computer Science},
  pages 548--557, 2013.

\bibitem{HopcroftK71}
John~E. Hopcroft and Richard~M. Karp.
\newblock An $n^{5/2}$ algorithm for maximum matchings in bipartite graphs.
\newblock {\em SIAM Journal on Computing}, 2(4):225--231, 1973.

\bibitem{IvkovicL93}
Zoran Ivkovic and Errol~L. Lloyd.
\newblock Fully dynamic maintenance of vertex cover.
\newblock In {\em Graph-Theoretic Concepts in Computer Science}, pages 99--111,
  1993.

\bibitem{KhotR08}
Subhash Khot and Oded Regev.
\newblock Vertex cover might be hard to approximate to within $2-\eps$.
\newblock {\em Journal of Computer and System Sciences}, 74(3):335--349, 2008.

\bibitem{LovaszP86}
L\'aszl\'o Lov\'asz and Michael~D. Plummer.
\newblock {\em Matching Theory}.
\newblock Akad\'emiai Kiad\'o, Budapest, 1986.
\newblock Also published as Vol.~121 of the North-Holland Mathematics Studies,
  North Holland Publishing, Amsterdam.

\bibitem{Madry13}
Aleksander Madry.
\newblock Navigating central path with electrical flows: From flows to
  matchings, and back.
\newblock In {\em 54th {IEEE} Symposium on Foundations of Computer Science},
  pages 253--262, 2013.

\bibitem{MicaliV80}
Silvio Micali and Vijay~V. Vazirani.
\newblock An ${O}(\sqrt{|{V}|}\, |{E}|)$ algorithm for finding maximum matching
  in general graphs.
\newblock In {\em 21st IEEE Symposium on Foundations of Computer Science},
  pages 17--27, 1980.

\bibitem{MuchaS04}
Marcin Mucha and Piotr Sankowski.
\newblock Maximum matchings via {G}aussian elimination.
\newblock In {\em 45th IEEE Symposium on Foundations of Computer Science},
  pages 248--255, 2004.

\bibitem{NeimanS13}
Ofer Neiman and Shay Solomon.
\newblock Simple deterministic algorithms for fully dynamic maximal matching.
\newblock In {\em 45th ACM Symposium on Theory of Computing}, pages 745--754,
  2013.

\bibitem{OnakR10}
Krzysztof Onak and Ronitt Rubinfeld.
\newblock Maintaining a large matching and a small vertex cover.
\newblock In {\em 42nd ACM Symposium on Theory of Computing}, pages 457--464,
  2010.

\bibitem{Sankowski07}
Piotr Sankowski.
\newblock Faster dynamic matchings and vertex connectivity.
\newblock In {\em 18th ACM-SIAM Symposium on Discrete Algorithms}, pages
  118--126, 2007.

\end{thebibliography}

\end{document}